\documentclass[a4paper,10.5pt]{article}

\usepackage[top=1in, bottom=1in, left=1in, right=1in]{geometry}
\usepackage{setspace}
\usepackage{graphicx}
\usepackage{amsmath, amsfonts, amssymb, bbm}
\usepackage{booktabs, longtable, multirow, threeparttable}
\usepackage{siunitx}
\usepackage[inline]{enumitem}
\usepackage{xcolor}
\usepackage{float}
\usepackage{pdflscape}
\usepackage{tikz}
\usepackage{pgfplots}
\pgfplotsset{compat=1.17}
\usepackage{subcaption}
\usepackage{hyperref}
\usepackage{url}
\usepackage{natbib}
\usepackage{amsthm}
\usepackage{appendix}
\usepackage{acronym}
\usepackage{courier}
\usepackage{sectsty}
\usepackage{booktabs}
\usepackage{lscape}

\usetikzlibrary{arrows.meta, positioning, shapes.geometric, fit}
\tikzstyle{provider} = [rectangle, rounded corners, draw=black, fill=yellow!30, text centered, minimum height=1.2cm, minimum width=3.5cm]
\tikzstyle{pool} = [circle, draw=black, fill=purple!30, text centered, minimum size=2cm]
\tikzstyle{arrow} = [thick, ->, >=stealth]
\tikzstyle{decision} = [diamond, draw=black, fill=green!30, text centered, minimum height=1.5cm, minimum width=3cm]
\tikzstyle{fee} = [trapezium, trapezium left angle=120, trapezium right angle=60, draw=black, fill=teal!30, text centered, minimum width=2cm, minimum height=1cm]
\tikzstyle{action} = [rectangle, draw=black, fill=gray!20, text centered, minimum height=1.2cm, minimum width=4cm]

\AtBeginEnvironment{tabular}{\scriptsize}
\AtBeginEnvironment{longtable}{\scriptsize}
\AtBeginEnvironment{longtable}{\singlespacing}
\usepackage[labelfont=bf,font={scriptsize},justification=justified,singlelinecheck=false]{caption}

\sectionfont{\normalsize\bfseries}
\subsectionfont{\normalsize\itshape}
\subsubsectionfont{\normalsize\itshape}
\paragraphfont{\normalsize\itshape}

\hypersetup{
    hidelinks,
    colorlinks=true,
    citecolor=blue,
    linkcolor=blue,
    urlcolor=blue,
    pdfmenubar=false,
    pdftoolbar=false
}

\renewcommand{\thefootnote}{\arabic{footnote}}

\newtheorem{definition}{Definition}
\newtheorem{proposition}{Proposition}
\newtheorem{assumption}{Assumption}
\newtheorem{corollary}{Corollary}
\newtheorem{remark}{Remark}
\newtheorem{procedure}{Procedure}

\usepackage{tikz}
\usetikzlibrary{arrows.meta,positioning,fit,calc,shapes.geometric,decorations.pathreplacing}

\begin{document}

\title{Vault as a credit instrument}
\author{
  \scriptsize Anastasiia Zbandut$^{\ast}$\\ 
  \scriptsize\texttt{anastasiia@blockworks.co}
  \and
  \scriptsize Carolina Goldstein\\ 
  \scriptsize\texttt{carolina@blockworks.co}
}

\date{}

\maketitle

\begingroup
\renewcommand{\thefootnote}{\fnsymbol{footnote}}
\setcounter{footnote}{1}
\footnotetext{Corresponding author.}
\endgroup

\setcounter{footnote}{0}

\begin{abstract}
We derive five tractable credit risk metrics for DeFi lending vault depositors, grounded in a formal three level decomposition of vault risk into mechanical loss channels (Level 1), governance quality (Level 2) and smart contract code integrity (Level 3). For Level 1, we show that six structural features of onchain execution (oracle execution divergence, endogenous recovery, full information run dynamics, timelock constrained governance, oracle manipulation and congestion driven liquidation failure) break canonical TradFi analogies and generate depositor loss channels absent from standard credit frameworks. Vault credit risk metrics translate these channels into measurable risk components which are aggregated into a vault credit score. The empirical contribution is an implementable estimation architecture for credit risk metrics, including required onchain data, identification strategies for core parameters, partial identification bounds and a coherent stress scenario methodology. The results have direct implications for vault risk management and for minimum transparency standards necessary for depositor risk assessment.
\\ \\
\noindent\textbf{Keywords:} DeFi lending; credit risk; vault; structured finance; 

\end{abstract}

\section{Introduction}
Decentralized finance (DeFi) lending protocols intermediate large volumes of collateralized credit, yet their depositor protection problem remains largely unaddressed by a formal credit risk measurement framework. This paper takes the DeFi lending vault as the unit of analysis and treats it as a credit instrument whose loss distribution is induced by collateral coverage, liquidation and recovery mechanics, and liquidity constraints. We show that DeFi vault depositor risk decomposes into three structurally independent levels that require distinct measurement methodologies. Level 1 comprises the mechanical loss channels arising from collateral execution, recovery endogeneity, funding liquidity fragility, oracle integrity and congestion driven liquidation failure. Standard structured credit frameworks miss these channels because they treat collateral valuation, recovery and settlement as exogenous and friction free. Level 2 comprises governance risk which includes the quality of the parameter setting process and its response latency under stress. Level 3 captures infrastructure and code integrity, assessing whether the vault’s smart contract dependency graph executes according to its intended specification. We derive the condition under which each level is the binding depositor risk concern, show that Level 3 failure probability is superadditive in protocol dependency depth, and, for Level 1, formalize six breakdown conditions for canonical TradFi analogies, derive five tractable metrics (V1-V5) and aggregate them into a vault credit score (VCS).

DeFi lending protocols perform a recognizable banking function through smart contracts deployed on public blockchains. They transform depositor funds into collateralized credit exposure and promise withdrawal rights under rule bound constraints. However, the analytical framework for measuring depositor credit risk in this setting has not evolved at the same pace as the scale of activity. Existing practice substitutes qualitative risk labels, ad hoc scorecards or direct imports of traditional credit metrics. The gap is not only operational, but also methodological. It reflects the absence of a measurement theory grounded in the structural features of a novel execution environment. The one where valuation, enforcement and settlement each operate under frictions that are absent in the institutional structured finance context from which existing methodology was drawn.

The Level 1 analysis develops from a specific structural analogy. DeFi lending vaults share a recognizable skeleton with traditional secured credit instruments since they are overcollateralized, impose threshold based liquidation enforcement and expose depositors to funding liquidity constraints through utilization and withdrawal limits. These similarities motivate the import of structured credit methodology. Six DeFi specific design features break the analogy, however, each breakdown generates a distinct depositor loss channel absent from standard TradFi credit measurement. Valuation uses public oracle networks compared to discretionary pricing agents, introducing manipulation and latency risks. Liquidation executes onchain through automated mechanisms, making realized recovery endogenous to liquidation mass and market depth. Execution requires blockspace whose cost is endogenous to stress through gas pricing and MEV competition. Parameter changes are timelock constrained, so operators cannot intervene at the point of crisis. Additionally, there is no external liquidity backstop meaning withdrawals are mechanically blocked at full utilization and run dynamics unfold under near common information.

Each of the three levels requires distinct methodology. This paper develops Level 1 in full with six formal propositions, five metrics,
and an estimation architecture. Level 2 governance risk introduces additional state variables and behavioral identification challenges that require separate treatment and are developed in a companion paper. Level 3 infrastructure risk is formally characterized here (through the superadditivity proposition and the dominance condition) but its estimation architecture, which requires smart contract security methodology and exploit probability calibration, constitutes a distinct research agenda identified at the close of this paper.

The paper makes four contributions. First, it formalizes a three level decomposition of vault credit risk into mechanical loss channels (Level 1), governance quality (Level 2) and infrastructure and code integrity (Level 3). Thereby, establishing the structural scope boundary between levels and deriving the condition under which each level is the binding depositor risk concern (Proposition~\ref{prop:L3_dominance},  Figure~\ref{fig:three_level}). Second, it provides the first formal, closed form characterization of six precise TradFi to DeFi breakdown conditions via propositions with stated assumptions, converting an informal practitioner recognition of analogy limits into auditable credit measurement primitives. Third, we derive five Level 1 metrics (V1-V5) that quantify the mechanical depositor loss channels implied by these failure modes and aggregate them into a vault credit score (VCS) with explicit weakest-link and additive operators (Figure~\ref{fig:vcs_architecture}). Fourth, we provide an empirically implementable estimation architecture with a data stack sufficient to compute V1-V5, estimation procedures for key parameters, partial identification bounds, a stress scenario methodology and a validation strategy for V1 and V2 (Figures~\ref{fig:estimation_pipeline} and~\ref{fig:data_deps}).

The Level 1 analysis integrates four strands of prior work. The structured credit and CLO methodology literature provides the conceptual foundation for coverage tests, tranche protection logic, and collateral manager constraints \citep{Merton1974, Vasicek1987, McNeilFreyEmbrechts2005, MoodysCLO, SPCLO}. However, it does not provide a formal mapping of those concepts into an onchain execution environment where valuation, enforcement and settlement all operate under distinct frictions. The DeFi liquidation mechanics literature documents liquidation cascades, onchain depth depletion and stressed execution outcomes \citep{GudgeonWernerPerezKnottenbelt2020, PerezWernerXuLivshits2021, QinZhouGamitoJovanovicGervais2021, BertomeuMartinSall2024} but it typically does not embed these mechanisms into an instrument level depositor loss framework with explicit coverage and shortfall functions. The oracle manipulation and TWAP security literature formalizes manipulation vectors and defenses for onchain price feeds \citep{AngerisChitra2020, AngerisKaoChiangNoyesChitra2020, QinZhouLivshits2021}. We add the translation of oracle error into depositor credit losses through structural solvency and recovery mechanics to this strand. Finally, the run dynamics and stablecoin fragility literature analyzes onchain liquidity crises and depositor behavior in crypto intermediaries \citep{DiamondDybvig1983, MorrisShin1998, GoldsteinPauzner2005, BrunnermeierPedersen2009, GortonMetrick2012, KlagesMundtHarzGudgeon2020, AppelGrennan2025}. We apply these findings to provide a unified credit risk architecture connecting funding liquidity constraints to liquidation and execution frictions in a vault balance sheet. A fifth strand, concerning MEV and blockspace competition as endogenous execution costs, is addressed by \citet{DaianGoldfeder2020} and is integrated here through Proposition~\ref{prop:network_congestion} and V5 metric. By synthesizing these strands under a single depositor loss function, the paper provides a unified measurement framework.

The remainder of the paper proceeds as follows. Section~\ref{sec:lit} reviews the related literature, while Section~\ref{sec:theory} defines the DeFi lending environment, formalizes the depositor loss function and derives six TradFi to DeFi failure modes as formal propositions. Section~\ref{sec:metrics} derives credit metrics V1-V5 and the vault credit score VCS. Section~\ref{sec:empirical} specifies the data architecture, estimation procedures, stress scenarios and validation strategy. Section~\ref{sec:discussion} discusses contributions relative to the literature, binding limitations and policy implications. Section~\ref{sec:conclusion} concludes.

\section{Literature review}
\label{sec:lit}

This paper integrates and extends five strands of financial research which are reviewed in turn below.

\paragraph{Structural credit models and structured finance methodology.}\leavevmode\\
The theoretical foundation for coverage based default risk measurement is the \citet{Merton1974} model for company valuation, in which default occurs when the value of a firm’s assets falls below its contractual debt obligations and recovery is endogenously determined by the asset value at the default event. \citet{Vasicek1987} derive portfolio credit loss distributions under a single systematic factor, providing the analytical underpinning for regulatory capital frameworks and the asymptotic loan portfolio models that inform modern LGD estimation. \citet{AroraBohnZhu2006} provide the first out of sample comparative test of the Merton, Vasicek-Kealhofer (VK) and Hull-White reduced form models on 542 US corporate issuers matched with credit default swap data over October 2000-June 2004. Authors find that the VK model achieves an accuracy ratio of 0.801 in discriminating defaulters from nondefaulters versus 0.652 for the basic Merton implementation (a 15 percentage point superiority) and explains substantially more cross-sectional variation in CDS spreads (median R$^{2}$ = 0.48 versus 0.26 for Merton), with the smallest median absolute spread error of 33.28 bps versus 53.99 bps. The evidence that the enriched VK structural specification (empirical distance to default mapping to EDF) materially outperforms the single boundary Merton model directly motivates the paper's adoption of VK type distance to default as the structural foundation for V1. The VK model is calibrated on equity market data for listed firms with observable financial statements, thus, applying its architecture to DeFi vaults requires replacing book leverage with oracle priced collateral ratios and firm asset volatility with collateral return volatility. This structural adaptation is addressed in Section~\ref{sec:empirical}.

\citet{McNeilFreyEmbrechts2005} provide a comprehensive treatment of probability of default, loss given default and copula based joint default dependence structures. \citet{Gordy2000} offers a comparative anatomy of CreditMetrics and CreditRisk$^{+}$, demonstrating that the two dominant portfolio credit risk architectures share a common mathematical skeleton and that differences in simulated tail loss are attributable to the distributional assumptions for the systemic risk factor. His finding that tail percentiles at the 99.5th and 99.97th levels are highly sensitive to the kurtosis of the systemic factor (but not to its mean or variance) provides a methodological precedent for the vault paper's finding that expected shortfall is superlinear in liquidation mass (Proposition~\ref{prop:recovery_endogeneity}) and that stress scenario design must address the shape of the loss distribution. 

\citet{DuffieEcknerHorelSaita2009} extend this concern to the empirical identification of latent common factors, demonstrating in a dataset of 2,793 US public nonfinancial firms over 1979-2004 that observable covariate based portfolio loss models underestimate 99th percentile default counts by a full order of magnitude once a common dynamic frailty factor is incorporated. The frailty factor increases proportional intensity volatility by approximately 40\% beyond observable covariates and its presence is supported by a Bayes factor well above the threshold for very strong evidence. Thus, models without frailty exhibit serially correlated forecast errors ($p = 0.004$) while the frailty model does not ($p = 0.62$). The vault level analog is observable individual position metrics, e.g., health factor distributions, current utilization, realized slippage, cannot capture shared oracle dependencies, common liquidator network capacity and correlated collateral price paths that constitute a latent cross-vault factor not reducible to any single observable. \citet{DuffieEcknerHorelSaita2009} identification architecture is not directly portable since their estimation requires approximately 400,000 firm months of default data unavailable in DeFi and their factor operates at the economic cycle scale compared to block time scale governing onchain liquidation clustering. However, their central finding that observable factor models systematically and severely underestimate portfolio tail losses in the presence of a common latent factor motivates the stress scenario conditioning requirement throughout this paper and provides formal precedent for the superlinearity result in Proposition~\ref{prop:M2_superlinearity}. 

\citet{TreacyCarey2000} document internal credit risk rating practice at the 50 largest US banking organizations, establishing the empirical landscape of PD/LGD decomposition, rating scale design and the through the cycle versus point in time distinction in practitioner rating systems. Their observation that all surveyed banks assign ratings on a point in time basis, i.e., calibrating to current borrower condition, directly parallels the Level 1 design choice in the paper of computing VCS conditional on the current vault state $X_t^{v}$. \citet{CareyHrycay2001}, a companion study, empirically investigate the properties of the two dominant rating quantification methods, i.e., agency grade mapping and credit scoring model quantification, finding that apparently minor methodological variations change implied capital allocation ratios by several percentage points. Their analysis isolates two biases intrinsic to quantification. First, an informativeness bias that causes overoptimism for high risk grades and overpessimism for low risk grades, and, second, a noisy rating assignments bias that compounds this distortion when the same model is used for both grade assignment and quantification. For investment grades, median based methods outperform mean based methods and, for speculative grades, the ranking reverses. Their evidence that actuarial methods require at least twelve years of default observations for speculative grade entities to produce estimates within one-third of the long run average has a direct counterpart in the vault setting. DeFi protocol histories typically spanning fewer than five years, any VCS anchored to historical default frequency by grade cannot rely on stable actuarial calibration and must instead employ structural estimation, precisely the architecture developed in Section~\ref{sec:empirical}. They also document a gaming vulnerability where strategic grade boundary manipulation can cause actual default rates to exceed estimated rates by a factor of two or more, a finding with structural parallels to the curator parameter gaming risk analyzed in the companion paper. 

\citet{CantorPackerCole1997} provide empirical evidence on rating aggregation for split rated corporate bonds, showing that averaging two rating opinions produces strictly lower forecast error (RMSE 56.8 basis points) than any single agency or conservative selection rule (62--63 basis points) across a sample of 4,399 US corporate bond issues from 1983 to 1993. This result is directly relevant to the VCS aggregation designsince in normal regimes where loss channels are not perfectly correlated, the additive operator $\mathrm{VCS}^{\mathrm{add}}$ inherits the efficiency advantage of averaging. However, \citet{CantorPackerCole1997} finding derives from a context with institutional pricing and market discipline for bonds, and does not apply to DeFi vault scenarios in which a single failure mode, e.g., oracle manipulation, gas spike, or utilization ceiling breach, can simultaneously invalidate the readings of all five metrics. For such single point failure scenarios, the multiplicative weakest-link operator $\mathrm{VCS}^{\mathrm{mult}}$ is appropriate. 

In structured finance, rating agency CLO criteria operationalize coverage tests and collateral manager constraints as explicit credit triggers. Moody's market value and cash flow CLO methodologies specify overcollateralization test thresholds, cure periods and deleveraging waterfalls as responses to coverage breaches. S\&P Global Ratings CLO criteria formalize analogous interest coverage and OC test logics with explicit default treatment assumptions. Recent work confirms the discriminating power of the structural distance to default framework over long horizons. \citet{EldomiatyAzzam2025} estimate historical and stochastic distance to default via geoBrownian motion for 109 nonfinancial US firms over 1952-2023, finding that leverage and bankruptcy costs reduce distance to default while firm size amplifies it and that GBM simulated distance to defaults systematically exceed historical realizations across the full panel, supporting stochastic evaluation of distance to default as a forward looking credit risk measure. These firm level determinants, i.e., leverage ratio, tax shield structure and industrial output, do not transfer to DeFi vaults, where the structural analogs of firm assets and equity are oracle priced collateral portfolios and protocol surplus. However, the GBM simulation architecture is conceptually aligned with the stochastic V1 estimation procedure developed in Section~\ref{sec:empirical}. The paper demonstrates that collateral valuation, recovery, and liquidity are endogenously determined in a DeFi execution environment, requiring their systematic redevelopment instead of a recalibration of legacy parameters.

\paragraph{DeFi lending protocols, liquidation mechanics and protocol fragility.}\leavevmode\\
The DeFi lending literature has grown substantially since the emergence of overcollateralized protocols. \citet{GudgeonWernerPerezKnottenbelt2020} analyze interest rate dynamics, liquidity and market efficiency in DeFi loanable fund protocols, providing an economic framework for understanding utilization and borrow rate dynamics that directly informs V3. \citet{PerezWernerXuLivshits2021} study DeFi liquidations as a game theoretic phenomenon, documenting that cascading liquidations can render protocols insolvent when collateral prices decline rapidly and identifying gas competition as a first order determinant of liquidation outcomes. Their result that liquidation viability degrades under congestion is formalized in the paper as Proposition~\ref{prop:network_congestion}. 

\citet{QinZhouGamitoJovanovicGervais2021} provide an empirical study of DeFi liquidations across major protocols, quantifying incentive structures, liquidation risks and instabilities under different design choices, their evidence on endogenous execution quality motivates the impact function in V2. \citet{WernerPerezGudgeon2022} offer a systematic conceptual account of DeFi protocol risk dimensions, providing a taxonomy that contextualizes the paper's level decomposition. \citet{Kirisci2025} reports expert elicitation evidence which prioritizes operational and technical risks above financial and emergent categories across a five dimensional DeFi risk hierarchy, a ranking consistent with the paper's emphasis on execution failure and oracle manipulation as the primary Level 1 loss channels. 

\citet{GudgeonPerezHarzLivshitsGervais2020} develop a formal economic security framework for DeFi lending protocols, providing definitions of overcollateralization, liquidity, and counterparty risk as explicit security constraints and demonstrating\footnote{Applying Monte Carlo stress testing calibrated to Ethereum's price history} that a DeFi lending protocol with \$400 million of initial debt can become undercollateralized within 19 days under an illiquidity decay parameter of $\rho = 0.01$, with the time to undercollateralization strictly decreasing in both system debt and initial illiquidity. Their result that a positively correlated reserve asset provides limited protection against a collateral price decline directly motivates the scenario correlation structure in Section~\ref{subsec:stress_scenarios}, and their formal overcollateralization and liquidity constraints are direct precursors to the vault paper's health factor and ACR definitions. The boundary hitting probability $\mathrm{V3}(v,t;h)$ in \eqref{eq:M3_def} is the onchain operationalization of precisely the question their stress tests address, i.e., how quickly can a withdrawal blocking boundary be reached under adverse price and liquidity conditions, extended here to incorporate utilization dynamics, jump driven withdrawal clustering and the endogenous oracle and execution frictions absent from their stylized model. 

\citet{BertomeuMartinSall2024} develop two aggregate protocol risk measures, i.e., $R_I$, the price decline required to trigger liquidation of a synthetic marginal borrower, and $R_{II}$, the further decline at which collateral is exhausted post liquidation, calibrated to daily Aave and Compound data over April 2021-July 2023. Their central findings are that system fragility ($R_I > 1$) was briefly breached in July 2021 and that their measures detect elevated fragility around the UST collapse and FTX bankruptcy and that daily liquidations respond more strongly to contemporaneous risk measure changes than to lagged changes, indicating that liquidation execution is not timely relative to risk dynamics. \citet{IftikharWeiCartlidge2025} provide cross-version and crosschain panel regression evidence on the effectiveness of automated liquidation risk management in Aave and Compound, estimating fixed effects models on 1{,}456 daily observations from January 2021 to December 2024. Their central finding is a sharp reversal in the liquidation. Performance relationship across protocol generations is as followins, in v2 protocols deployed on Ethereum (L1), a 1\% increase in liquidation volume is associated with a 0.01\% decline in TVL, consistent with cascade driven depositor outflows. While in v3 protocols deployed on Arbitrum (L2), the same 1\% increase is associated with a 0.07\% rise in TVL and a 0.11\% rise in total revenue (both significant at the 1\% level, adjusted $R^2 = 0.61$). The reversal reflects v3's risk parameter architecture, i.e., asset specific collateral caps, isolation mode for volatile assets and enhanced liquidation incentive calibration, combined with L2's lower transaction costs and faster execution throughput, which reduce the probability of failed liquidations under congestion. This constitutes direct empirical support for the proposition that automated liquidation mechanism design is a first order determinant of protocol resilience under stress. This is consistent with the mechanism formalized in Proposition~\ref{prop:recovery_endogeneity}, well parameterized liquidation triggers reduce endogenous recovery shortfalls. Aave v3 isolation mode and efficiency mode have no direct structural counterpart in Morpho Blue's immutable market architecture. The paper's V1 and V2 metrics apply at the vault level aggregate and not to individual dynamic collateral parameters, so this empirical generalization requires architectural qualification.

\citet{ParhizkariIannillo2025} provide the first fully automated, onchain only security risk scoring framework for DeFi projects, introducing four behavioral and code structural metrics, i.e., code vulnerability density weighted by severity and a temporal age decay function, failed transaction ratio, normalized transaction volume variation and new origin EOA transaction fraction, aggregated via the multiplicative complement risk likelihood $= 1 - \prod_{i=1}^{4}(1 - \mathrm{metric}_i)$. Evaluated against 220 confirmed security breaches across seven EVM compatible chains, the model achieves ROC AUC of 0.887, recall of 86.4\%, and precision of 78.5\%, where lending protocols record a category specific attack detection rate of 96.4\%. The result that transaction behavioral signals, i.e., volume variation and new origin fractions, carry correlations above 40\% with the attack label while code vulnerability density contributes orthogonal predictive content empirically validates the vault paper's classification of execution failure and smart contract code integrity as structurally independent Level 3 risk channels. It demonstrates that the node failure probabilities $q_i(h)$ in Proposition~\ref{prop:L3_superadditive} are estimable from behavioral onchain data, making the dominance condition \eqref{eq:L3_dominance} operationally tractable.

\citet{AdamykBenson2025} evaluate six leading DeFi risk tracking platforms, i.e., Chainalysis, Elliptic, Nansen, Dune Analytics, DeBank, and Etherscan, -against seven weighted criteria (compliance features and real time monitoring, 20\% each; data accuracy, usability, and advanced analytics, 15\% each; integration and cost effectiveness, 10\% each), using a mixed methods design combining a survey of 138 practitioners with 12 stakeholder interviews. ANOVA confirms statistically significant performance differences across all criteria ($F_{5,132} = 11.78$, $p < 0.001$, partial $\eta^2 = 0.27$ for overall effectiveness; compliance features: $F = 14.01$, partial $\eta^2 = 0.31$), and user trust regresses significantly on data accuracy ($\hat{\beta} = 0.78$, $p < 0.01$), compliance features ($\hat{\beta} = 0.71$, $p < 0.01$) and real time monitoring ($\hat{\beta} = 0.65$, $p < 0.05$). The central finding is systematic fragmentation across platforms is that Chainalysis and Elliptic lead in compliance and analytics (mean compliance scores 4.51 and 4.45 on a five point scale) but score lowest on cost effectiveness (2.87 and 2.95) and usability (3.21 and 3.08). Nansen leads in real time monitoring (4.52) but lacks compliance depth (3.04), while DeBank and Etherscan achieve cost effectiveness and usability leadership but provide insufficient analytics and compliance for institutional risk management (advanced analytics means 3.20 and 3.09, respectively). The platforms with advanced analytics and real time monitoring are significantly more effective overall than those without ($t(294) = 7.45$, $p < 0.01$). This fragmentation across six specialized tools, documented through a sample of DeFi practitioners, constitutes direct survey based evidence for the measurement gap the paper addresses which is no evaluated platform provides per channel, mechanism grounded decomposition of vault level risk into independently interpretable components simultaneously accessible to non institutional participants.

\citet{ZhangWangZhengCartlidge2026} develop a unified network based framework for ecosystem level systemic risk monitoring in DeFi, constructing rolling shrinkage correlation networks from TVL log returns across 70 protocol categories (DeFiLlama, 2021-2025) and aggregating four spectral fragility metrics, i.e., average correlation strength, maximum eigenvalue, strong edge density and eigenvalue entropy, into a Correlation Fragility Indicator (CFI) via PCA, with the first component explaining 96.73\% of joint variation. The CFI is positively associated with aggregate TVL volatility ($\hat{\beta} = 0.0020$, $p < 0.10$, HAC) but uncorrelated with ETH spot volatility once controls are included, and Granger predicts future TVL growth volatility at horizons of 7, 14, and 30 days (coefficients 0.0028, 0.0017, and 0.0010, respectively; all $p < 0.01$, $n = 239$ rolling windows). Their node level Risk Contribution Score (RCS), which is a counterfactual measure of each protocol category's marginal contribution to the CFI, reveals that structural systemic importance diverges sharply from economic scale. Lending ranks first by TVL but seventh by RCS, insurance ranks thirtieth by TVL but second by RCS and targeted removal of the ten highest RCS categories reduces CFI by 1.072 standard deviations versus 0.157 for random removal. The result is that DeFi systemic fragility is driven by dependency structure and not by balance sheet size, and accumulates through gradual correlation concentration. This result directly corroborates the vault paper's V2 specification, in which curator overlap and shared oracle exposure determine the concentration risk contribution of each borrower cluster and reinforces V5's treatment of tail dependence as a structural property requiring continuous monitoring.

\paragraph{Oracle security, price feed manipulation and TWAP design.}\leavevmode\\
Oracle mechanisms are the primary interface between offchain price discovery and onchain enforcement in DeFi. \citet{AngerisChitra2020} analyze constant function market makers as price oracles and characterize the manipulation resistance of time weighted average price feeds as a function of liquidity depth and the cost of price displacement, establishing formal conditions under which oracle manipulation is economically infeasible. \citet{AngerisKaoChiangNoyesChitra2020} extend this analysis to Uniswap market structure, quantifying how oracle prices respond to large trades and how pool parameters affect oracle accuracy under adversarial conditions. \citet{QinZhouLivshits2021} document flash loan enabled oracle manipulation in DeFi protocols, identifying the conditions under which short term oracle distortions generate economically significant misallocations in downstream credit enforcement. The paper integrates these contributions into a formal credit risk framework through Propositions~\ref{prop:oracle_failure} and~\ref{prop:oracle_manipulation}, which translate oracle error into depositor shortfall, and V4, which provides an operationalizable oracle integrity score distinguishing latency and manipulation channels.

\paragraph{Bank runs, funding liquidity and DeFi fragility.}\leavevmode\\
The theory of run risk in credit intermediaries originates with \citet{DiamondDybvig1983}, who show that maturity transformation with demand deposit contracts generates multiple equilibria in which a run equilibrium coexists with a good one, and that deposit insurance or lender of last resort access can eliminate the run equilibrium. Global games refinements by \citet{MorrisShin1998} and \citet{GoldsteinPauzner2005} demonstrate that under incomplete and heterogeneous private information, a unique equilibrium is selected by a threshold strategy in private signals. An important implication for the paper is that DeFi's near common information environment (where all onchain states are publicly observable) removes the belief heterogeneity that dampens run incentives in Morris-Shin equilibria, sharpening run dynamics relative to traditional intermediaries.

\citet{BrunnermeierPedersen2009} formalize the interaction between funding liquidity and market liquidity, showing that asset liquidations can drive price declines that further tighten funding constraints, producing a liquidity spiral. This mechanism has a direct DeFi analog in the interaction between liquidation induced depth depletion and utilization driven withdrawal constraints, formalized in Propositions~\ref{prop:recovery_endogeneity} and~\ref{prop:liquidity_run}. \citet{GortonMetrick2012} document run dynamics in securitized banking and the repo market, demonstrating that short term secured credit structures are vulnerable to sudden confidence crises even absent fundamental insolvency, a pattern with structural parallels to utilization driven DeFi withdrawal dynamics. \citet{KlagesMundtHarzGudgeon2020} develop risk based models for decentralized stablecoins, providing a theoretical treatment of reflexive collateral dynamics and fragility thresholds in DeFi systems that directly informs the boundary hitting formulation of V3. 

\citet{AppelGrennan2025} provide direct empirical evidence on depositor run dynamics in a crypto shadow bank, linking individual transaction records from the Celsius Chapter~11 bankruptcy filings to onchain blockchain histories for 492,000 depositors and exploiting the June 2022 withdrawal freeze as a quasi-experimental shock in a difference in differences design. Institutional depositors exhibit stranded asset ratios 1.5-4.7 percentage points lower than comparable retail users and cultural heritage (embedded as preference and belief shifters in an extension of the \citet{MorrisShin1998} global games framework) shifts withdrawal probabilities by 3-5 percentage points, with depositors from more individualistic societies exiting earlier and those from high power distance societies deferring to CEO reassurances despite adverse signals. Users with stranded balances subsequently increase high risk token allocations and concentrate portfolios, consistent with gambling for resurrection behavior under reference point dependent utility. Celsius operated as a centralised custodial lender without onchain liquidation enforcement. The cultural monitoring mechanism requires name level depositor identification that is unavailable in pseudonymous vault settings and the absence of smart contract automated liquidation makes its run dynamics structurally closer to a shadow bank than to a utilization constrained DeFi vault. The monitoring incentive findings\footnote{ Institutional sophistication and balance size predict early exit, and that post loss depositors shift toward concentrated risk} are directly relevant to the depositor heterogeneity assumptions underlying the V3 boundary hitting specification.

\paragraph{MEV, blockspace competition and AMM execution.}\leavevmode\\
\citet{DaianGoldfeder2020} introduce the concept of miner extractable value, documenting systematic front-running, sandwich attacks and priority fee competition in decentralized exchange environments and characterizing the adverse selection implications for execution quality under congestion. Their empirical and game theoretic analysis provides the microfoundation for Proposition~\ref{prop:network_congestion}, which formalizes congestion driven execution failure as a wrong way risk channel endogenous to stress. \citet{MilionisMoallemiRoughgardenZhang2022} formalize a continuous time loss channel for AMM liquidity providers, termed loss versus rebalancing, which accumulates from adverse selection by informed traders rather than from discrete liquidation events. This result identifies a sixth loss channel that applies to vaults deploying capital into AMM positions and is outside the scope of V1-V5 as developed here, the scope boundary is stated precisely in the paper's conclusion.

\paragraph{Machine learning approaches to credit risk calibration and liquidation prediction.} \leavevmode\\
Recent work has applied ML to two adjacent problems that contextualise the vault paper's structural approach. \citet{GarciaCespedesMoreno2022} show that gradient boosted tree models can replicate the Monte Carlo simulated loss distribution percentiles of the generalized Vasicek credit risk model for concentrated and multifactor portfolios with $R^2 \approx 100\%$ on held out test samples, using just two inputs the confidence level and a Gaussian copula based concentration estimate $GCR(p)_k = GP(p)_k/ASRF(p)_k$. Their result that a closed form two variable summary statistic captures nearly all variation in the full simulation output is directly relevant to VCS implementation efficiency. Once V1-V5 are structurally defined, ML surrogates trained on simulation output can accelerate real time scoring without sacrificing structural grounding, much as the Gaussian copula concentration estimate accelerates Vasicek loss quantile computation. Their finding that single trees outperform neural networks in the granular case with substantially lower calibration time also suggests that the interpretability, accuracy trade off in complex credit risk settings favors simpler nonlinear models over deep architectures. 

\citet{PalaiokrassasScherrersMakriTassiulas2024} apply ML classification to wallet level liquidation prediction across 23 DeFi protocols and 550,000 Ethereum addresses covering May 2019 to March 2023, achieving AUC scores of 0.847 (CatBoost) and 0.837 (LightGBM) on a six month forward liquidation horizon using 42 monthly behavioral features derived from borrow, deposit, repay, and withdrawal transaction statistics. Their results confirm that transactional behavioral patterns carry predictive content for individual borrower liquidation risk. The unit of analysis is, however, the borrower wallet, predicting whether a given address will cross $\mathrm{HF} < 1$ within six months addresses a different measurement problem from the vault level depositor loss framework developed here. The paper's ACR, recovery, and utilization metrics target the joint distribution of borrower exposures combined with execution constraints that determine whether liquidation proceeds suffice to cover depositor claims, a structural question that AUC based liquidation prediction does not address. The AUC of 0.847 implies residual prediction uncertainty that the present framework attributes structurally to endogenous collateral price dynamics, depth depletion and gas regime shifts rather than to behavioral features.

\citet{GhoshDatta2025} develop the onchain Credit Risk (OCCR) Score, a wallet level probability of default decomposed into five subscores, i.e., historical liquidation rate, current liquidation-at risk simulation based default probability, credit utilization, onchain transaction score, and new credit activity, combined via the weighted linear form $\mathrm{OCCR} = 0.35\hat{s}_{h} + 0.25\hat{s}_{c} + 0.15(1 - \hat{s}_{cu}) - 0.15\hat{s}_{ct} + 0.10\hat{s}_{nc}$. Under standard moment conditions, each subscore is proved consistent and asymptotically normal. Simulation validation on synthetic data yields coverage probabilities of 0.946-0.992, confirming estimator reliability. The OCCR architecture is the closest existing formal analog to the VCS decomposition developed in this paper. Both frameworks decompose composite credit risk into weighted, independently interpretable subscores with established statistical properties and both treat credit utilization and liquidation-at risk as distinct measurement dimensions. The fundamental distinction is that OCCR targets the borrower wallet, i.e., estimating the probability that a specific address will face liquidation, whereas the VCS targets the vault, estimating the probability and magnitude of depositor claim impairment after protocol level liquidations are executed. Wallet level PD estimates of the kind OCCR supplies are inputs to the health factor distribution underpinning V1 and V2 in the present framework. They do not, however, capture oracle divergence, endogenous execution shortfall or cascade induced recovery deficits that determine vault level depositor loss. The OCCR framework assumes independent loan positions, which does not hold in a pooled vault where positions share oracle exposure and correlated liquidation dynamics. Thus, wallet level PD is a necessary but not sufficient input for vault level credit risk measurement.
The ML based and structural measurement approaches are therefore complementary. ML classification can augment real time monitoring of individual borrower risk trajectories, while the structural framework provides mechanism grounded, depositor oriented loss metrics and stress scenario conditioning that borrower level classifiers do not supply.

\section{Theoretical framework}
\label{sec:theory}

\subsection{DeFi lending environment and notation}
DeFi lending protocols implement collateralized borrowing and pooled deposit claims through smart contracts deployed on public blockchains. A vault pools depositor assets (the debt side) and extends loans to borrowers against posted collateral, subject to pre-specified eligibility, collateralization constraints and liquidation rules. Borrowers post collateral assets and incur debt obligations in borrowed assets. If a borrower’s collateral value falls sufficiently, liquidators can repay debt and seize collateral under protocol rules. Deposit claims are typically redeemable subject to available liquidity. When most deposits are lent out (high utilization), withdrawals can be delayed or blocked. Unlike traditional intermediaries, all state variables are publicly observable onchain (up to address attribution) and enforcement is rule bound.

Let $v$ denote a vault with the collateral universe $\mathcal{C}^{v}$ and the debt (borrowed) asset universe $\mathcal{D}^{v}$. For each asset $a$, let $P_{a,t}$ denote the oracle reported unit of account price used in protocol accounting and let $\widetilde{P}_{a,t}$ denote the realized liquidation execution price net of slippage and fees. The distinction is important since $P_{a,t}$ is the valuation input used to compute solvency constraints, while $\widetilde{P}_{a,t}$ is the realization that determines proceeds and recoveries when liquidation is executed.

\begin{definition}[Vault share price and depositor loss equivalence]
\label{def:erc4626}
In a pooled vault implementing the ERC-4626 standard, let $A_t$ denote total underlying assets at time $t$ and let $S_t$ denote total outstanding shares. The share price is:
\begin{equation}
P_t^{\mathrm{share}} \equiv \frac{A_t}{S_t}
\label{eq:share_price}
\end{equation}
Depositor liabilities equal $L_t^{v} = P_t^{\mathrm{share}} \cdot S_t \equiv A_t$ at oracle valued marks. For a depositor entering at share price $P_0^{\mathrm{share}}$, the loss rate at redemption time $T$ is:
\begin{equation}
\ell_T = \left(1 - \frac{P_T^{\mathrm{share}}}{P_0^{\mathrm{share}}}\right)^{\!\!+}
\label{eq:share_loss}
\end{equation}
consistent with Definition~\ref{def:loss_function}, where a vault shortfall $\Delta_t^{v} > 0$ implies that realized liquidation value assets are insufficient to provide redemptions at the current oracle marked share price. Depositors who redeem during or after the shortfall event receive less than the prevailing oracle marked value. It is distinct from a capital loss relative to entry price $P_0^{\mathrm{share}}$, which depends on the depositor's entry timing. Autocompounding vaults accrue yield by increasing $A_t$ faster than deposit inflows increase $S_t$, so $P_t^{\mathrm{share}}$ is strictly increasing absent strategy losses.
\end{definition}

Under ERC-4626 accounting, $L_t^{v}$ is the book value claim at oracle marks, while
$A_t^{v}$ in Definition~\ref{def:loss_function} is the realized liquidation value asset pool available to meet withdrawals. Hence $\Delta_t^{v}=(L_t^{v}-A_t^{v})^+>0$
if and only if $A_t^{v}<A_t^{v,\mathrm{book}}$, where $A_t^{v,\mathrm{book}}\equiv L_t^{v}$ denotes vault assets valued at current oracle marks, i.e., realized execution value assets fall below book value assets. This is precisely the oracle execution divergence formalized in
Proposition~\ref{prop:oracle_failure} and provides the accounting link from share price dynamics to V1.

Let $u\in\mathcal{U}_v$ denote borrower accounts and $C_{u,a,t}$ denote the quantity of collateral asset $a$ held by account $u$ at time $t$. Let $B_{u,d,t}$ denote the quantity of debt asset $d$ owed by $u$. Each collateral asset $a$ has a collateral factor $w_a^{v}\in(0,1]$ that limits borrowing capacity relative to oracle valuation with liquidation thresholds and incentives collected into a vault parameter vector $\Theta_t^{v}$. 
\begin{definition}[Protocol bound vault]
\label{def:protocol_bound}
A vault $v$ is protocol bound if all its strategies are deployed within a single underlying protocol $\Pi$, so that $\mathcal{C}^{v}$ and $\mathcal{D}^{v}$ are subsets of the asset universe of $\Pi$, and all liquidation, oracle, and interest rate mechanics are governed exclusively by $\Pi$'s smart contracts. The dependency graph of $v$ has depth at the vault contract, the underlying protocol contracts and the oracle contracts used to determine collateral prices.
\end{definition}

\begin{definition}[Cross protocol vault]
\label{def:cross_protocol}
A vault $v$ is cross protocol if its strategies deploy capital across multiple distinct protocols $\Pi_1,\ldots,\Pi_k$ or across multiple chains, potentially including AMM liquidity provision, perpetual futures basis trades and tokenized real world asset (RWA) exposures. The dependency graph of $v$ has depth $d\ge 3$ and may include bridges, crosschain messaging layers and external liquidation routes not governed by any single protocol.
\end{definition}

The metrics V1-V5 derived in this paper are for protocol bound vaults, where the collateral set $\mathcal{C}^{v}$, debt set $\mathcal{D}^{v}$, oracle feeds and liquidation engine are governed by a single protocol. For cross protocol vaults, the aggregation of V1-V5 across strategy legs requires assumptions about the dependence of execution failures across protocols and chains that are unlikely to hold in stress, since liquidity depletion and congestion are common shocks. Extension to cross protocol vault measurement involves aggregating per protocol instances under an assumed dependence structure and is a research level extension beyond the present framework. Canonical instances of protocol bound vaults include curated lending vaults built on major lending protocols, e.g., Morpho v1, while canonical instances of cross protocol vaults, e.g., Euler, include multi strategy products spanning lending, AMMs and derivative hedges.

The evolution of onchain states is modeled on a filtration $\{\mathcal{F}_t\}_{t\ge 0}$ representing public information, including onchain events, oracle updates, market data and network conditions. The market/network environment is denoted $\mathcal{M}$.

Several features distinguish this environment from traditional secured lending and structured credit. Price discovery is continuous and public, but mediated by oracle mechanisms that can be stale or manipulated. Liquidation execution is automated and occurs via onchain venues, implying endogenous execution prices and recoveries that depend on trade size and liquidity depth. Execution itself requires blockspace, gas costs and ordering competition are state variables that worsen during stress. Governance constraints can impose timelocks on parameter changes, limiting the ability to respond to fast moving shocks. Finally, there is no lender of last resort, meaning liquidity shortages are resolved (if at all) endogenously through interest rate dynamics, borrower repayments or liquidation.

%Table~\ref{tab:notation} summarizes key notation used throughout.

%\begin{table}[htbp]
%\centering
%\caption{Notation summary for DeFi vault credit risk measurement.}
%\label{tab:notation}
%\begin{tabular}{p{0.22\textwidth} p{0.72\textwidth}}
%\toprule
%\textbf{Symbol} & \textbf{Definition} \\
%\midrule
%$v$ & Vault index (credit instrument). \\
%$u$ & Borrower account index. \\
%$\mathcal{C}^{v}$, $\mathcal{D}^{v}$ & Sets of collateral and debt assets. \\
%$P_{a,t}$, $\widetilde{P}_{a,t}$ & Oracle price and realized execution price for asset $a$. \\
%$C_{u,a,t}$, $B_{u,d,t}$ & Collateral and debt positions (asset units). \\
%$w_a^{v}$ & Collateral factor (LTV ceiling) for collateral asset $a$. \\
%$\Theta_t^{v}$ & Vault parameter vector (thresholds, incentives, caps). \\
%[Notation table — see Appendix~\ref{app:notation}]

\subsection{Vault, loss and the three level structure}
This subsection formalizes the vault as a credit instrument and defines the depositor loss function that anchors the remainder of the paper. The three level decomposition is stated to clarify scope where Level 1 holds $\Theta_t^{v}$ fixed and analyzes mechanical loss channels (the focus of this paper), Level 2 concerns governance risk (companion paper) and Level 3 concerns infrastructure and code integrity (Section~\ref{subsec:level3}). Figure~\ref{fig:three_level} summarizes the decomposition and its scope boundaries.

% ── FIGURE 1: three level Decomposition ─────────────────────────────────
\begin{figure}[htbp]
\centering
\begin{tikzpicture}[
  mybox/.style={
    draw=black, thick, rounded corners=3pt,
    text width=12cm,
    minimum width=13.0cm,
    inner sep=10pt,
    align=justify, font=\small          
  },
  arr/.style={->, >=stealth, semithick},
  brk/.style={
    decorate,
    decoration={brace, amplitude=6pt, raise=5pt},
    thick
  }
]

%% ── Level 1 (bottom, white) ─────────────────────────────────────────────
\node[mybox, fill=white] (L1) at (0,0) {%
  \textbf{Level 1\enspace Mechanical loss channels}\enspace
    \textit{(this paper)}\\[5pt]
  \textbf{P1}\enspace oracle execution divergence
    $\;\Rightarrow\;$ V1 (stress adjusted coverage)\\
  \textbf{P2}\enspace Recovery endogeneity
    $\;\Rightarrow\;$ V2 (volume adjusted expected shortfall)\\
  \textbf{P3}\enspace full information run dynamics
    $\;\Rightarrow\;$ V3 (liquidity stress index)\\
  \textbf{P4}\enspace Timelock/frozen $\Theta_t^v$
    $\;\Rightarrow\;$ frozen-parameter stress methodology\\
  \textbf{P5}\enspace Oracle manipulation/latency
    $\;\Rightarrow\;$ V4 (oracle integrity score)\\
  \textbf{P6}\enspace Congestion wrong way risk
    $\;\Rightarrow\;$ V5 (execution viability)\\[5pt]
  \textit{Metrics V1-V5 and VCS evaluated under fixed
    $\Theta_t^v$ and $\mathcal{I}_t^v = \mathbf{1}$}%
};

%% ── Level 2 (middle, light gray) ───────────────────────────────────────
\node[mybox, fill=gray!10, above=0.45cm of L1] (L2) {%
  \textbf{Level 2\enspace Governance quality}\enspace
    \textit{(companion paper)}\\[5pt]
  The curator selects and updates the parameter vector $\Theta_t^v$
  (LTV, LLTV, collateral caps, liquidation incentives, utilization
  curve). Timelock $\delta^{\mathrm{lock}}$ governs the minimum
  response latency and the critical response window $\Delta^{*v}$
  determines whether governance can adapt before a stress event
  causes mechanical losses. Governance quality also encompasses
  curator incentive alignment, conflict of interest management
  and the risk of strategic parameter miscalibration.\\[4pt]
  \textit{Level 1 outcomes through $\Theta_t^v$,
    treated as a fixed input throughout this paper.}%
};

%% ── Level 3 (top, medium gray) ──────────────────────────────────────────
\node[mybox, fill=gray!20, above=0.45cm of L2] (L3) {%
  \textbf{Level 3\enspace Code integrity}\\[5pt]
  The dependency graph $G^v\!=\!(V^v,E^v)$ comprises all
  critical path nodes including vault contract, underlying protocol contracts, oracle feeds. For cross protocol vaults, also bridge and
  messaging contracts. Each node $i$ carries integrity indicator
  $\mathcal{I}_{t,i}^v\in\{0,1\}$ with the vault wide breach event
  $\mathcal{B}_h^{v,L3}=\{\exists\,s,i:\mathcal{I}_{s,i}^v=0\}$.
  Aggregate failure probability is superadditive in depth
  $k=|V^v|$: $q_{\mathrm{code}}^v(h)\ge
  1-\prod_{i\in V^v}(1-q_i(h))$.
  Cross protocol vaults carry strictly higher Level 3 exposure
  than protocol bound vaults (Corollary~\ref{cor:cross_protocol_L3}).\\[4pt]
  \textit{Level 3 dominates when
    $q_{\mathrm{code}}^v>\mathbb{E}[\ell^{v,L1}]$
    (Proposition~\ref{prop:L3_dominance}).}%
};

%% ── Governance influence arrow ───────────────────────────────────────────
\draw[arr, gray!55] (L2.south) -- (L1.north);
\node[font=\scriptsize\itshape, text=gray!65]
  at ($(L1.north)!0.5!(L2.south) + (3.6,0)$) {sets $\Theta_t^v$};

%% ── Brace spanning all three levels ─────────────────────────────────────
\draw[brk]
  ([xshift=0.75cm]L3.north east) -- ([xshift=0.75cm]L1.south east)
  node[midway, right=5pt, align=center, font=\scriptsize\itshape] {
    Total depositor\\[-2pt]expected loss\\[-2pt]$\mathbb{E}[\ell_T^v\mid\mathcal{F}_t]$
  };

\end{tikzpicture}
\caption{Decomposition of vault credit risk and scope boundaries.
%Level 1 (this paper) formalizes six mechanical loss channels under fixed parameters and functioning smart contracts, yielding five metrics V1-V5 and a Vault Credit Score; Proposition~P4 motivates the frozen-parameter stress test approach rather than a separate metric. Level 2 (companion paper) addresses governance quality and how curator decisions shape the parameter vector $\Theta_t^v$. Level 3 (research agenda) addresses smart contract code integrity via the superadditive failure probability in dependency depth (Propositions~\ref{prop:L3_superadditive}--\ref{prop:L3_dominance}); the dominance condition determines which level is the binding depositor risk concern.
}
\label{fig:three_level}
\end{figure}
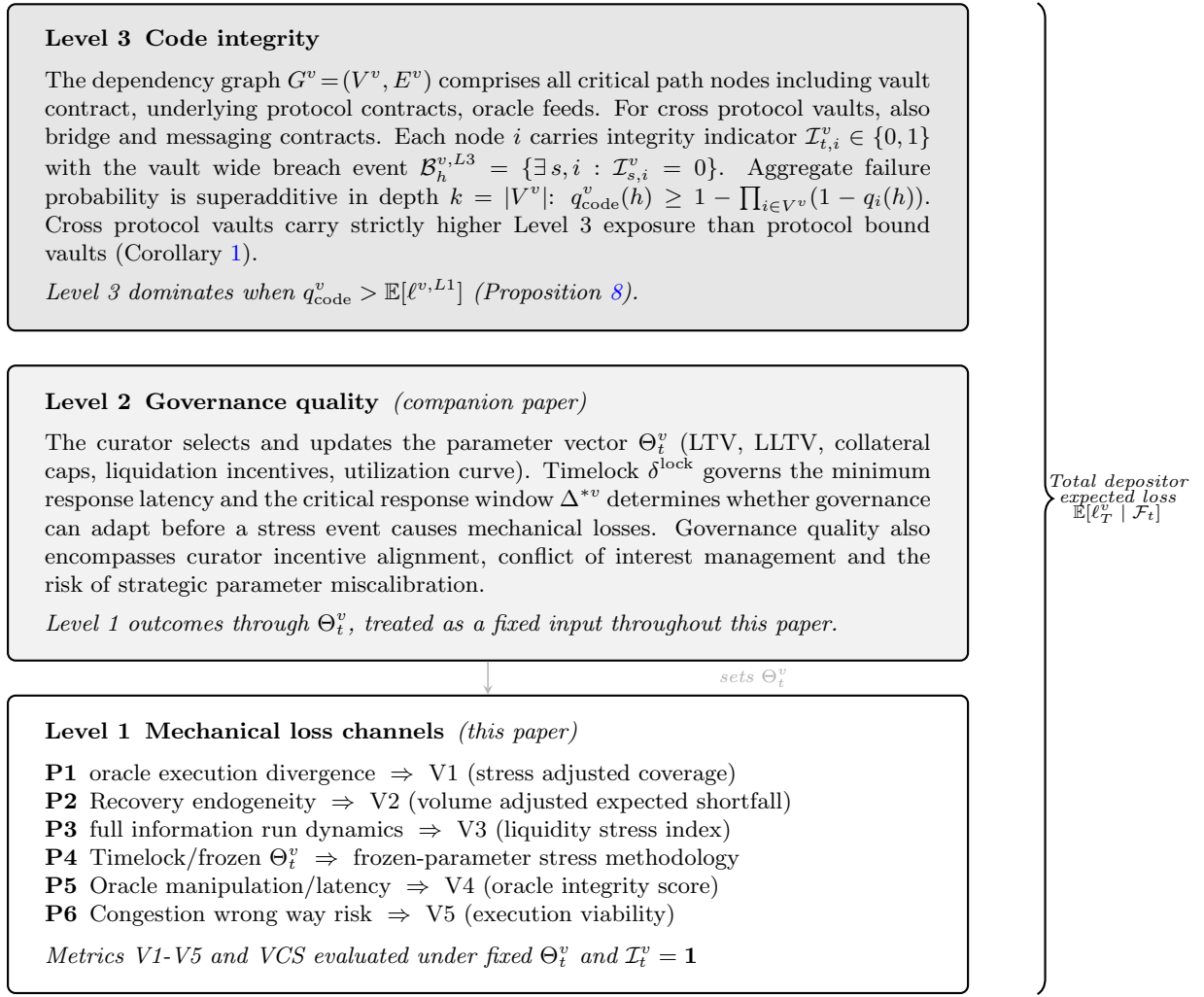

Time is continuous on $t\in[0,T]$, with filtration $\{\mathcal{F}_t\}_{t\ge 0}$ representing publicly observable onchain and market information. A vault $v$ is a rule based lending structure that maps a state vector $S_t^{v}$ into a set of admissible borrower positions, pricing/interest rules, liquidation rules and withdrawal rules. The vault is associated with a finite set of collateral assets $\mathcal{C}^{v}$ and debt (borrowed) assets $\mathcal{D}^{v}$. A curator $k$ selects and updates the vault’s risk parameters and operational dependencies. This paper treats those parameters as fixed inputs and focuses on Level 1 outcomes. This is a point in time evaluation like the internal bank rating systems documented by \citet{TreacyCarey2000}. The VCS is calibrated to the vault's current state and not projected through the cycle default probability. The practical implication is that VCS will exhibit higher time series variation than a through the cycle measure, but provides a more immediate and actionable signal of depositor risk at any given state of the vault.

\begin{definition}[Vault as a credit instrument]
\label{def:level1_state}
The Level 1 credit state at time $t$ is the tuple:
\begin{equation}
X_t^{v} \equiv \left(\mathcal{C}^{v},\mathcal{D}^{v},\{L_{u,t}^{v}\}_{u\in\mathcal{U}_v}, \Theta_t^{v}, \Lambda_t^{v}\right)
\end{equation}
where $\mathcal{U}_v$ is the set of borrower accounts, $L_{u,t}^{v}$ is account $u$'s position (collateral and debt holdings) in asset units, $\Theta_t^{v}$ is the vector of liquidation and solvency parameters (liquidation thresholds, close factors, liquidation incentives, collateral caps) and $\Lambda_t^{v}$ is the vault liquidity state (utilization, available cash, withdrawal queue state). The Level 1 risk of vault $v$ is the distribution of depositor losses induced by the evolution of $X_t^{v}$ under the vault's rule set and market and network dynamics, with depositor loss defined relative to promised withdrawal claims.
\end{definition}

\begin{definition}[Vault loss event and loss function]
\label{def:loss_function}
Let $A_t^{v}$ denote the vault's realized liquidation value assets available to satisfy depositor claims at time $t$ and let $L_t^{v}$ denote the vault's outstanding depositor liabilities including accrued interest. A vault shortfall occurs at time $t$ if:
\begin{equation}
\Delta_t^{v} \equiv \left(L_t^{v}-A_t^{v}\right)^{+} > 0
\end{equation}
and the corresponding loss rate is:
\begin{equation}
\ell_t^{v} \equiv \frac{\Delta_t^{v}}{L_t^{v}} \in [0,1]
\end{equation}
A loss function for horizon $T$ is any mapping $\mathcal{R}_T^{v} = \rho\!\left(\{\ell_t^{v}\}_{t\in[0,T]}\right)$, where $\rho(\cdot)$ is a risk function, e.g., a tail probability, an expected shortfall or a quantile of terminal loss.
\end{definition}

A separate governance state $Y_t^{(v,k)}$ summarizes authority, incentives, process quality and action history for the decision maker that selects and updates $\Theta_t^{v}$. A distinct infrastructure state $\mathcal{I}_t^{v}$ records whether the vault's dependency graph of smart contracts is executing as specified. This paper develops Level 1 metrics and formally delineates the scope boundaries of all three levels. Level 2 (governance) and the Level 3 estimation architecture are developed in companion work.

\begin{definition}[Three level decomposition of vault risk]
\label{def:three_level}
Let $\mathcal{M}$ denote the exogenous market and network environment, $\mathcal{E}^{v}$ the vault rule set (smart contracts implementing accounting, interest, liquidation and withdrawal mechanics), and $\mathcal{I}_t^{v}$ the binary code integrity state vector indicating whether each critical path node in the vault's dependency graph is executing as specified. The three level decomposition asserts:
\begin{equation}
\mathcal{R}_T^{v} = \mathcal{F}\!\left(\mathcal{E}^{v}, \mathcal{M};\, X_0^{v}\right)\cdot\mathbf{1}\!\left\{\mathcal{I}_s^{v}=\mathbf{1}\;\forall s\right\}
+ \mathcal{R}_T^{v,L3}\cdot\mathbf{1}\!\left\{\exists s:\mathcal{I}_s^{v}\neq\mathbf{1}\right\}
\label{eq:three_level}
\end{equation}
where $X_0^{v} = \Psi(u_0, G^{(v,k)}, I^{(v,k)}, \Pi^{(v,k)}, H^{(v,k)})$ and $\mathcal{R}_T^{v,L3}$ is the loss realized when at least one critical path node fails. Level 1 concerns $\mathcal{F}$ with $X_0^{v}$ fixed and $\mathcal{I}_s^{v}=\mathbf{1}$ for all $s$. Level 2 concerns $\Psi$ and the feasibility and latency of updating $X_t^{v}$ through governance. Level 3 concerns $\mathcal{I}_t^{v}$ with whether the contracts execute as specified.
\end{definition}

In DeFi, borrower default is a mechanical liquidation trigger region and recovery is endogenous to execution.

\begin{definition}[Liquidation trigger event (borrower level)]
\label{def:liq_trigger}
For a fixed account $u$ in vault $v$, let $C_{u,t}$ and $B_{u,t}$ denote collateral and debt vectors in asset units, and let $\Theta_t^{v}$ denote the vault's solvency parameters. Define the health factor as the deterministic solvency function $\mathrm{HF}_{u,t}^{v}=\mathrm{HF}(C_{u,t},B_{u,t},P_t;\Theta_t^{v})$, with liquidation enabled when $\mathrm{HF}_{u,t}^{v}<1$. The event $\{\mathrm{HF}_{u,t}^{v}<1\}$ is the liquidation trigger event, meaning an endogenous stopping region condition driven by $P_t$ and by parameter choices in $\Theta_t^{v}$.
\end{definition}

\begin{definition}[Endogenous recovery and shortfall (account level)]
\label{def:recovery}
For a liquidation of account $u$ executed over an interval $[t,t+\delta]$, define realized recovery in unit of account as:
\begin{equation}
\mathcal{R}_{u,t}^{v} \equiv \sum_{a\in\mathcal{C}^{v}} \widetilde{P}_{a,t}\cdot\Delta C_{u,a,t} - \mathrm{Fees}_{u,t}
\end{equation}
where $\Delta C_{u,a,t}$ denotes the amount of collateral seized and liquidated, $\widetilde{P}_{a,t}$ denotes the realized execution price (net of slippage) and $\mathrm{Fees}_{u,t}$ aggregates gas/MEV/protocol fees in unit of account. The account level shortfall is:
\begin{equation}
\Delta_{u,t}^{v} \equiv \left(\sum_{d\in\mathcal{D}^{v}} P_{d,t}\cdot\Delta B_{u,d,t} - \mathcal{R}_{u,t}^{v}\right)^{+}
\end{equation}
with $\Delta B_{u,d,t}$ denoting the debt repaid/covered. The key DeFi specific feature is that $\widetilde{P}_{a,t}$ and $\mathrm{Fees}_{u,t}$ are state dependent and typically worsen in stress.
\end{definition}

\subsection{TradFi to DeFi mapping}
\label{subsec:tradfi_defi}
This subsection is the theoretical core of the paper. Six canonical TradFi credit concepts are mapped into their DeFi vault analogs and for each a formal breakdown condition is derived as a proposition. The breakdown conditions are primitives from which the Level 1 metrics are constructed. Each proposition isolates a structural failure mechanism absent from naive application of traditional credit risk methodology. Table~\ref{tab:mapping_summary} summarizes the mapping and serves as an index for the derivations in Section~\ref{sec:metrics}.

In structured credit, asset coverage is formalized through overcollateralization tests that compare a transaction's collateral value to rated liabilities, with cashflow redirection or deleveraging triggered upon breach. Let $V_t$ denote the marked value of collateral and $D_t$ denote the outstanding debt of a given tranche stack, an archetypal coverage ratio is:
\begin{equation}
\mathrm{CR}_t \equiv \frac{V_t}{D_t}
\end{equation}
The mechanism relies on the structural assumptions that collateral valuation is reliable at the testing frequency, that liquidation is feasible at (or near) the marked value up to specified haircuts and that trigger remedies (cash diversion, amortization) can be implemented without binding settlement frictions. A direct CLO analogue is the OC test logic described in rating agency methodologies and structured finance criteria.
%\citep{MoodysCLO, SPCLO}.

In DeFi lending vaults, the borrower level analog is the health factor $\mathrm{HF}_{u,t}^{v}$ (liquidation triggered when $\mathrm{HF}_{u,t}^{v}<1$), while the vault level analog is an asset coverage ratio defined on liquidation value assets available to depositors (Definition~\ref{def:loss_function} and the shortfall definition therein). Let $A_t^{v}$ denote liquidation value assets and $L_t^{v}$ denote depositor liabilities, the vault asset coverage ratio is:
\begin{equation}
\mathrm{ACR}_t^{v} \equiv \frac{A_t^{v}}{L_t^{v}}
\label{eq:vault_acr}
\end{equation}
The analogy holds because both $\mathrm{CR}_t$ and $\mathrm{ACR}_t^{v}$ are coverage tests. Insolvency is mechanically precluded when the coverage ratio remains above one under the relevant valuation convention.

\begin{assumption}[A0: Execution at or below oracle price]
\label{ass:A0}
In a liquidation event, realized execution prices satisfy $\widetilde{P}_{a,t} \leq P_{a,t}$ almost surely for all collateral assets $a \in \mathcal{C}^{v}$, so that $\varepsilon_{a,t} \equiv 1 - \widetilde{P}_{a,t}/P_{a,t} \in [0,1)$. This holds when collateral is liquidated under competitive onchain depth pressure, which drives execution prices weakly below oracle marks. Assumption~\ref{ass:A0} can be relaxed if routing achieves above-oracle prices; in that case $\varepsilon_{a,t} < 0$ and $\widetilde{\mathrm{ACR}}_t^{v} > \mathrm{ACR}_t^{v}$, which does not require separate treatment for depositor protection.
\end{assumption}

\begin{remark}[Interaction of Assumption~\ref{ass:A0} with execution non viability]
\label{rem:A0_V5_interaction}
Assumption~\ref{ass:A0} conditions on a liquidation event occurring. When Proposition~\ref{prop:network_congestion} applies and liquidation viability $\mathcal{V}_{u,t}$ fails, no execution takes place and $\widetilde{P}_{a,t}$ is undefined. In that case, $\widetilde{P}_{a,t}$ should be interpreted as the counterfactual execution price conditional on viability and V1--V4 metrics quantify the loss conditional on liquidation proceeding. The unconditional probability of execution non viability is separately captured by V5. This separation ensures that V1--V4 measure the quality of execution when it occurs, while V5 measures the probability that execution does not occur at all.
\end{remark}

\begin{proposition}[Oracle execution divergence under stress]
\label{prop:oracle_failure}
Under Definition~\ref{def:recovery} and the vault shortfall definition in Definition~\ref{def:loss_function}, suppose that liquidation value assets $A_t^{v}$ are computed using oracle prices $P_{a,t}$, while realized proceeds are governed by execution prices $\widetilde{P}_{a,t}$. Define the collateral weighted expected oracle deviation:
\begin{equation}
\bar{\varepsilon}_t^{v} \equiv \sum_{a\in\mathcal{C}^{v}} \omega_{a,t}^{v}
\left(1-\frac{\widetilde{P}_{a,t}}{P_{a,t}}\right)
\qquad
\omega_{a,t}^{v} \equiv \frac{P_{a,t}\cdot C_{a,t}^{v}}{\sum_{b\in\mathcal{C}^{v}} P_{b,t}\cdot C_{b,t}^{v}}
\end{equation}
where $C_{a,t}^{v}$ denotes the vault wide collateral quantity in asset $a$. The effective coverage ratio is:
\begin{equation}
\widetilde{\mathrm{ACR}}_t^{v} \equiv \mathrm{ACR}_t^{v}\cdot\big(1-\bar{\varepsilon}_t^{v}\big)
\end{equation}
A vault shortfall can arise despite an apparent oracle based cushion when:
\begin{equation}
\bar{\varepsilon}_t^{v} > 1-\frac{1}{\mathrm{ACR}_t^{v}}
\end{equation}
Credit risk measurement must therefore treat $\bar{\varepsilon}_t^{v}$ as a first order state variable (endogenous to stress) and report coverage in terms of $\widetilde{\mathrm{ACR}}_t^{v}$.
\end{proposition}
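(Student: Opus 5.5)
The plan is to reduce the statement to an accounting identity linking oracle-valued coverage to execution-valued coverage, and then read off the shortfall condition. Let $A_t^{v,\mathrm{book}}=\sum_{a\in\mathcal{C}^{v}}P_{a,t}C_{a,t}^{v}$ denote the oracle-marked collateral available to the vault, so that $\mathrm{ACR}_t^{v}=A_t^{v,\mathrm{book}}/L_t^{v}$ is the coverage the protocol reports. The realized proceeds, by Definition~\ref{def:recovery} aggregated over accounts and with fees either folded into $\widetilde{P}_{a,t}$ or set aside for the moment, are $\widetilde{A}_t^{v}=\sum_{a}\widetilde{P}_{a,t}C_{a,t}^{v}$. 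The first step is to write
\[
\widetilde{A}_t^{v}=\sum_{a}P_{a,t}C_{a,t}^{v}\,\frac{\widetilde{P}_{a,t}}{P_{a,t}}=A_t^{v,\mathrm{book}}\sum_{a}\omega_{a,t}^{v}\bigl(1-\varepsilon_{a,t}\bigr)=A_t^{v,\mathrm{book}}\bigl(1-\bar{\varepsilon}_t^{v}\bigr).
\]
This uses only the definition of the weights $\omega_{a,t}^{v}$ and the fact that $\sum_a\omega_{a,t}^{v}=1$. Dividing by $L_t^{v}$ gives $\widetilde{A}_t^{v}/L_t^{v}=\mathrm{ACR}_t^{v}(1-\bar{\varepsilon}_t^{v})=\widetilde{\mathrm{ACR}}_t^{v}$. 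The effective coverage ratio is therefore exactly the coverage on realized liquidation value, which is the quantity entering $\Delta_t^{v}$ in Definition~\ref{def:loss_function}.

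The second step is the shortfall characterization. By Definition~\ref{def:loss_function} with $A_t^{v}=\widetilde{A}_t^{v}$, we have $\Delta_t^{v}>0$ if and only if $\widetilde{\mathrm{ACR}}_t^{v}<1$, that is, $\mathrm{ACR}_t^{v}(1-\bar{\varepsilon}_t^{v})<1$. Under Assumption~\ref{ass:A0}, each $\varepsilon_{a,t}\in[0,1)$, so $\bar{\varepsilon}_t^{v}\in[0,1)$. Taking $\mathrm{ACR}_t^{v}>1$ (the apparent cushion), dividing by $\mathrm{ACR}_t^{v}>0$ and rearranging gives $\bar{\varepsilon}_t^{v}>1-1/\mathrm{ACR}_t^{v}$. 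Since $1-1/\mathrm{ACR}_t^{v}\in(0,1)$, the condition is attainable within the admissible range of $\bar{\varepsilon}_t^{v}$. This establishes that a shortfall can coexist with oracle coverage above one, which is the ``can arise'' claim. I would exhibit this with a one-asset example, taking $\mathrm{ACR}=1.1$ and $\varepsilon=0.1$.

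The main obstacle is not algebraic but one of interpretation. The statement calls $A_t^{v}$ ``computed using oracle prices'', whereas Definition~\ref{def:loss_function} defines $A_t^{v}$ as realized liquidation value. I would resolve this explicitly by distinguishing $A_t^{v,\mathrm{book}}$, as in the discussion after Definition~\ref{def:erc4626}, from realized $A_t^{v}$.

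Fees and non-collateral assets (idle cash) also need handling. Idle cash carries $\varepsilon=0$ and can be included as an asset with zero deviation. Fees enter as a further nonnegative deduction, so including them only strengthens the condition: the stated inequality remains sufficient for a shortfall. I would also note, via Remark~\ref{rem:A0_V5_interaction}, that $\widetilde{P}_{a,t}$ is read as the execution price conditional on viability.

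The concluding measurement claim then follows because $\bar{\varepsilon}_t^{v}$ enters multiplicatively and one-for-one into solvency. Reporting $\mathrm{ACR}_t^{v}$ alone would misclassify every state in the band $1<\mathrm{ACR}_t^{v}<1/(1-\bar{\varepsilon}_t^{v})$.
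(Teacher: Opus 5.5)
Your proposal is correct and follows the paper's own reasoning. The paper gives no standalone proof: it takes $\widetilde{\mathrm{ACR}}_t^{v}=\mathrm{ACR}_t^{v}(1-\bar{\varepsilon}_t^{v})$ as a definition and reads the condition off $\widetilde{\mathrm{ACR}}_t^{v}<1$, so your weight identity $\widetilde{A}_t^{v}=A_t^{v,\mathrm{book}}(1-\bar{\varepsilon}_t^{v})$ supplies the link to $\Delta_t^{v}$ that the paper leaves implicit. That identity holds when the oracle-valued numerator of $\mathrm{ACR}_t^{v}$ coincides with the weighting base $\sum_{a\in\mathcal{C}^{v}}P_{a,t}C_{a,t}^{v}$ and that whole stock is valued at execution prices. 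Keep your own definition of $A_t^{v,\mathrm{book}}$ rather than the paper's post-Definition~\ref{def:erc4626} convention $A_t^{v,\mathrm{book}}\equiv L_t^{v}$, which would force $\mathrm{ACR}_t^{v}=1$ and remove the very cushion the proposition is about.
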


The deviation $\bar{\varepsilon}_t^{v}$ is not a static haircut, typically increases in liquidation pressure and network frictions, implying wrong way risk between the severity of liquidation triggers and execution quality (see Remark~\ref{rem:A0_V5_interaction} for the interaction with execution non viability under congestion). When multiple accounts simultaneously cross $\mathrm{HF}_{u,t}^{v}<1$, both market depth and blockspace become scarce, mechanically worsening $\widetilde{P}_{a,t}/P_{a,t}$. This stress endogeneity is absent from standard TradFi coverage tests, where bid--ask spreads and liquidation capacity are often treated through ex ante haircuts rather than as a state dependent execution process. For Level 1 measurement, coverage must be stress conditioned and cannot be summarized by a single point in time oracle mark. Calibrating the function dependence of $\bar{\varepsilon}_t^{v}$ on market and network states should be protocol and venue specific.

In classical portfolio credit models, loss given default (LGD) is frequently treated as an exogenous parameter (or drawn from a distribution) that is independent of the default event conditional on systematic factors, enabling tractable loss distribution calculations. For an exposure with exposure at default $\mathrm{EAD}$, the realized loss is commonly written as:
\begin{equation}
\mathrm{Loss} = \mathrm{EAD}\cdot \mathrm{LGD}\cdot \mathbf{1}\{\tau\le T\}
\end{equation}
where $\tau$ is default time. This construction relies on the structural assumption that recovery is primarily determined by legal and asset liquidation processes not mechanically coupled to contemporaneous default mass in a way that breaks tractability, and that market impact of collateral liquidation is second order at the exposure scale. A canonical asymptotic loan portfolio treatment is \citet{Vasicek1987} single factor framework. \citet{Gordy2000} analysis of CreditMetrics and CreditRisk$^{+}$ further shows that the shape of the systemic factor distribution (not merely its mean and variance) determines whether tail losses under stress are bounded or diverge. This is directly analogous to the vault setting, where the kurtosis of the liquidation mass distribution governs whether expected shortfall scales linearly or superlinearly with aggregate collateral shock severity. \citet{AltmanRestiSironi2004} document that this independence assumption is universal across the major credit VaR architectures. CreditMetrics, CreditRisk$^{+}$, CreditPortfolioView and KMV each treat recovery either as a fixed parameter or as a beta distributed variable with zero correlation to default intensity. This design choice persists despite two bodies of contrary evidence they review. \citet{Merton1974} framework implies analytically that PD and RR are inversely related through firm value (higher leverage increases PD while reducing expected recovery and higher asset volatility does likewise), and empirical studies across US bond markets report negative correlations between aggregate default rates and recovery rates of approximately $-0.22$. When the true negative PD-RR correlation is present but the independence assumption is maintained, expected loss and VaR measures are underestimated by approximately 30 per cent. The paper's Proposition~\ref{prop:recovery_endogeneity} below identifies the DeFi specific mechanism, i.e., execution price impact through AMM depth depletion during liquidation clusters, that induces a structural analog of this negative correlation in an environment where recovery is determined in seconds and where there is no post reorganization process to partially restore creditor value.

In DeFi, recovery is the realized liquidation proceeds net of execution frictions (Definition~\ref{def:recovery}) and, therefore, an endogenous function of liquidation volume and available onchain liquidity. For account $u$ in vault $v$, define the realized recovery rate as:
\begin{equation}
\mathrm{RR}_{u,t}^{v} \equiv
\frac{\mathcal{R}_{u,t}^{v}}{\sum_{d\in\mathcal{D}^{v}} P_{d,t}\cdot\Delta B_{u,d,t}}
\qquad \mathrm{RR}_{u,t}^{v}\in(-\infty,1]
\label{eq:recovery_rate}
\end{equation}
where $\mathcal{R}_{u,t}^{v}$ denotes the realized recovery in unit of account and the denominator is the oracle valued debt repaid in the liquidation window. The analogy holds because $\mathrm{RR}_{u,t}^{v}$ is the DeFi counterpart to $(1-\mathrm{LGD})$. However, unlike in TradFi, $\mathrm{RR}_{u,t}^{v}$ is state dependent and co-varies with liquidation intensity. Negative recovery ($\mathrm{RR}<0$) occurs when execution costs exceed gross liquidation proceeds. The vault shortfall function $\Delta_t^{v} = (L_t^{v} - A_t^{v})^{+}$ absorbs this case automatically since $A_t^{v}$ is then reduced by the excess cost.

\begin{proposition}[Endogenous recovery and liquidation mass]
\label{prop:recovery_endogeneity}
Under Definition~\ref{def:recovery}, suppose collateral liquidations clear through an onchain venue with price impact. Let $Q_{a,t}^{\mathrm{liq}}$ denote aggregate liquidation sell volume in collateral asset $a$ over a short window around $t$ and let $D_{a,t}^{\mathrm{pool}}$ denote the corresponding effective pool depth. Assume execution prices satisfy:
\begin{equation}
\widetilde{P}_{a,t} = P_{a,t} - f\!\left(Q_{a,t}^{\mathrm{liq}}, D_{a,t}^{\mathrm{pool}}\right)
\qquad
\frac{\partial f}{\partial Q}>0,\quad \frac{\partial f}{\partial D}<0
\label{eq:impact_function}
\end{equation}
for a deterministic impact function $f(\cdot,\cdot)$. Let $\mathcal{D}_t^{v} \equiv \sum_{u}\sum_{d\in\mathcal{D}^{v}} P_{d,t}\cdot\Delta B_{u,d,t}$ denote vault wide debt liquidated in the window, then generically:
\begin{equation}
\frac{\partial\,\mathrm{RR}_{u,t}^{v}}{\partial\,\mathcal{D}_t^{v}} < 0
\end{equation}
so that conditional on a liquidation cluster, recoveries deteriorate as liquidation mass increases. Therefore, vault level expected shortfall is superlinear in liquidation mass and credit risk measurement must model the joint distribution of liquidation clustering and depth.
\end{proposition}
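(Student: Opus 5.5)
The argument is a chain-rule computation through the impact function (\ref{eq:impact_function}), followed by an integration step to get superlinearity. The first task is to link vault-wide liquidated debt $\mathcal{D}_t^{v}$ to aggregate sell volume $Q_{a,t}^{\mathrm{liq}}$. Under the protocol's liquidation rule, liquidators repay $\Delta B_{u,d,t}$ and receive collateral worth that repayment grossed up by the liquidation incentive $\lambda_a^{v}\in\Theta_t^{v}$. So each account's seized amount is
\[
\Delta C_{u,a,t}=\pi_{u,a}(1+\lambda_a^{v})\,\frac{\sum_d P_{d,t}\Delta B_{u,d,t}}{P_{a,t}},
\]
where $\pi_{u,a}$ are the account's collateral-mix weights. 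Summing over accounts gives $Q_{a,t}^{\mathrm{liq}}=\sum_u \Delta C_{u,a,t}$. This is linear in $\mathcal{D}_t^{v}$ with a strictly positive slope $\kappa_a\equiv\partial Q_{a,t}^{\mathrm{liq}}/\partial \mathcal{D}_t^{v}>0$ whenever asset $a$ carries positive weight in the cluster. The derivative is taken holding fixed the composition of the liquidating cluster, the oracle prices $P_{a,t}$ and the depth $D_{a,t}^{\mathrm{pool}}$. This fixed-composition assumption is exactly what ``generically'' will refer to.

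Next, I substitute into Definition~\ref{def:recovery} and (\ref{eq:recovery_rate}). For a fixed account $u$ with fixed $\Delta B_{u,d,t}$ and $\Delta C_{u,a,t}$, the denominator of $\mathrm{RR}_{u,t}^{v}$ does not depend on $\mathcal{D}_t^{v}$, so only the numerator moves:
\[
\frac{\partial \mathrm{RR}_{u,t}^{v}}{\partial \mathcal{D}_t^{v}}
=\frac{1}{\sum_d P_{d,t}\Delta B_{u,d,t}}\Big(-\sum_{a}\Delta C_{u,a,t}\,\partial_Q f\,\kappa_a-\partial_{\mathcal{D}}\mathrm{Fees}_{u,t}\Big).
\]
The first term is strictly negative because $\partial f/\partial Q>0$, $\kappa_a>0$ and $\Delta C_{u,a,t}>0$ for at least one $a$. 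For the fee term, I take $\mathrm{Fees}_{u,t}$ weakly increasing in liquidation mass, which is the gas/MEV stress dependence stated in Definition~\ref{def:recovery}; constant fees also suffice. The sign claim then follows. The negativity does not rely on $\partial f/\partial D<0$. That condition enters only if depth is endogenous: if $D_{a,t}^{\mathrm{pool}}$ is depleted by the cluster, the extra term $\partial_D f\cdot \partial D/\partial\mathcal{D}_t^{v}$ is positive inside $f$ and so reinforces the decline in $\mathrm{RR}$.

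For the superlinearity claim, I write the vault shortfall from the cluster as $\Delta^{v}(\mathcal{D})=\big(\mathcal{D}-\sum_u\mathcal{R}_{u}^{v}(\mathcal{D})\big)^{+}$. Its gross part is $\mathcal{D}\,(1-\overline{\mathrm{RR}}(\mathcal{D}))$, where $\overline{\mathrm{RR}}$ is the debt-weighted average recovery rate. Scaling all positions proportionally, $\overline{\mathrm{RR}}$ is strictly decreasing in $\mathcal{D}$ by the previous step. Then $\Delta^{v}(\mathcal{D})/\mathcal{D}$ is nondecreasing, and strictly increasing once the positive part is active, which gives superlinearity, $\Delta^{v}(c\mathcal{D})>c\,\Delta^{v}(\mathcal{D})$ for $c>1$. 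Taking conditional expectations over liquidation-mass scenarios and tail-conditioning preserves this, which yields the statement about vault-level expected shortfall. The modelling conclusion about the joint law of clustering and depth follows because $\mathbb{E}[\Delta^{v}]$ depends on $f(Q,D)$ evaluated jointly, not on the marginals of $Q$ and $D$ separately.

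The main obstacle is making ``generically'' precise. Varying $\mathcal{D}_t^{v}$ while holding account $u$'s own liquidation fixed requires an explicit ceteris paribus convention, and the positive part in $\Delta^{v}$ can flatten the result on a region where proceeds still exceed debt, for example when the incentive $\lambda$ is large. I would handle this by stating the result on the set where the shortfall is strictly positive and $\kappa_a>0$, and noting that strictness of superlinearity can fail only on a measure-zero boundary or where impact is nil.
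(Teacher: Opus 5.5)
Your first step is the paper's own argument, made explicit. The paper also reasons that a larger $\mathcal{D}_t^{v}$ raises $Q_{a,t}^{\mathrm{liq}}$, hence raises $f$, and hence lowers $\widetilde{P}_{a,t}$, $\mathcal{R}_{u,t}^{v}$ and $\mathrm{RR}_{u,t}^{v}$. Your slope $\kappa_a$, the fixed-composition reading of ``generically'', the treatment of $\mathrm{Fees}_{u,t}$, and the remark that $\partial f/\partial D<0$ matters only when the cluster depletes depth are refinements the paper leaves implicit.

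The superlinearity step is where you genuinely diverge. The paper invokes the linear impact approximation of Section~\ref{subsec:V2}, under which $A_t^{v}$ is affine and decreasing in $\mathcal{D}_t^{v}$. It composes this with the convex map $A\mapsto(L_t^{v}-A)^{+}$ to get convexity of $\Delta_t^{v}$ in $\mathcal{D}_t^{v}$, and then appeals to Jensen for the conditional expectation. You instead derive superlinearity from the first claim itself. The loss per unit of liquidated debt, $(1-\overline{\mathrm{RR}}(\mathcal{D}))^{+}$, is nondecreasing, so $\Delta^{v}(c\mathcal{D})\ge c\,\Delta^{v}(\mathcal{D})$ for $c>1$. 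This pointwise inequality survives expectations under a law that does not depend on $\mathcal{D}$, and survives tail conditioning by monotonicity and positive homogeneity of expected shortfall.

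Your route needs nothing about $f$ beyond the stated sign conditions, and it makes the ``Therefore'' in the statement a real implication. The curvature comes from deteriorating recovery rather than, as in the paper's argument, only from the kink of the positive part. The paper's route yields the stronger shape property, convexity, which is the notion later asserted in Proposition~\ref{prop:M2_superlinearity}. Together with $\Delta^{v}(0)=0$, convexity implies your inequality but not conversely. The paper gets it only by importing an assumption the proposition does not state.

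One step needs repair. You invoke ``by the previous step'' for a different comparative static. Your derivative holds account $u$'s own $\Delta B_{u,d,t}$ and $\Delta C_{u,a,t}$ fixed, while the superlinearity step scales every position, including $u$'s. The impact part of $\mathrm{RR}_{u,t}^{v}$ is scale free in the account's own position except through $\widetilde{P}_{a,t}$, so it still falls. The fee part, however, becomes $\mathrm{Fees}_{u,t}/\sum_{d}P_{d,t}\Delta B_{u,d,t}$. A fixed gas cost per liquidation makes this ratio shrink with scale, which pushes $\mathrm{RR}_{u,t}^{v}$ up. So ``constant fees also suffice'' holds for your first step but not for the second. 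You need fees to grow at least proportionally with liquidated debt, or to be dropped, as the paper's proof implicitly does.

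Also, the paper reserves $\lambda_a$ for the impact coefficient and $\pi$ for the liquidation bonus, so your $\lambda_a^{v}$ and $\pi_{u,a}$ clash with its notation.
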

Liquidation incentives, e.g., a liquidation bonus $\pi$, are intended to offset execution costs and induce participation.

Liquidation incentives, e.g., a liquidation bonus $\pi$, are intended to offset execution costs and induce participation. In TradFi, analogous incentives for distressed asset sales are set by a trustee or servicer with discretion over timing and route. In DeFi they are fixed protocol parameters embedded in $\Theta_t^v$, applied uniformly regardless of market depth or network congestion at the time of execution. This creates a structural tension since a bonus calibrated to attract liquidators under normal conditions may be insufficient under stress, when gas costs and slippage are elevated, while a bonus generous enough to guarantee participation under stress overcompensates liquidators at the expense of vault solvency. The indifference condition derived below formalizes this trade off.
\begin{proof}[Proof of Proposition~\ref{prop:recovery_endogeneity}]
By \eqref{eq:impact_function}, $\widetilde{P}_{a,t} = P_{a,t} - f(Q_{a,t}^{\mathrm{liq}}, D_{a,t}^{\mathrm{pool}})$ with $\partial f/\partial Q > 0$. An increase in $\mathcal{D}_t^{v}$ increases aggregate liquidation volume $Q_{a,t}^{\mathrm{liq}}$ (more debt is being liquidated, requiring more collateral to be sold), which increases $f(\cdot)$, which in turn strictly decreases $\widetilde{P}_{a,t}$. By \eqref{def:recovery}, this strictly decreases $\mathcal{R}_{u,t}^{v}$, and by \eqref{eq:recovery_rate} this decreases $\mathrm{RR}_{u,t}^{v}$, establishing $\partial \mathrm{RR}_{u,t}^{v}/\partial \mathcal{D}_t^{v} < 0$. For the superlinearity claim, since $A_t^{v}$ is decreasing in each execution deviation $f(Q_{a,t}^{\mathrm{liq}},D_{a,t}^{\mathrm{pool}})$ and the shortfall $\Delta_t^{v}=(L_t^v - A_t^v)^+$ is convex in $A_t^v$, and $A_t^v$ is decreasing linearly in $\mathcal{D}_t^v$ under the linear impact approximation, $\Delta_t^v$ is convex in $\mathcal{D}_t^v$. Therefore, $\mathbb{E}[\Delta_t^v \mid \mathcal{D}_t^v]$ is superlinear in $\mathcal{D}_t^v$ by Jensen's inequality applied to the convex shortfall function.
\end{proof}

However, they also increase collateral seized per unit debt repaid, which can amplify $Q_{a,t}^{\mathrm{liq}}$ and worsen $f(\cdot)$. A liquidator repays debt $P_{d,t}\cdot\Delta B_{u,d,t}$, receives $(1+\pi)$ times that value in collateral at oracle price, and sells the collateral at execution price $\widetilde{P}_{a,t}$. The exact indifference condition for a representative liquidator facing proportional costs is:
\begin{equation}
\pi \;\ge\; \pi^\ast_t \;\equiv\;
\frac{
  \dfrac{g_t\cdot Q_{t}^{\mathrm{gas}}+\mathrm{MEV}_{t}^{\mathrm{cost}}}
        {P_{d,t}\cdot\Delta B_{u,d,t}}
  +\left(1-\dfrac{\widetilde{P}_{a,t}}{P_{a,t}}\right)
}{
  \dfrac{\widetilde{P}_{a,t}}{P_{a,t}}
}
\;=\;
\frac{C_t + \varepsilon_{a,t}}{1 - \varepsilon_{a,t}}
\label{eq:liquidator_indifference}
\end{equation}
where $g_t$ denotes the gas price, $Q_t^{\mathrm{gas}}$ denotes gas units consumed, $\mathrm{MEV}_{t}^{\mathrm{cost}}$ denotes the opportunity and priority cost of inclusion, $C_t \equiv (g_t\cdot Q_{t}^{\mathrm{gas}}+\mathrm{MEV}_{t}^{\mathrm{cost}})/(P_{d,t}\cdot\Delta B_{u,d,t})$ is the total execution cost per unit debt repaid, and $\varepsilon_{a,t} \equiv 1 - \widetilde{P}_{a,t}/P_{a,t}$ is the realised slippage rate, all terms are state dependent. The denominator $(1-\varepsilon_{a,t})$ reflects that the bonus $\pi$ is itself applied to collateral subject to slippage, so the required bonus is strictly larger than the sum of costs and slippage alone. The threshold $\pi^\ast_t$ should be empirically calibrated by chain, liquidation route and market conditions. The design implication is that liquidation incentives must be evaluated jointly with depth and congestion models.

\begin{remark}[Leverage loops and correlated liquidation clustering]
\label{rem:leverage_loop}
A leverage loop is a recursive borrowing strategy in which a borrower deposits collateral asset $a$, borrows asset $b$, converts $b$ back to $a$, and redeposits, repeating $n$ times to amplify yield. For an initial loan to value ratio $\mathrm{LTV}_0$, the total collateral exposure is:
\begin{equation}
\Lambda(\mathrm{LTV}_0,n)=\sum_{j=0}^{n}\mathrm{LTV}_0^j
= \frac{1-\mathrm{LTV}_0^{n+1}}{1-\mathrm{LTV}_0}
\;\xrightarrow{n\to\infty}\; \frac{1}{1-\mathrm{LTV}_0}
\label{eq:leverage_multiplier}
\end{equation}
For $\mathrm{LTV}_0=0.80$, the theoretical maximum leverage multiplier is $5\times$. Because all iterations share the same underlying collateral value, every recursive layer's health factor deteriorates simultaneously when prices fall. The vault wide consequence is that concentrated leverage loop positions generate a liquidation mass $\mathcal{D}_t^{v}$ that is $\Lambda(\mathrm{LTV}_0,n)$ times larger than the initial collateral notional, amplifying the impact channel in Proposition~\ref{prop:recovery_endogeneity} substantially beyond what single layer borrower analysis would indicate. A key monitoring variable is the correlated leverage ratio:
\begin{equation}
\mathrm{CLR}_t^{v} \equiv
\frac{\sum_{a\in\mathcal{C}^{v},\,b\in\mathcal{D}^{v}}
w_{a,b,t}\cdot\mathrm{Var}(r_{a,t}-r_{b,t})}
{\sum_{a,b} w_{a,b,t}\cdot\mathrm{Var}(r_{a,t})}
\label{eq:CLR}
\end{equation}
where $w_{a,b,t}$ is the share of total vault debt backed by collateral $a$ and borrowed against debt asset $b$, $r_{a,t}$ denotes the log return of asset $a$ over the base interval and $\mathrm{Var}(\cdot)$ denotes variance computed over a rolling window of length $W$.\footnote{$\mathrm{CLR}_t^{v}$ is a diagnostic ratio because $\mathrm{Var}(r_{a,t}-r_{b,t}) = \mathrm{Var}(r_{a,t}) + \mathrm{Var}(r_{b,t}) - 2\,\mathrm{Cov}(r_{a,t},r_{b,t})$, the numerator can exceed the denominator whenever collateral and debt assets have low or negative return correlation, which is the economically relevant case for loop strategies. $\mathrm{CLR}_t^{v}$ should be reported in its natural range and interpreted as a severity signal.} When $\mathrm{CLR}_t^{v}$ is high, the collateral to debt price ratio is volatile, i.e., the spread between collateral and debt asset returns is large and subject to widening in stress, creating endogenous liquidation clustering and amplifying realized slippage. Empirical calibration of the loop amplification channel requires joint estimation of $\mathrm{CLR}_t^{v}$ and liquidation clustering in stress.
\end{remark}

In banking and money like liabilities, funding liquidity risk arises from maturity mismatch. Depositors can withdraw on demand while assets are illiquid, generating run equilibria under strategic complementarities. In the canonical \citet{DiamondDybvig1983} setting, a bank invests in a long asset and offers demandable claims. Runs occur when withdrawals exceed liquid reserves, even if the bank is solvent in present value terms. A reduced form representation of the liquidity constraint is $W_t \le \mathrm{Cash}_t$, with $W_t$ denoting aggregate withdrawals and $\mathrm{Cash}_t$ immediately available liquidity. The framework relies on partial information and belief heterogeneity to support multiple equilibria, and on the availability of lender of last resort facilities and deposit insurance that can eliminate run equilibria \citep{MorrisShin1998, GoldsteinPauzner2005}.

In DeFi lending vaults with pooled liquidity, the direct analog is utilization where withdrawals are feasible only to the extent that liquidity is not fully lent out. Let $B_t^{v}$ denote total borrows of the vault and let $D_t^{v}$ denote total deposits, then the utilization ratio is:
\begin{equation}
U_t^{v} \equiv \frac{B_t^{v}}{D_t^{v}}\in[0,1]
\label{eq:utilization}
\end{equation}
so that immediate withdrawal capacity is decreasing in $U_t^{v}$ and withdrawals are mechanically blocked when $U_t^{v}=1$ (up to reserves and protocol specific buffers). The analogy holds because both banking and DeFi vaults issue short term withdrawal claims against longer duration lending assets, creating funding liquidity exposure.

\begin{proposition}[Full information run dynamics without a backstop]
\label{prop:liquidity_run}
Let depositor $i$ in vault $v$ choose a withdrawal stopping time $\tau_i$ adapted to $\{\mathcal{F}_t\}$. Define depositor $i$'s conditional expected recovery rate on withdrawal at $t$ as $\mathbb{E}[\mathrm{RR}_{i,t}^{v}\mid\mathcal{F}_t]$, where $\mathrm{RR}_{i,t}^{v}$ is induced by the vault wide shortfall process (Definition~\ref{def:loss_function}) and the withdrawal rule set. Consider a threshold policy:
\begin{equation}
\tau_i^\ast \equiv \inf\Big\{t:\ \mathbb{E}[\mathrm{RR}_{i,t}^{v}\mid\mathcal{F}_t] < \underline{r}\Big\}
\label{eq:run_stopping}
\end{equation}
for some depositor specific tolerance $\underline{r}\in(0,1)$. If onchain information symmetry implies $\mathcal{F}_t$ is effectively common across depositors, i.e., $\mathcal{F}_t^{(i)}=\mathcal{F}_t^{(j)}$ for all $i,j$, then the withdrawal decision exhibits strategic complementarity and the probability of a run conditional on a negative information shock is not damped by heterogeneous beliefs:
\begin{equation}
\Pr\!\left(\sum_{i} \mathbf{1}\{\tau_i^\ast \le t\} \ge \kappa \cdot D_t^{v} \,\Big|\, \mathcal{F}_t\right)
\ \text{is non degenerate for large }\kappa
\end{equation}
while there is no external liquidity backstop to relax the mechanical constraint when $U_t^{v}\uparrow 1$. Credit risk measurement must treat run risk as an equilibrium object under common information and quantify losses under withdrawal clustering.
\end{proposition}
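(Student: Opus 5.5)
The argument has three steps: establish strategic complementarity, show that common information collapses every depositor's threshold rule onto the same public state, and show that nothing external truncates the resulting run mass. For complementarity, I will write depositor $i$'s expected recovery on withdrawal at $t$ as a function of the public state and of the mass $m_t \equiv \sum_j \mathbf{1}\{\tau_j^\ast \le t\}/D_t^v$ of depositors who have already withdrawn. Under the withdrawal rule set, a withdrawal is served from available cash $(1-U_t^v)D_t^v$ at par. Once cash is exhausted, remaining claims are served out of realized liquidation value $A_t^v$. The shortfall $\Delta_t^v=(L_t^v-A_t^v)^+$ of Definition~\ref{def:loss_function} is then borne by the residual depositors. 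Proposition~\ref{prop:recovery_endogeneity} says that forced deleveraging to meet withdrawals lowers $\widetilde{P}_{a,t}$, and hence $A_t^v$. Combining these, $m\mapsto \mathbb{E}[\mathrm{RR}_{i,t}^v\mid\mathcal{F}_t, m]$ is nonincreasing, so each depositor's incentive to stop rises with the others' stopping. This is the strategic complementarity the statement asserts.

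For the common-information step, I will use that $\mathcal{F}_t^{(i)}=\mathcal{F}_t$ for all $i$, so the conditional expectation in \eqref{eq:run_stopping} is the same $\mathcal{F}_t$-measurable process $R_t$ for every depositor. Each $\tau_i^\ast$ is therefore the first passage of one common process below a threshold $\underline{r}_i$. Ordering depositors by tolerance, the run mass $m_t$ equals the deposit share $G(R_t)$ of depositors with $\underline{r}_i > R_t$. Two features follow. First, $m_t$ is a deterministic, monotone function of a single public state, with no idiosyncratic signal noise to average out. Second, by the previous step, a negative shock to $R_t$ raises $m_t$, which lowers $R_t$ further. I will then contrast this with the Morris--Shin/Goldstein--Pauzner setting. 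There, private signals $x_i=\theta+\sigma\eta_i$ spread the withdrawal mass smoothly over $\theta$, giving a dampened, unique threshold equilibrium. Here the effective $\sigma$ is zero. Whenever the tolerance distribution has mass concentrated near a common value (for instance a single institutional cluster or homogeneous $\underline{r}$), a shock pushing $R_t$ below that value produces a jump in $m_t$ of size at least the concentrated mass. This gives a positive conditional probability that $m_t\ge\kappa$ for $\kappa$ up to that mass, which is the non-degeneracy claim. To handle the fixed point cleanly, I will take the largest fixed point of $m\mapsto G(R(m))$ via Tarski's theorem, since the map is monotone on $[0,1]$.

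Finally, I will record that the vault rule set $\mathcal{E}^v$ contains no exogenous liquidity injection. As $U_t^v\uparrow 1$ the available cash $(1-U_t^v)D_t^v\to 0$, and the only relief comes from endogenous repayments, liquidations, and rate increases. By the first step these do not restore $R_t$ on the run path, so the constraint $W_t\le \mathrm{Cash}_t$ binds mechanically rather than being relaxed.

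I expect the main obstacle to be the non-degeneracy step. The statement "non degenerate for large $\kappa$" is only meaningful given a nondegenerate distribution of the public shock and some concentration in the tolerances $\underline{r}_i$. I would make these explicit regularity assumptions, namely a continuous shock law with full support and a tolerance distribution with an atom or steep region, and prove the claim under them rather than in full generality.
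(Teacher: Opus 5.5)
The paper states Proposition~\ref{prop:liquidity_run} without proof. It only qualifies it afterwards, in Remark~\ref{rem:common_info}, as a worst-case modeling assumption, so your argument has to stand on its own. Two parts of it are sound: you locate the complementarity in the withdrawal rule rather than in the information structure (common information by itself implies no complementarity), and your largest-fixed-point selection matches the paper's worst-case reading.

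\textbf{The non-degeneracy step fails as written.} The obstacle you anticipated is sharper than missing regularity. Each $\tau_i^\ast$ is a stopping time of the common filtration, so $\{\tau_i^\ast\le t\}\in\mathcal{F}_t$ for every $i$, and $D_t^v$ is public. Hence the event $\{\sum_i\mathbf{1}\{\tau_i^\ast\le t\}\ge\kappa D_t^v\}$ lies in $\mathcal{F}_t$, and its probability conditional on $\mathcal{F}_t$ is an indicator, i.e.\ degenerate almost surely. Your own remark that $m_t$ is a deterministic function of the public state makes this explicit, so ``a positive conditional probability that $m_t\ge\kappa$'' cannot hold when you condition on $\mathcal{F}_t$. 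Your full-support shock law only does work if you condition on pre-shock information and prove something like $0<\Pr(m_t\ge\kappa\mid\mathcal{F}_s)<1$ for $s<t$. You must state that reformulation explicitly; ironically, it is heterogeneous private information that would make the $\mathcal{F}_t$-conditional probability nondegenerate. In the same reformulation, make two further fixes:
\begin{enumerate}
\item Replace the head count by a deposit-weighted share with $\kappa\in(0,1]$, because $\sum_i\mathbf{1}\{\cdot\}$ and $\kappa D_t^v$ are in different units.
\item Define $m_t$ through the running minimum $\inf_{s\le t}R_s$ rather than $R_t$, since $\tau_i^\ast$ is a first-passage time.
\end{enumerate}

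\textbf{Monotonicity of $R$ in $m$.} You derive it from Proposition~\ref{prop:recovery_endogeneity} via ``forced deleveraging to meet withdrawals.'' In the paper's lending vault, loans are not callable and withdrawals are simply gated at $U_t^v=1$. The liquidation volume $Q_{a,t}^{\mathrm{liq}}$ is generated by health-factor triggers, not by redemptions; the only link is the slow rate-kink channel discussed after the proposition. The monotonicity follows directly from sequential service instead. A par withdrawal $W$ paid from cash lowers $L_t^v$ and $A_t^v$ one for one, so $\Delta_t^v=(L_t^v-A_t^v)^+$ is unchanged while the residual loss rate $\Delta_t^v/(L_t^v-W)$ weakly rises. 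Meanwhile $B_t^v/(D_t^v-W)$ rises mechanically until the remaining depositors are gated.

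\textbf{Large $\kappa$.} The atom argument gives a jump in run mass only up to the atom's size. It delivers ``large $\kappa$'' only if you assume a large atom, which essentially assumes the conclusion. You need a cascade condition instead, for example $G(R(m))>m$ on $[m_0,\kappa)$ once the shock pushes $R$ below the atom. Since $m_0$ is then a post-fixed point, Tarski forces the largest fixed point to exceed $\kappa$.

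\textbf{The Morris--Shin contrast.} Do not rest it on smoothness: as $\sigma\to 0$ the global-games withdrawal mass also converges to a step function in the fundamental. What exact common knowledge changes is that uniqueness is lost. That is why your largest-fixed-point selection, not the vanishing noise, is doing the real work.
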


\begin{remark}[Common information as a worst case bound]
\label{rem:common_info}
Proposition~\ref{prop:liquidity_run} assumes that onchain information symmetry renders $\mathcal{F}_t^{(i)} = \mathcal{F}_t^{(j)}$ for all depositors $i,j$, which is is a worst case modeling assumption. In practice, depositors differ in technical sophistication, monitoring frequency and capacity to interpret onchain signals, partially restoring the belief heterogeneity that \citet{MorrisShin1998} show dampens run severity. In \citet{AppelGrennan2025} findings on the Celsius collapse, institutional depositors exited 1.5-4.7 percentage points earlier than comparable retail users despite identical onchain information access, consistent with informational and analytical heterogeneity. The implication is that Proposition~\ref{prop:liquidity_run}'s run probability is an upper bound on run severity under common information. In practice, heterogeneity in depositor sophistication provides a partial dampening mechanism not captured by the formal proposition. Therefore, the proposition should be interpreted as establishing the structural worst case and motivating V3 as a conservative liquidity stress metric.
\end{remark}

Utilization based interest rate models with steep kinks can partially mitigate runs by increasing the opportunity cost of remaining borrowed liquidity and incentivizing repayment, thereby reducing $B_t^{v}$ and lowering $U_t^{v}$. However, the effectiveness of this mechanism is limited when borrowers face binding constraints, e.g., they are already near liquidation, or when network congestion raises repayment costs precisely when the kink is activated. Moreover, raising borrow rates can worsen borrower solvency and increase liquidation trigger frequency, which feeds back into recovery endogeneity (Proposition~\ref{prop:recovery_endogeneity}). 
%The stabilizing effect is without protocol specific empirical evidence on repayment elasticity under stress. 
Unlike deposit insurance, rate kinks do not create exogenous liquidity, they only reallocate incentives within the existing onchain balance sheet.

Proposition~\ref{prop:liquidity_run} treats all vault assets not currently
lent out as immediately withdrawable. This assumption fails for vaults holding liquid staking tokens when large scale redemptions require traversing the Ethereum beacon chain validator exit queue. Let $Q_{\mathrm{exit}}(t)$ denote the number of validator exit requests queued at time $t$ and $R_{\mathrm{churn}}$ the protocol's maximum exit processing rate in validators per epoch, then the expected exit duration for a new request is:
\begin{equation}
\tau_{\mathrm{exit}}(t)=\frac{Q_{\mathrm{exit}}(t)}{R_{\mathrm{churn}}}
\quad(\text{in epochs})
\label{eq:exit_queue}
\end{equation}
Under stress, $Q_{\mathrm{exit}}(t)$ grows as stakers compete to exit. At
queue depths of $10^5$ validators and base churn, $\tau_{\mathrm{exit}}$ can exceed multiple weeks. During this window, a vault cannot redeem staked collateral at par through the canonical path and must sell at a
secondary market discount or hold an illiquid position while accruing debt
interest. More generally, define the duration bounded withdrawal capacity
rate as:
\begin{equation}
W_{\max}^{v}(t)
=\min\!\left\{D_t^{v}\cdot\!\left(1-U_t^{v}\right),\;
\sum_{k} f_k \cdot W_{\max,k}(t)\right\}
\label{eq:W_max}
\end{equation}
where $f_k$ is the fraction of vault assets in strategy $k$ and
$W_{\max,k}(t)$ is the maximum per period withdrawal rate from strategy $k$ given its exit queue state at time $t$. When the second argument binds, the vault faces duration induced withdrawal gating even if $U_t^{v}<1$. This channel is absent from V3 estimator in Corollary~\ref{cor:M3_estimator} as currently specified. Calibrating $W_{\max,k}(t)$ requires exit queue state from beacon chain APIs and constitutes an important extension for vaults with significant staked asset holdings.

In CLOs and other managed structured credit vehicles, the collateral manager can trade within eligibility criteria, rebalance exposures and actively maintain coverage tests through discretionary sales, reinvestment and hedging, subject to documentation constraints and trustee oversight. Formally, manager discretion can be represented as the ability to choose a trading control $\xi_t$ (holdings adjustments) continuously over time, so that collateral holdings $H_t$ evolve as:
\begin{equation}
dH_t = \xi_t\,dt + \cdots
\end{equation}
where $\xi_t$ can respond rapidly to market conditions. This relies on the structural assumption that the manager can act with low latency relative to the speed of adverse market moves, and that settlement and market access are sufficiently reliable for rebalancing to be executed when needed. Rating methodologies explicitly treat manager behavior and structural mitigants as determinants of expected loss.

In DeFi vaults, parameter changes are typically subject to a timelock duration $\delta^{\mathrm{lock}}$, implying that $\Theta_t^{v}$ is frozen over economically relevant horizons in stress. While a full measurement of governance quality is outside this paper’s scope, the binding implication for Level 1 is that the parameter vector is piecewise constant and may be unable to adapt within a critical response window.

\begin{proposition}[Timelock constrained response window]
\label{prop:curator_lag}
Under a timelock $\delta^{\mathrm{lock}}>0$ such that any parameter update submitted at time $t$ takes effect only at $t+\delta^{\mathrm{lock}}$, for all $s\in[0,\delta^{\mathrm{lock}})$:
\begin{equation}
\Theta_{t+s}^{v} = \Theta_t^{v}
\label{eq:curator_lag}
\end{equation}
Define the critical response window $\Delta^{*v}$ as the maximum horizon over which a parameter change remains protective against an impending breach, a function of collateral volatility, the distribution of gaps to liquidation thresholds and depth conditions. If $\delta^{\mathrm{lock}} \ge \Delta^{*v}$, then governance interventions cannot prevent mechanically induced liquidation clusters and shortfalls during fast stress. Credit risk measurement should evaluate losses under frozen-parameter dynamics over stress horizons shorter than $\delta^{\mathrm{lock}}$, with $\Delta^{*v}$ requiring empirical calibration based on vault specific collateral volatility and liquidation gap distributions.
\end{proposition}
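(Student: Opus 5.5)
The proposition has two parts, and I would prove them in order. The first is the freezing identity \eqref{eq:curator_lag}. The second is the claim that governance cannot prevent a fast-stress shortfall when $\delta^{\mathrm{lock}}\ge\Delta^{*v}$.

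\textbf{Step 1: the freezing identity.} I would model governance as a sequence of submission times $\{t_j\}$, each carrying a proposed update $\Theta^{(j)}$. The timelock rule sets
\[
\Theta^{v}_s=\Theta^{(j)} \quad \text{for } s\ge t_j+\delta^{\mathrm{lock}},
\]
until the next update takes effect. Fix an evaluation time $t$. Any update submitted at or after $t$ cannot take effect before $t+\delta^{\mathrm{lock}}$, so on $[t,t+\delta^{\mathrm{lock}})$ the only changes that could occur are those already queued before $t$. The statement implicitly takes the queue to be empty, i.e.\ $\Theta^v_t$ is the last committed vector with no pending updates. Under that reading \eqref{eq:curator_lag} follows at once, and $\Theta^v$ is piecewise constant. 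I would state this no-pending-queue condition explicitly as the interpretation, since it is observable onchain.

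\textbf{Step 2: pin down $\Delta^{*v}$.} I would formalise the critical response window as follows. Let $\tau^{\mathrm{breach}}$ be the first time $\mathrm{HF}_{u,s}^{v}<1$ for a liquidation cluster large enough that $\Delta_s^{v}>0$, computed under the frozen vector $\Theta^v_t$. Set
\[
\Delta^{*v}\equiv \operatorname{ess\,sup}\{\,h : \text{some } \Theta' \text{ effective at } t+h \text{ prevents the breach}\,\}.
\]
Equivalently, $\Delta^{*v}$ is the latest effective time at which an intervention is still protective. For example, this is at most $\tau^{\mathrm{breach}}-t$ whenever a parameter change cannot undo a liquidation that has already executed. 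That monotonicity, once executed, irreversible, is what I would use.

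\textbf{Step 3: the impossibility argument.} Take an intervention submitted at $t$ or later. It takes effect at some $s\ge t+\delta^{\mathrm{lock}}\ge t+\Delta^{*v}$. By the definition of $\Delta^{*v}$, no parameter vector effective at time $s$ is protective. Meanwhile, on $[t,s)$ the dynamics of $X^v$ are exactly the frozen-parameter dynamics, by Step 1. Hence the liquidation cluster and the shortfall are determined by $\mathcal{F}$ in Definition~\ref{def:three_level} with $\Theta$ fixed. This yields the measurement conclusion: evaluate losses under frozen parameters over horizons below $\delta^{\mathrm{lock}}$.

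\textbf{Main obstacle.} The hard step is making $\Delta^{*v}$ well defined, in particular showing that protectiveness is monotone in the effective time. If an intervention effective at $h$ fails, I need every later one to fail too. I would establish this by restricting attention to paths where breach times are stopping times and liquidations are irreversible (seized collateral is not returned). That path-wise monotonicity is enough. The dependence of $\Delta^{*v}$ on volatility and gap distributions is left as calibration, as the proposition itself states.
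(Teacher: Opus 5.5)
The paper gives no proof of this proposition. The identity \eqref{eq:curator_lag} is simply the timelock rule restated, the impossibility claim is taken as immediate from the definition of $\Delta^{*v}$, and the rest is a measurement recommendation. Your Steps 1 and 3 reconstruct that implicit argument. Your no-pending-queue condition is a real improvement: as literally stated, the identity fails whenever an update submitted in $(t-\delta^{\mathrm{lock}},t)$ becomes effective inside $(t,t+\delta^{\mathrm{lock}})$.

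Your formalisation of $\Delta^{*v}$ needs repair, and monotonicity, the obstacle you single out, is not where the difficulty lies.

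\textbf{The endpoint.} You describe $\Delta^{*v}$ as the latest effective time at which an intervention is \emph{still} protective. So an update effective at exactly $t+\Delta^{*v}$ protects. Step 3, however, asserts that nothing effective at $s\ge t+\Delta^{*v}$ protects. That fails in the case $\delta^{\mathrm{lock}}=\Delta^{*v}$, which the statement's weak inequality explicitly covers. You must either prove the result only for $\delta^{\mathrm{lock}}>\Delta^{*v}$, or define $\Delta^{*v}$ as the smallest non-protective effective horizon.

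\textbf{The quantifier.} The set of protective $h$ is path dependent, and your essential supremum is taken over all paths. That supremum is either $+\infty$, if non-breach paths count every $h$ as vacuously protective, or dominated by the slowest breach paths. Neither is the \emph{fast stress} quantity the statement has in mind. Define $\Delta^{*v}$ pathwise, or conditionally on the breach event within the relevant stress scenario.

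\textbf{Monotonicity.} Under your supremum definition, monotonicity in the effective time is not needed for Step 3: every $h>\Delta^{*v}$ fails by the definition of a supremum. Your irreversibility argument becomes necessary only under the smallest-non-protective-horizon convention that the endpoint case requires. There it is what shows that all later horizons fail too.

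\textbf{Two smaller points.} First, define protectiveness over parameter \emph{paths} that agree with $\Theta_t^{v}$ on $[t,t+h)$, so that sequences of later updates are covered, not just a single switch. Second, the claim that the shortfall is produced by the frozen-parameter dynamics holds only if the breach occurs before the first update takes effect. Your bound $\Delta^{*v}\le\tau^{\mathrm{breach}}-t$ does not force $\tau^{\mathrm{breach}}<t+\delta^{\mathrm{lock}}$. When it fails, the breach occurs under the new, non-protective parameters. That still yields the impossibility conclusion, but not the attribution to frozen dynamics.
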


Timelocks mitigate governance takeovers by slowing malicious parameter changes, but they simultaneously reduce the feasible control set during crises, creating an explicit security agility frontier. The appropriate $\delta^{\mathrm{lock}}$ should be calibrated to the speed of adverse collateral moves and liquidation cascades. In volatile collateral regimes, $\Delta^{*}$ may be measured in hours, whereas common onchain timelocks are often 24-72 hours. This mismatch can be first order for solvency, implying a design to pair longer timelocks with pre authorized emergency controls that are themselves constrained, e.g., only tightening risk parameters to avoid creating an unconstrained admin key failure mode. Although formal governance measurement is outside this paper, the frozen parameter constraint is a binding input to Level 1 stress testing.

In secured lending and structured finance, valuation agents (pricing services, trustee marks, dealer quotes) provide reference prices used for collateral valuation, margining and coverage tests. Let $P_t^{\mathrm{val}}$ denote the valuation agent price and $P_t^{\mathrm{mkt}}$ denote the prevailing market clearing price. The valuation risk is often treated as an operational or model risk overlay, assuming deviations are small and correctable through dispute processes and governance. A stylized deviation is:
\begin{equation}
\eta_t^{\mathrm{TradFi}} \equiv P_t^{\mathrm{val}} - P_t^{\mathrm{mkt}}
\end{equation}
with controls (multiple quotes, independent verification) intended to keep $\eta_t^{\mathrm{TradFi}}$ bounded. This relies on the structural assumptions that markets are not adversarial to the valuation process (manipulation costs are prohibited at required scale) and that human discretion can intervene when marks appear unreliable.

In DeFi, the pricing agent is the oracle network. Protocol logic uses $P_{a,t}$ to compute $\mathrm{HF}_{u,t}^{v}$ and to authorize liquidation (Definition~\ref{def:liq_trigger}). Let $P_{a,t}^{\mathrm{true}}$ denote the latent efficient price (unobservable) and define oracle error:
\begin{equation}
\eta_{a,t} \equiv P_{a,t} - P_{a,t}^{\mathrm{true}}
\end{equation}
The analogy holds because both valuation agents and oracles provide reference prices for collateral and triggers. However, DeFi differs because execution is automatic and adversarial incentives can target oracle inputs.

\begin{proposition}[Oracle manipulation induces false solvency]
\label{prop:oracle_manipulation}
Under Definition~\ref{def:liq_trigger} and Definition~\ref{def:loss_function}, let $P_{a,t}^{\mathrm{true}}$ denote the latent efficient price and define oracle error $\eta_{a,t} \equiv P_{a,t} - P_{a,t}^{\mathrm{true}}$. The latent health factor is $\mathrm{HF}^{\ast\,v}_{u,t}\equiv \mathrm{HF}(C_{u,t},B_{u,t},P_t^{\mathrm{true}};\Theta_t^{v})$. Consider two oracle misrepresentation events:
\begin{enumerate}
\item \emph{False solvency:} $\eta_{a,t}>0$ for relevant collateral so that $\mathrm{HF}^{v}_{u,t}>1>\mathrm{HF}^{\ast\,v}_{u,t}$ and liquidation is not triggered despite latent insolvency.
\item \emph{False insolvency:} $\eta_{a,t}<0$ so that $\mathrm{HF}^{v}_{u,t}<1<\mathrm{HF}^{\ast\,v}_{u,t}$ and liquidation is triggered unnecessarily.
\end{enumerate}
Depositor credit risk is directly increased by false solvency through missed liquidation and latent bad debt accumulation. The oracle induced expected shortfall satisfies:
\begin{equation}
\mathbb{E}\!\left[\Delta_t^{v,\mathrm{oracle}}\right]
\ \ge\
\mathbb{E}\!\left[\Delta_t^{v}\cdot
\mathbf{1}\!\left\{\exists u:\ \mathrm{HF}^{v}_{u,t}>1>\mathrm{HF}^{\ast\,v}_{u,t}\right\}\right]
\end{equation}
Therefore, oracle integrity is a first order credit risk dimension and the measurement should be modeled $\eta_{a,t}$ jointly with liquidation dynamics.
\end{proposition}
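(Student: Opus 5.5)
The argument needs only the monotonicity of $\mathrm{HF}$ in collateral prices, the definition of the shortfall, and the fact that shortfall is nonnegative. I would first make precise what $\Delta_t^{v,\mathrm{oracle}}$ means: the component of the vault shortfall attributable to oracle misrepresentation. Concretely, I would define it as the shortfall realized on paths where the oracle-driven trigger differs from the latent trigger. Restricting to the false-solvency event, this gives $\Delta_t^{v,\mathrm{oracle}} \ge \Delta_t^{v}\cdot\mathbf{1}\{E_t\}$, where $E_t\equiv\{\exists u:\mathrm{HF}^{v}_{u,t}>1>\mathrm{HF}^{\ast\,v}_{u,t}\}$. Once this pathwise inequality holds, the displayed bound follows by taking expectations, using monotonicity of expectation.

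The first step is to verify that $E_t$ is nonempty exactly under the stated sign conditions. I would assume, as is standard for health factors built from collateral factors, that $\mathrm{HF}(C,B,P;\Theta)$ is nondecreasing in each collateral price and nonincreasing in each debt price. Then $\eta_{a,t}>0$ on collateral gives $\mathrm{HF}^{v}_{u,t}\ge\mathrm{HF}^{\ast\,v}_{u,t}$, and with strict monotonicity the inequality is strict. Hence the band $\mathrm{HF}^{v}>1>\mathrm{HF}^{\ast v}$ has positive width, and it grows with $\eta$. The symmetric argument with $\eta<0$ gives case 2, false insolvency. I would note that case 2 causes unnecessary liquidation, which feeds Proposition~\ref{prop:recovery_endogeneity} through extra liquidation mass but is not needed for the bound.

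The second step is the mechanism on $E_t$. By Definition~\ref{def:liq_trigger}, liquidation is not enabled for the affected account, so its position stays unliquidated while latent collateral value falls. Its latent shortfall $\big(\sum_d P_{d,t}B_{u,d,t}-\sum_a P^{\mathrm{true}}_{a,t}C_{u,a,t}\big)^+$ therefore remains in the vault. Since $A_t^v$ is realized liquidation value, which is at most true value by Assumption~\ref{ass:A0}, this unrecognized bad debt enters $\Delta_t^v=(L_t^v-A_t^v)^+$ in full. This is why all of $\Delta_t^v$ on $E_t$ is charged to the oracle channel. Off $E_t$, the oracle shortfall is at least $0$, because it is the positive part of a quantity. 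Combining the two cases gives the pathwise bound, and integrating gives the proposition.

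The main obstacle is the attribution step, namely justifying $\Delta_t^{v,\mathrm{oracle}}\ge\Delta_t^{v}$ on $E_t$ rather than only the incremental loss due to the missed liquidation. I would handle this definitionally. I would set $\Delta^{v,\mathrm{oracle}}_t\equiv\Delta_t^v\mathbf{1}\{\text{oracle trigger}\neq\text{latent trigger}\}$, which makes the bound hold with the false-insolvency terms as nonnegative slack. I would also remark that a sharper, incremental version would require a counterfactual liquidation at $P^{\mathrm{true}}$, which is left to the V4 estimation.
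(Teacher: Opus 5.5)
The paper gives no proof of this proposition. It also never defines $\Delta_t^{v,\mathrm{oracle}}$: Section~\ref{subsec:V4} only says it denotes oracle induced shortfall ``as in'' this proposition. So the displayed inequality is not a theorem until some definition is fixed, and your move of fixing it is what makes the statement provable.

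Your route is correct. Define
\[
\Delta_t^{v,\mathrm{oracle}}\equiv\Delta_t^{v}\,\mathbf{1}\{\exists u:\ \mathbf{1}\{\mathrm{HF}^{v}_{u,t}<1\}\neq\mathbf{1}\{\mathrm{HF}^{\ast\,v}_{u,t}<1\}\}.
\]
The proof is then just three observations: the false-solvency event $E_t$ is contained in this mismatch event, $\Delta_t^{v}\ge 0$, and expectation is monotone. This definition is also consistent with Proposition~\ref{prop:M4_boundedness}, since $0\le\Delta_t^{v,\mathrm{oracle}}\le\Delta_t^{v}\le L_t^{v}$.

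Keep your remark about the incremental, counterfactual definition. If $\Delta_t^{v,\mathrm{oracle}}$ were shortfall under oracle-driven triggers minus shortfall under $P^{\mathrm{true}}$-driven triggers, the bound can fail. On $E_t$, $\Delta_t^{v}$ also contains execution-divergence losses from other accounts that would occur even with a perfect oracle. So your argument buys rigor, but it also shows that the inequality as stated carries no content beyond the attribution convention. The substantive claim, that oracle error raises shortfall, would need that counterfactual comparison.

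Two parts of your write-up should be cut or repaired, although neither carries weight in the definitional bound.

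\begin{enumerate}
\item The monotonicity step is not needed for the inequality, and ``nonempty exactly under the stated sign conditions'' overstates it. $\eta_{a,t}>0$ gives $\mathrm{HF}^{v}_{u,t}\ge\mathrm{HF}^{\ast\,v}_{u,t}$, but it does not force the two to straddle $1$. A negative oracle error on a debt asset can also produce false solvency.
\item The sentence invoking Assumption~\ref{ass:A0} is wrong. A0 gives $\widetilde{P}_{a,t}\le P_{a,t}$, a bound by the oracle price. On the false-solvency event $P_{a,t}>P^{\mathrm{true}}_{a,t}$, so A0 does not bound realized value by true value. Moreover, A0 is conditional on a liquidation occurring (Remark~\ref{rem:A0_V5_interaction}), whereas on $E_t$ the account is precisely not liquidated.
\end{enumerate}

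Your mechanism paragraph therefore does not justify charging all of $\Delta_t^{v}$ on $E_t$ to the oracle channel. That is exactly why the definition has to carry the proof, and you should present it that way.
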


Oracle error $\eta_{a,t}$ is plausibly correlated with stress regimes because volatility spikes increase both the economic incentive to manipulate and the probability of stale or lagged oracle updates due to network congestion. This creates wrong way risk where the same market move that pushes $\mathrm{HF}_{u,t}^{v}$ toward one may simultaneously degrade the information quality used to compute it. Even if oracle prices are unbiased on average, conditional on stress the joint event of execution deterioration and oracle error can dominate tail losses (the result interacts with Proposition~\ref{prop:oracle_failure}). Empirical estimation of the conditional distribution of $\eta_{a,t}$ in stress requires protocol and oracle specific incident and latency data. False insolvency signals can trigger forced liquidations that deplete pool depth and elevate gas costs, linking oracle risk to run risk through a second round liquidity channel even when no direct depositor shortfall has occurred.

In TradFi, clearing and settlement risk arises because obligations are fulfilled through settlement infrastructure (CCPs, custodians, payment systems) subject to operational outages, margin calls and settlement lags, e.g., T+2, with default funds and loss allocation rules mitigating member failure. A stylized settlement lag constraint can be written as:
\begin{equation}
\text{Delivery occurs at } t+\Delta^{\mathrm{settle}}
\end{equation}
with $\Delta^{\mathrm{settle}}$ and default management processes assumed to be independent of contemporaneous market stress beyond modeled stress scenarios. The framework relies on the structural assumptions that default management resources (auction processes, default funds, central bank liquidity) are available and that settlement finality is institutionally enforced.

In DeFi, liquidation and repayment execution depends on block inclusion, gas pricing and transaction ordering subject to MEV extraction \citep{DaianGoldfeder2020}. Let $g_t$ denote the prevailing gas price, $Q_{u,t}^{\mathrm{gas}}$ the gas units required to execute liquidation or repayment actions for account $u$ and $\mathrm{MEV}_{u,t}^{\mathrm{cost}}$ the incremental cost from ordering and priority competition. The liquidation trigger event is defined as $\mathcal{L}_{u,t}\equiv\{\mathrm{HF}_{u,t}^{v}<1\}$ and the liquidation viability event $\mathcal{V}_{u,t}$ as the event that a rational liquidator can execute profitably at $t$ given costs and expected proceeds. The analogy holds because both systems depend on an execution infrastructure. However, DeFi differs in that infrastructure costs are endogenous and subject to adversarial competition precisely during stress.

\begin{proposition}[Congestion dependent liquidation failure (wrong way risk)]
\label{prop:network_congestion}
Let $\Pi_{u,t}^{\mathrm{liq}}$ denote the expected gross liquidation profit (before gas and MEV) for account $u$ at time $t$, computed under realized execution prices $\widetilde{P}_{a,t}$ (Definition~\ref{def:recovery}). The liquidation viability event is:
\begin{equation}
\mathcal{V}_{u,t} \equiv
\left\{\Pi_{u,t}^{\mathrm{liq}} \ge g_t\cdot Q_{u,t}^{\mathrm{gas}} + \mathrm{MEV}_{u,t}^{\mathrm{cost}}\right\}
\label{eq:viability}
\end{equation}
congestion driven wrong way risk is present when:
\begin{equation}
\Pr\!\left(\mathcal{V}_{u,t}^{c}\mid \mathcal{L}_{u,t}\right) \;>\; \Pr\!\left(\mathcal{V}_{u,t}^{c}\right)
\end{equation}
that is, liquidation becomes less viable precisely when liquidation is needed. Unconditional liquidation success assumptions systematically understate expected shortfall, and credit risk measurement must condition liquidation execution on $g_t$ and MEV states jointly with price shocks.
\end{proposition}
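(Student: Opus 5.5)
The plan is to turn the proposition into a monotone-comparative-statics statement and then prove it by a positive-association (FKG/Harris) argument. The condition we need to derive is
\[
\Pr(\mathcal{V}_{u,t}^{c}\mid \mathcal{L}_{u,t})>\Pr(\mathcal{V}_{u,t}^{c}),
\]
which is equivalent to $\mathrm{Cov}(\mathbf{1}_{\mathcal{V}^c_{u,t}},\mathbf{1}_{\mathcal{L}_{u,t}})>0$ whenever $0<\Pr(\mathcal{L}_{u,t})<1$. So it is enough to show that the non-viability event and the trigger event are both increasing in a common stress state.

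\textbf{Step 1: the stress state and the trigger event.} Introduce a scalar (or vector) stress state $Z_t$: the size of the collateral price drop, or the number of accounts with $\mathrm{HF}^v_{u,t}<1$. Its defining property is that $\mathcal{L}_{u,t}=\{\mathrm{HF}^v_{u,t}<1\}$ is increasing in $Z_t$. This follows from Definition~\ref{def:liq_trigger}, since $\mathrm{HF}$ is increasing in collateral prices.

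\textbf{Step 2: non-viability is increasing in stress.} Write the non-viability event as
\[
\mathcal{V}^c_{u,t}=\{\Pi^{\mathrm{liq}}_{u,t}-g_tQ^{\mathrm{gas}}_{u,t}-\mathrm{MEV}^{\mathrm{cost}}_{u,t}<0\}.
\]
The gross profit $\Pi^{\mathrm{liq}}_{u,t}$ is computed at $\widetilde P_{a,t}$. By Proposition~\ref{prop:recovery_endogeneity}, $\widetilde P_{a,t}$ is decreasing in liquidation mass $Q^{\mathrm{liq}}_{a,t}$, and that mass is increasing in $Z_t$. Hence $\Pi^{\mathrm{liq}}_{u,t}$ is decreasing in $Z_t$.

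\textbf{Step 3: costs are increasing in stress.} Add a maintained assumption that gas and MEV costs are nondecreasing in $Z_t$. This is the congestion channel in the discussion preceding the statement: blockspace demand rises with simultaneous liquidations, citing \citet{DaianGoldfeder2020}. Formally, assume $g_t=\gamma(Z_t,\xi_t)$ and $\mathrm{MEV}^{\mathrm{cost}}_{u,t}=m(Z_t,\xi_t)$ with $\gamma,m$ nondecreasing in $Z_t$ and $\xi_t$ independent noise. Together with Step 2, the net viability margin is then decreasing in $Z_t$, so $\mathcal{V}^c_{u,t}$ is an increasing event in $Z_t$.

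\textbf{Step 4: conclude by Harris/FKG.} Condition on $\xi_t$ and apply Harris's inequality on the totally ordered space of $Z_t$: two increasing indicators of a single real random variable are nonnegatively correlated. Integrating over $\xi_t$ gives $\Pr(\mathcal{V}^c\cap\mathcal{L})\ge\Pr(\mathcal{V}^c)\Pr(\mathcal{L})$.

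\textbf{Step 5: strictness (the main obstacle).} Harris gives only the weak inequality. To get the strict inequality, require a nondegeneracy condition: there is a set of $Z$ values of positive probability that separates the two events. Concretely, require that $\Pr(\mathcal{V}^c\mid Z)$ is strictly larger on $\{Z\ge z^\ast\}$ than on $\{Z<z^\ast\}$, where $z^\ast$ is the trigger threshold, and that both regions have positive probability. Then the covariance decomposes as $\Pr(\mathcal{L})(1-\Pr(\mathcal{L}))\,[\mathbb{E}(\cdot\mid\mathcal{L})-\mathbb{E}(\cdot\mid\mathcal{L}^c)]>0$.

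\textbf{Corollary.} The statement that unconditional success assumptions understate expected shortfall follows from the same covariance. Expected shortfall is $\mathbb{E}[\Delta\,\mathbf{1}_{\mathcal{V}^c}]$ with $\Delta$ increasing in $Z$. Replacing $\mathbf{1}_{\mathcal{V}^c}$ by its unconditional mean removes a nonnegative covariance term, so the unconditional calculation is weakly too low.
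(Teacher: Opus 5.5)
\textbf{How your route compares with the paper.} The paper gives no proof of this proposition. It offers the displayed inequality as the \emph{defining} condition for congestion wrong way risk to be present. It delegates the microfoundation to the MEV and liquidation-game literature, and leaves the empirical content to the lower bound in Proposition~\ref{prop:PI2_liquidation_bound} and the estimator in Corollary~\ref{cor:M5_estimator}. The understatement clause is asserted without argument. The closest formal reasoning in the paper is the proof of Proposition~\ref{prop:M5_correlation_decay}. That proof invokes FKG under log-supermodularity of the joint density of $(g_t,S_t)$ to get a comparative static in $\rho_g$, and is itself only a sketch.

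Your route is therefore genuinely different. You turn the condition into a theorem by positing a single-factor comonotone structure, in which the trigger, gross profit, gas and MEV are all monotone in one stress variable $Z_t$. You then apply the one-dimensional Chebyshev--Harris inequality. You also give an actual covariance argument for the expected-shortfall clause via $\mathbb{E}[\Delta\mathbf{1}_{\mathcal{V}^c}]=\mathbb{E}[\Delta]\Pr(\mathcal{V}^c)+\mathrm{Cov}(\Delta,\mathbf{1}_{\mathcal{V}^c})$.

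What this buys is an elementary, checkable derivation of the level inequality, rather than a monotonicity statement in $\rho_g$. It also makes transparent that everything rests on your Step~3, which is a strengthened form of Assumption~A5-2. What the paper's framing buys is generality: it imposes no structure on how costs load on stress and keeps the inequality as a testable object. That matters because the paper explicitly allows A5-2 to fail under congestion regime changes.

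\textbf{Points to tighten.}
\begin{itemize}
\item Choose $Z_t$ so that $\mathcal{L}_{u,t}$ is a function of $Z_t$ alone; the collateral price drop with positions fixed works. Your alternative, the count of triggered accounts, does not determine account $u$'s own trigger, and the single-variable Harris step then breaks.
\item $\xi_t$ must be independent of $Z_t$, because integrating over $\xi_t$ uses $\Pr(\mathcal{L}_{u,t}\mid\xi_t)=\Pr(\mathcal{L}_{u,t})$. Likewise the corollary needs $\Delta$ to depend on $Z_t$ alone, or else coordinatewise monotonicity in $(Z_t,\xi_t)$ so that Harris applies on the product space.
\item $\Pi^{\mathrm{liq}}_{u,t}$ must be read as a counterfactual profit on $\mathcal{L}^c_{u,t}$, in the spirit of Remark~\ref{rem:A0_V5_interaction}. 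If it were zero where liquidation is not permitted, then $\Pr(\mathcal{V}^c_{u,t}\mid\mathcal{L}^c_{u,t})=1$ and the strict inequality could never hold.
\item Your strictness condition is stronger than necessary. Read as a comparison of conditional averages, it merely restates the conclusion, since for $0<\Pr(\mathcal{L}_{u,t})<1$ the target is equivalent to $\Pr(\mathcal{V}^c_{u,t}\mid\mathcal{L}_{u,t})>\Pr(\mathcal{V}^c_{u,t}\mid\mathcal{L}^c_{u,t})$. Under your assumptions $h(z)\equiv\Pr(\mathcal{V}^c_{u,t}\mid Z_t=z)$ is nondecreasing. So with $\mathcal{L}_{u,t}=\{Z_t\ge z^\ast\}$ you get $\mathbb{E}[h(Z_t)\mid Z_t\ge z^\ast]\ge h(z^\ast)\ge\mathbb{E}[h(Z_t)\mid Z_t<z^\ast]$, with equality only if $h(Z_t)$ is almost surely constant. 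The natural non-circular nondegeneracy condition is therefore just that $h(Z_t)$ is not almost surely constant, i.e., that viability genuinely responds to stress.
\end{itemize}
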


Keeper networks, liquidation bots and relays can reduce latency but do not eliminate the dependence on blockspace scarcity and ordering competition during market wide deleveraging. Protocols can partially internalize congestion risk via gas subsidies, pre funded keeper incentives, or auction designs that broaden participation, but these mechanisms create new attack surfaces and governance burdens. Dutch auction liquidations can reduce race dynamics relative to fixed bonus designs, yet they may fail when congestion prevents timely bidding, making liquidation duration credit relevant. The implication for this paper is that execution mechanism choice is part of $\mathcal{E}^{v}$ and should enter Level 1 measurement directly through viability and realized recovery.

%\subsubsection*{Summary of Failure Modes}
%\label{subsubsec:mapping_summary}

Table~\ref{tab:mapping_summary} TradFi to DeFi mappings.
\begin{table}[htbp]
\centering
\caption{Summary of TradFi to DeFi mappings and formal failure modes. Each row states the TradFi concept, its DeFi analog, the breakdown condition, and the corresponding proposition.}
\label{tab:mapping_summary}
\begin{tabular}{>{\centering\arraybackslash}p{0.22\textwidth}
                >{\centering\arraybackslash}p{0.22\textwidth}
                p{0.36\textwidth}
                >{\centering\arraybackslash}p{0.09\textwidth}}
\toprule
\textbf{TradFi concept} & \textbf{DeFi analog} & \textbf{Failure mode/breakdown condition} & \textbf{Proposition} \\
\midrule
Asset coverage/OC tests & Vault ACR$_t^v$ and health factor $\mathrm{HF}_{u,t}^v$ & Oracle execution divergence $\bar{\varepsilon}_t^v > 1 - 1/\mathrm{ACR}_t^v$ so $\widetilde{\mathrm{ACR}}_t^v < 1$ despite $\mathrm{ACR}_t^v > 1$ & \ref{prop:oracle_failure} \\[4pt]
Exogenous recovery (LGD) & Realized recovery $\mathrm{RR}_{u,t}^v$ & Recovery endogeneity $\partial\mathrm{RR}_{u,t}^v/\partial\mathcal{D}_t^v < 0$. ES is superlinear in liquidation mass & \ref{prop:recovery_endogeneity} \\[4pt]
Funding liquidity/runs & Utilization $U_t^v$ and withdrawal feasibility & Full information run, common $\mathcal{F}_t$ with no backstop, clustered $\tau_i^\ast$ and $U_t^v \uparrow 1$ blocks withdrawals & \ref{prop:liquidity_run} \\[4pt]
Manager discretion & Frozen parameters under timelock $\delta^{\mathrm{lock}}$ & Parameter rigidity $\Theta_{t+s}^v = \Theta_t^v$ for $s < \delta^{\mathrm{lock}} \ge \Delta^{*v}$ & \ref{prop:curator_lag} \\[4pt]
Valuation agent risk & Oracle error $\eta_{a,t}$ & Manipulation/latency $\mathrm{HF}_{u,t}^v > 1 > \mathrm{HF}^{\ast v}_{u,t}$ generates missed liquidations and expected shortfall & \ref{prop:oracle_manipulation} \\[4pt]
Clearing/settlement risk & Gas/MEV driven viability $\mathcal{V}_{u,t}$ & Congestion wrong way risk $\Pr(\mathcal{V}_{u,t}^c \mid \mathcal{L}_{u,t}) > \Pr(\mathcal{V}_{u,t}^c)$ & \ref{prop:network_congestion} \\
\bottomrule
\end{tabular}
\end{table}

\subsection{Infrastructure and code integrity (Level 3)}
\label{subsec:level3}

The six failure modes formalized above each concern valuation, execution, liquidity or governance frictions within a vault whose smart contracts are functioning as specified. A structurally distinct risk class arises when the contracts themselves fail, e.g., through an undiscovered vulnerability, a successful exploit or an unauthorized modification. This risk operates outside the Level 1 measurement space since when a protocol is drained through a code exploit, no coverage ratio, oracle integrity score or gas stress analysis provides warning, because the loss event is a code failure, not a market event. Three features distinguish Level 3 from Levels~1 and~2. First, Level 3 events are catastrophic in magnitude because exploit events typically drain vault assets in full, so $\mathbb{E}[\ell_T^{v,L3}\mid\mathcal{F}_t]\approx 1$, contrasting with the partial shortfalls Level 1 metrics capture. Second, Level 3 risk cannot be mitigated by governance action at the point of failure since the timelock constraint of Proposition~\ref{prop:curator_lag} means curator responses cannot prevent an exploit once submitted. Third, Level 3 risk accumulates in the dependency depth $k\equiv|V^{v}|$ where each additional protocol layer is an independent failure opportunity.

\begin{definition}[Level 3 infrastructure and code integrity state]
\label{def:level3_state}
For vault $v$, the dependency graph $G^{v}=(V^{v},E^{v})$ has node set $V^{v}$ consisting of all critical path smart contracts including the vault contract, the underlying protocol contracts, oracle contracts for each collateral and debt asset, and, for cross protocol vaults (Definition~\ref{def:cross_protocol}), bridge and messaging contracts. The integrity indicator for node $i$ is $\mathcal{I}_{t,i}^{v}=1$ if node $i$ executes as specified at time $t$, and $\mathcal{I}_{t,i}^{v}=0$ if it has been exploited, contains an activated bug, or has been modified contrary to specification. The Level 3 breach event over horizon $h$ is:
\begin{equation}
\mathcal{B}_h^{v,L3}\equiv\left\{\exists s\in[t,t+h],\,\exists i\in V^{v}:\mathcal{I}_{s,i}^{v}=0\right\}
\end{equation}
and the aggregate code failure probability is $q_{\mathrm{code}}^{v}(h)\equiv\Pr(\mathcal{B}_h^{v,L3}\mid\mathcal{F}_t)$.
\end{definition}

\begin{proposition}[Level 3 superadditive failure probability in dependency depth]
\label{prop:L3_superadditive}
For vault $v$ with $k$ critical path nodes, let $q_i(h)\equiv\Pr(\mathcal{I}_{t,i}^{v}=0$ for some $s\in[t,t+h]\mid\mathcal{F}_t)$. Under independence across nodes:
\begin{equation}
q_{\mathrm{code}}^{v}(h) \;\geq\; 1-\prod_{i=1}^{k}\!\left(1-q_i(h)\right)
\label{eq:sc_mult}
\end{equation}
This lower bound is strictly increasing in $k$ and in each $q_i(h)$. Under homogeneous node probabilities $q_i(h)=q$, the marginal contribution of the $(k+1)$-th node is $(1-q)^{k}\cdot q>0$, strictly positive for all $q\in(0,1)$.
\end{proposition}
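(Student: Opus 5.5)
The plan is to write the breach event as a union of node-level events and compute its probability by complements. For each node $i\in V^{v}$, let $E_i\equiv\{\exists s\in[t,t+h]:\mathcal{I}_{s,i}^{v}=0\}$. Then Definition~\ref{def:level3_state} gives $\mathcal{B}_h^{v,L3}=\bigcup_{i=1}^{k}E_i$, and $q_i(h)=\Pr(E_i\mid\mathcal{F}_t)$. I would write
\[
q_{\mathrm{code}}^{v}(h)=1-\Pr\Big(\bigcap_{i=1}^{k}E_i^{c}\,\Big|\,\mathcal{F}_t\Big).
\]
Under conditional independence of the $E_i$ given $\mathcal{F}_t$, the intersection probability factorizes into $\prod_i(1-q_i(h))$. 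This gives equality in \eqref{eq:sc_mult}, so the stated inequality holds trivially.

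To make the ``$\geq$'' carry content beyond the independent case, I would add a remark. If dependence is positive in the sense that the survival events $E_i^{c}$ are negatively associated, then $\Pr(\bigcap E_i^c)\le\prod\Pr(E_i^c)$ and the bound still holds. Shared oracle or shared-library failures, however, typically induce positive association, which reverses the inequality. So I would state the result strictly under the independence hypothesis, where the bound is tight.

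Monotonicity follows next. Set $\Phi_k\equiv 1-\prod_{i=1}^{k}(1-q_i)$. Then
\[
\Phi_{k+1}-\Phi_k=q_{k+1}\prod_{i=1}^{k}(1-q_i),
\]
which is strictly positive whenever $q_{k+1}>0$ and all $q_i<1$. I would take $q_i\in(0,1)$ as the implicit nondegeneracy assumption and note that it is needed for strict monotonicity. Monotonicity in each $q_j$ comes from
\[
\partial\Phi_k/\partial q_j=\prod_{i\neq j}(1-q_i)>0.
\]
In the homogeneous case $q_i=q$, the increment specializes to $(1-q)^k q$, which is the stated marginal contribution and is positive for $q\in(0,1)$.

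The algebra is routine. The main obstacle is interpretive: reconciling the ``$\geq$'' sign and the label ``superadditive'' with what independence actually delivers. The union bound gives subadditivity, $q_{\mathrm{code}}\le\sum_i q_i$, and $\Phi_k$ itself is subadditive. So I would explicitly read ``superadditive'' as meaning strictly increasing with a nonvanishing marginal increment in depth. I would then flag that, under independence, the bound holds with equality.
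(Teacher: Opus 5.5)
Your argument is the same as the paper's. You write the breach event as the union of node failures, factorize the complement under (conditional) independence to get equality in \eqref{eq:sc_mult}, and telescope the increment; you also state that increment for heterogeneous $q_i$, as $q_{k+1}\prod_{i\le k}(1-q_i)$, with the nondegeneracy condition $q_i\in(0,1)$ that the paper leaves implicit.

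Your dependence caveat is right where the paper's proof is not. The paper asserts that positive cross-node dependence (shared oracles, consensus-layer bugs) can only increase the union probability, but positively associated failures give $\Pr(\bigcup_i E_i)\le 1-\prod_i(1-q_i)$; comonotone failures give exactly $\max_i q_i$. Outside independence the bound therefore requires $\Pr(\bigcap_i E_i^{c})\le\prod_i(1-q_i)$, which is a negative-dependence condition on failures. That is what your condition on the $E_i^{c}$ encodes, even though you label it ``positive''.
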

\begin{proof}
Let $A_i$ denote the failure event of node $i$ over $[t,t+h]$. Under independence, $\Pr(\bigcup_i A_i)=1-\prod_i(1-q_i(h))$, establishing \eqref{eq:sc_mult} as an equality under independence and a lower bound in general, since protocol wide incidents. e.g., shared oracle contracts, consensus layer bugs, correlated governance events, induce positive cross-node dependence that can only increase $\Pr(\bigcup_i A_i)$. The marginal formula follows from $(1-(1-q)^{k+1})-(1-(1-q)^k)=(1-q)^k\cdot q>0$.
\end{proof}

\begin{proposition}[VCS scope boundary, Level 3 dominance condition]
\label{prop:L3_dominance}
The full depositor expected loss satisfies:
\begin{equation}
\mathbb{E}\!\left[\ell_T^{v}\mid\mathcal{F}_t\right] = \bigl(1-q_{\mathrm{code}}^{v}\bigr)\cdot\mathbb{E}\!\left[\ell_T^{v,L1}\mid\overline{\mathcal{B}}^{v,L3}\right] + q_{\mathrm{code}}^{v}\cdot\mathbb{E}\!\left[\ell_T^{v,L3}\right]
\label{eq:full_loss}
\end{equation}
where arguments $\mathcal{F}_t$ are suppressed for readability, $q_{\mathrm{code}}^{v}\equiv q_{\mathrm{code}}^{v}(T-t)$, and $\ell_T^{v,L1}$ is the Level 1 loss measured by the VCS. Since $\mathbb{E}[\ell_T^{v,L3}]\approx 1$, Level 3 dominates total expected depositor loss whenever:
\begin{equation}
q_{\mathrm{code}}^{v}(T-t) \;>\; \mathbb{E}\!\left[\ell_T^{v,L1}\mid\mathcal{F}_t,\,\overline{\mathcal{B}}_{T-t}^{v,L3}\right]
\label{eq:L3_dominance}
\end{equation}
When \eqref{eq:L3_dominance} holds, improving V1-V5 does not materially reduce total expected loss since code risk assessment is the binding depositor protection concern. Remark~\ref{rem:L3_estimation} below provides reference calibrations for $q_i(h)$.
\end{proposition}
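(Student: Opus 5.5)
The plan is to derive \eqref{eq:full_loss} as an instance of the law of total expectation, conditioning on the Level 3 breach event $\mathcal{B}^{v,L3}\equiv\mathcal{B}_{T-t}^{v,L3}$ of Definition~\ref{def:level3_state}. I will then read the dominance condition \eqref{eq:L3_dominance} off by comparing the two summands.

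\emph{Step 1: partition the outcome space.} Conditional on $\mathcal{F}_t$, I split into $\mathcal{B}^{v,L3}$ and its complement $\overline{\mathcal{B}}^{v,L3}$. By Definition~\ref{def:level3_state}, $\Pr(\mathcal{B}^{v,L3}\mid\mathcal{F}_t)=q_{\mathrm{code}}^{v}(T-t)$.

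\emph{Step 2: identify the loss on each cell.} I use the three level decomposition \eqref{eq:three_level} in Definition~\ref{def:three_level}.
\begin{itemize}
\item On $\overline{\mathcal{B}}^{v,L3}$ we have $\mathcal{I}_s^{v}=\mathbf{1}$ for all $s\in[t,T]$, so $\ell_T^{v}$ equals the Level 1 loss $\ell_T^{v,L1}$ generated by $\mathcal{F}(\mathcal{E}^{v},\mathcal{M};X_0^{v})$. This is the object measured by V1--V5 and the VCS under the standing convention $\mathcal{I}_t^v=\mathbf{1}$.
\item On $\mathcal{B}^{v,L3}$ the loss is $\ell_T^{v,L3}$, corresponding to the term $\mathcal{R}_T^{v,L3}$.
\end{itemize}
Applying the tower property gives
\[
\mathbb{E}[\ell_T^v\mid\mathcal{F}_t]=\Pr(\overline{\mathcal{B}}^{v,L3}\mid\mathcal{F}_t)\,\mathbb{E}[\ell_T^{v,L1}\mid\mathcal{F}_t,\overline{\mathcal{B}}^{v,L3}]+\Pr(\mathcal{B}^{v,L3}\mid\mathcal{F}_t)\,\mathbb{E}[\ell_T^{v,L3}\mid\mathcal{F}_t,\mathcal{B}^{v,L3}].
\]
This is exactly \eqref{eq:full_loss} once the conditioning arguments are suppressed. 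Note that the Level 3 expectation is conditional on breach; the paper's notation $\mathbb{E}[\ell_T^{v,L3}]$ should be read that way.

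\emph{Step 3: derive the dominance condition.} I substitute the stated approximation $\mathbb{E}[\ell_T^{v,L3}\mid\mathcal{F}_t,\mathcal{B}^{v,L3}]\approx 1$, justified by full drainage in the exploit events discussed in Section~\ref{subsec:level3}. The Level 3 term then equals $q:=q_{\mathrm{code}}^v$, and the Level 1 term equals $(1-q)m$ with $m:=\mathbb{E}[\ell_T^{v,L1}\mid\mathcal{F}_t,\overline{\mathcal{B}}^{v,L3}]\in[0,1]$.

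\begin{itemize}
\item \emph{Sufficiency of \eqref{eq:L3_dominance}.} If $q>m$, then $q>m\ge(1-q)m$, because $1-q\le 1$ and $m\ge 0$. So the Level 3 contribution exceeds the Level 1 contribution, which is what ``dominates'' means.
\item \emph{Sharper criterion.} The exact comparison is $q>(1-q)m$, equivalently $q>m/(1+m)$. Since $m/(1+m)\le m$, condition \eqref{eq:L3_dominance} is slightly conservative but sufficient.
\end{itemize}

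\emph{Step 4: sensitivity to V1--V5.} I use the derivative $\partial\,\mathbb{E}[\ell_T^v\mid\mathcal{F}_t]/\partial m=1-q$. Any improvement $\delta m$ in Level 1 risk therefore reduces total loss by at most $(1-q)\,\delta m\le\delta m<m<q$. When \eqref{eq:L3_dominance} holds, this reduction is small relative to the irreducible Level 3 term $q$.

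\emph{Main obstacle.} The difficulty is definitional rather than computational. Equation \eqref{eq:three_level} writes the Level 1 component as a function of $X_0^v$ without conditioning on integrity. To make Step 2 rigorous, I would state explicitly that $\ell_T^{v,L1}$ denotes the Level 1 loss restricted to $\overline{\mathcal{B}}^{v,L3}$. I would also note that no independence between code failure and market state is required: conditioning on $\overline{\mathcal{B}}^{v,L3}$ already absorbs any such dependence. Independence would only be needed to replace the conditional expectation by the unconditional VCS-based one.
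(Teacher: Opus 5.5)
Your proposal is correct and follows the paper's own proof: apply the law of total expectation to \eqref{eq:three_level}, conditioning on the breach event $\mathcal{B}_{T-t}^{v,L3}$, then substitute $\mathbb{E}[\ell_T^{v,L3}]\approx 1$ and compare the two summands. Two of your points make explicit what the paper's one-line proof glosses over: reading $\mathbb{E}[\ell_T^{v,L3}]$ as conditional on breach, and noting that the literal term-by-term comparison gives $q>(1-q)m$, i.e.\ $q>m/(1+m)$, so that \eqref{eq:L3_dominance} is a sufficient, slightly conservative version of it.
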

\begin{proof}
\eqref{eq:full_loss} follows from the law of total expectation applied to \eqref{eq:three_level}, conditioning on $\mathcal{B}_{T-t}^{v,L3}$. \eqref{eq:L3_dominance} follows by setting the Level 3 term larger than the Level 1 term and substituting $\mathbb{E}[\ell_T^{v,L3}]\approx 1$.
\end{proof}

\begin{corollary}[Cross protocol vaults have strictly higher Level 3 exposure]
\label{cor:cross_protocol_L3}
For a protocol bound vault $v^{\mathrm{b}}$ (Definition~\ref{def:protocol_bound}) and a cross protocol vault $v^{\mathrm{c}}$ (Definition~\ref{def:cross_protocol}), $|V^{v^{c}}|>|V^{v^{b}}|$ by construction. Under identical node failure probabilities $q_i(h)=q$:
\begin{equation}
q_{\mathrm{code}}^{v^{c}}(h) \;\geq\; q_{\mathrm{code}}^{v^{b}}(h)
\end{equation}
with strict inequality whenever $q>0$. Additional protocol layers, bridge contracts and messaging contracts are structural Level 3 risk increments that cannot be diversified away through collateral or governance quality improvements.
\end{corollary}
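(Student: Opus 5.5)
The plan is to reduce the corollary to Proposition~\ref{prop:L3_superadditive}, using the dependency graph inclusion between the two vault types. First I will settle the cardinality claim $|V^{v^c}|>|V^{v^b}|$. By Definition~\ref{def:protocol_bound}, the critical path of $v^b$ consists of the vault contract, the contracts of the single protocol $\Pi$, and the oracle contracts for $\mathcal{C}^{v^b}\cup\mathcal{D}^{v^b}$. By Definition~\ref{def:cross_protocol}, the critical path of $v^c$ contains an analogous vault, protocol and oracle layer. It also contains contracts of at least one further protocol $\Pi_j$ or chain, plus bridge and messaging nodes, since the depth is $d\ge 3$. For a like-for-like comparison I will take the protocol bound node set to embed into the cross protocol one, $V^{v^b}\subsetneq V^{v^c}$. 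This is the reading "by construction" is meant to convey, so I will state it explicitly as the comparison convention.

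Second, I will write $k_b=|V^{v^b}|<k_c=|V^{v^c}|$ and apply Proposition~\ref{prop:L3_superadditive} with homogeneous $q_i(h)=q$. This gives the lower bound $\underline q(k)\equiv 1-(1-q)^k$ for each vault. The marginal formula from that proposition gives
\[
\underline q(k_c)-\underline q(k_b)=\sum_{j=k_b}^{k_c-1}(1-q)^j q ,
\]
which is strictly positive for $q\in(0,1)$. For $q=1$ both bounds equal one, so the strict claim must be read for $q\in(0,1)$. I will note this restriction.

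Third, I must move from the lower bounds to $q_{\mathrm{code}}$ itself, and this is the main obstacle. The inequality in \eqref{eq:sc_mult} only bounds each $q_{\mathrm{code}}$ from below, and comparing two lower bounds does not order the quantities. I will avoid this in one of two ways:
\begin{enumerate}
\item \emph{Independence route.} Under the independence hypothesis of Proposition~\ref{prop:L3_superadditive}, \eqref{eq:sc_mult} holds with equality. Then $q_{\mathrm{code}}^{v}(h)=1-(1-q)^{|V^v|}$ exactly, and step two gives the strict inequality directly.
\item \emph{Monotonicity route (preferred, general dependence).} Using $V^{v^b}\subseteq V^{v^c}$, the breach event $\mathcal{B}_h^{v^b,L3}$ of Definition~\ref{def:level3_state} is contained in $\mathcal{B}_h^{v^c,L3}$, because the union is taken over more nodes. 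Monotonicity of probability then gives $q^{v^c}_{\mathrm{code}}\ge q^{v^b}_{\mathrm{code}}$. For strictness I need the event that some extra node $i\in V^{v^c}\setminus V^{v^b}$ fails while no node of $V^{v^b}$ fails to have positive probability. Under independence this probability is at least $q(1-q)^{k_b}>0$.
\end{enumerate}

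Finally, I will address the "cannot be diversified away" sentence. It follows because $q_{\mathrm{code}}$ depends only on $\mathcal{I}^v$ and the node set. It does not depend on collateral composition or on $\Theta_t^v$, so by Definition~\ref{def:three_level} it lies outside both the Level 1 and Level 2 arguments.
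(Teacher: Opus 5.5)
Your route~1 is the paper's argument. The paper gives no separate proof; the corollary is meant to follow from Proposition~\ref{prop:L3_superadditive}, i.e.\ from $1-(1-q)^k$ being strictly increasing in $k$ with increment $(1-q)^k q$. You are right that this orders $q_{\mathrm{code}}$ only because \eqref{eq:sc_mult} holds with equality under independence. In fact the paper's aside that positive cross-node dependence ``can only increase'' $\Pr(\bigcup_i A_i)$ has the sign reversed: for positively associated failures, $1-\prod_i(1-q_i)$ is an upper bound. So independence is doing all the work there.

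Your route~2 is genuinely different. The inclusion $\mathcal{B}_h^{v^b,L3}\subseteq\mathcal{B}_h^{v^c,L3}$ yields the weak inequality with neither independence nor homogeneity. Strictness then reduces to $\Pr(\mathcal{B}_h^{v^c,L3}\setminus\mathcal{B}_h^{v^b,L3})>0$, which is exactly the right non-degeneracy condition: if the extra nodes can fail only when some base node fails, equality holds even with $q>0$.

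The price is literal nesting: shared nodes must be the same contracts carrying the same indicators $\mathcal{I}_{s,i}^{v}$. Two distinct vaults have distinct vault contracts, and embedding merely analogous nodes does not suffice under dependence. For example, if $v^b$ has $k_b\ge 2$ independent nodes and $v^c$ has more nodes that all fail jointly, then $q_{\mathrm{code}}^{v^c}=q<1-(1-q)^{k_b}=q_{\mathrm{code}}^{v^b}$. So route~2 buys robustness to dependence for literal extensions of a single vault (adding a protocol or bridge leg). Route~1 covers arbitrary pairs with $k_b<k_c$, at the cost of independence and homogeneity.

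Your restriction to $q\in(0,1)$ is correct, since the stated ``whenever $q>0$'' fails at $q=1$. One small correction to your last step: by Definition~\ref{def:level3_state} the node set contains an oracle contract for each collateral and debt asset, so collateral composition does enter $|V^v|$. The defensible reading is that the incremental protocol, bridge and messaging nodes, and their $q_i$, are unaffected by collateral quality or by $\Theta_t^v$.
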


\begin{remark}[Estimating $q_i(h)$ from onchain signals]
\label{rem:L3_estimation}
Estimation of $q_i(h)$ for individual contracts requires smart contract security methodology outside the scope of this paper. As reference calibrations, behavioral and code structural exploit prediction models achieve AUC of 0.887 on 220 confirmed security breaches with a lending protocol detection rate of 96.4\% \citep{ParhizkariIannillo2025}, demonstrating that $q_i(h)$ is estimable from onchain signals and that the dominance condition \eqref{eq:L3_dominance} is computationally tractable. A deployed protocol level estimate from Credora places the Morpho Protocol annual smart contract exploit probability at 0.13\% \citep{CredoraMethodology} which is a reference Level 3 point estimate for a well audited, single protocol vault architecture. For cross protocol vaults with bridge dependencies, empirical exploit frequency is substantially higher. Therefore, Level 3 assessment should be treated as a prerequisite for depositor risk evaluation in such vaults, prior to any Level 1 analysis.
\end{remark}

\section{Measurement framework}
\label{sec:metrics}

Level 1 measurement constructs tractable statistics that quantify the depositor loss channels implied by Propositions~\ref{prop:oracle_failure}-\ref{prop:network_congestion}, holding the parameter vector $\Theta_t^{v}$ fixed over the evaluation horizon. Continuous time definitions are used because vault dynamics and failure modes are naturally stated as state contingent events (liquidation trigger regions, execution viability, boundary hitting for utilization) on $\{\mathcal{F}_t\}$. Discrete time estimators are required for operational implementation because the relevant inputs are observed onchain at transaction, block or hourly/daily frequency. Therefore, each is defined as a function $\mathrm{V}j(v,t)=\Phi_j(X_t^{v},\mathcal{M})$ and paired with a discrete estimator designed for practical computation from public data. The five metrics are constructed to be non redundant where V1 targets valuation to execution divergence, V2 targets recovery endogeneity to liquidation mass, V3 targets funding liquidity fragility via utilization dynamics, V4 targets oracle integrity and V5 targets congestion/MEV driven execution failure. The subsection on aggregation defines how these five dimensions are combined into a vault credit score VCS that is interpretable for monitoring and risk limits.

\subsection{Stress adjusted asset coverage}
\label{subsec:V1}

V1 metric represents stress adjusted asset coverage and is motivated by Proposition~\ref{prop:oracle_failure}, which shows that oracle based coverage $\mathrm{ACR}_t^{v}$ can overstate solvency when execution prices diverge from oracle prices in stress. Traditional coverage tests treat valuation haircuts as exogenous, DeFi requires a state dependent correction because liquidation proceeds are generated by endogenous onchain execution. Therefore, V1 quantifies worst case (or expected worst) effective coverage $\widetilde{\mathrm{ACR}}_t^{v}$ over a stress scenario set, providing a solvency buffer that is robust to oracle execution divergence.

Let $\mathrm{ACR}_t^{v}$ be the oracle based vault coverage ratio in \eqref{eq:vault_acr}. For each collateral asset $a\in\mathcal{C}^{v}$, define the instantaneous execution deviation:
\begin{equation}
\varepsilon_{a,t} \equiv 1-\frac{\widetilde{P}_{a,t}}{P_{a,t}}\in[0,1)
\end{equation}
and define the vault collateral weight $\omega_{a,t}^{v}$ as in Proposition~\ref{prop:oracle_failure}. The collateral weighted oracle execution deviation is:
\begin{equation}
\bar{\varepsilon}_t^{v} \equiv \sum_{a\in\mathcal{C}^{v}} \omega_{a,t}^{v}\cdot\varepsilon_{a,t}
\end{equation}
Let $Z_t$ denote a vector of stress indicators measurable in $\mathcal{F}_t$, e.g., realized volatility of $P_{a,\cdot}$, aggregate liquidation volume, and congestion proxies, and let $\mathcal{S}$ be a stress scenario set represented by constraints $Z_t\in\mathcal{Z}(s)$ for $s\in\mathcal{S}$. Define the stress conditional effective coverage:
\begin{equation}
\mathrm{V1}(v,t;s) \equiv \widetilde{\mathrm{ACR}}_t^{v}\big|_{s}
\equiv \mathrm{ACR}_t^{v}\Big(1-\bar{\varepsilon}_t^{v}\big|_{Z_t\in\mathcal{Z}(s)}\Big)
\label{eq:M1_def}
\end{equation}
and define the vault level stress adjusted coverage as the expected worst coverage over $\mathcal{S}$:
\begin{equation}
\mathrm{V1}(v,t)\equiv \inf_{s\in\mathcal{S}} \ \mathbb{E}\!\left[\mathrm{V1}(v,t;s)\mid\mathcal{F}_t\right]
\end{equation}

\begin{assumption}[A1-1]
\label{ass:M1-1}
(\emph{Execution deviation is measurable and scenario conditionable.})
\end{assumption}
The process $\varepsilon_{a,t}$ is $\mathcal{F}_t$ measurable through observed liquidation transactions or executable quotes, and admits a well defined conditional distribution given $Z_t\in\mathcal{Z}(s)$. This is plausible when historical liquidation events and contemporaneous DEX state are observable. It fails when execution prices are unobservable or when stress proxies omit relevant drivers.

\begin{assumption}[A1-2]
\label{ass:M1-2}
(\emph{Scenario set $\mathcal{S}$ is economically meaningful.})
\end{assumption}
The scenario constraints $\mathcal{Z}(s)$ reflect economically relevant tail states of $(P_{a,t},D_{a,t}^{\mathrm{pool}},G_t)$ and it fails under structural breaks in liquidity provision or oracle design.

\begin{proposition}[V1: coverage bound and dominance]
\label{prop:M1_coverage_bound}
For any vault $v$ and time $t$, $\mathrm{V1}(v,t)\le \mathrm{ACR}_t^{v}$. Moreover, if $\bar{\varepsilon}_t^{v}=0$ almost surely under all $s\in\mathcal{S}$, then $\mathrm{V1}(v,t)=\mathrm{ACR}_t^{v}$.
\end{proposition}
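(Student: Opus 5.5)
The argument is direct from the definition of $\mathrm{V1}(v,t)$ in \eqref{eq:M1_def} together with Assumption~\ref{ass:A0}. The key observation is that $\mathrm{ACR}_t^{v}$ is $\mathcal{F}_t$-measurable: it is the current coverage ratio computed from onchain state and oracle marks. It can therefore be pulled out of the conditional expectation. For each scenario $s\in\mathcal{S}$ we then have
\begin{equation*}
\mathbb{E}\!\left[\mathrm{V1}(v,t;s)\mid\mathcal{F}_t\right]=\mathrm{ACR}_t^{v}\Bigl(1-\mathbb{E}\bigl[\bar{\varepsilon}_t^{v}\big|_{Z_t\in\mathcal{Z}(s)}\mid\mathcal{F}_t\bigr]\Bigr).
\end{equation*}
Assumption~\ref{ass:M1-1} ensures that this scenario-conditional expectation is well defined.

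For the bound, Assumption~\ref{ass:A0} gives $\varepsilon_{a,t}\in[0,1)$ for every collateral asset. The weights $\omega_{a,t}^{v}$ are nonnegative and sum to one, so $\bar{\varepsilon}_t^{v}$ is a convex combination of the $\varepsilon_{a,t}$ and also lies in $[0,1)$, scenario by scenario. Its conditional expectation is therefore nonnegative. Since $\mathrm{ACR}_t^{v}\ge 0$, because $A_t^{v}$ and $L_t^{v}$ are nonnegative, each scenario value is at most $\mathrm{ACR}_t^{v}$. Taking the infimum over $\mathcal{S}$ preserves the inequality, provided $\mathcal{S}$ is nonempty. This gives $\mathrm{V1}(v,t)\le\mathrm{ACR}_t^{v}$. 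The equality case is direct: if $\bar{\varepsilon}_t^{v}=0$ almost surely under every $s$, each conditional expectation vanishes, every scenario value equals $\mathrm{ACR}_t^{v}$, and so does the infimum.

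The main obstacle is bookkeeping about hypotheses rather than any computation. First, we must state that $\mathcal{S}\neq\emptyset$, since otherwise the infimum is $+\infty$ and the bound fails. Second, the bound relies on Assumption~\ref{ass:A0}. If A0 is relaxed and $\varepsilon_{a,t}<0$, the inequality can reverse. I would handle this by invoking the remark after A0: negative deviations raise effective coverage, so the proposition is meant under A0. Third, by Remark~\ref{rem:A0_V5_interaction}, when execution is non-viable $\varepsilon_{a,t}$ is to be read as the counterfactual deviation conditional on viability. That deviation still lies in $[0,1)$, so the argument is unaffected.
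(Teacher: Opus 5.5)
Your proof is correct and follows the paper's argument: Assumption~\ref{ass:A0} gives $\bar{\varepsilon}_t^{v}\in[0,1)$, so each scenario value is at most $\mathrm{ACR}_t^{v}$; taking the conditional expectation and then the infimum over $\mathcal{S}$ preserves this, and the equality case is immediate. The only differences are that you pull $\mathrm{ACR}_t^{v}$ out of the conditional expectation before bounding, where the paper bounds pointwise first, and that you state explicitly the hypotheses the paper uses silently: $\mathrm{ACR}_t^{v}\ge 0$, $\mathrm{ACR}_t^{v}$ is $\mathcal{F}_t$-measurable, and $\mathcal{S}\neq\emptyset$.
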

\begin{proof}
Assumption~\ref{ass:A0} ensures $\varepsilon_{a,t}\in[0,1)$, which implies $\bar{\varepsilon}_t^{v}\in[0,1)$ and hence $(1-\bar{\varepsilon}_t^{v})\le 1$ pointwise. Therefore, $\mathrm{V1}(v,t;s)\le \mathrm{ACR}_t^{v}$ for each $s$ and taking conditional expectations and the infimum preserves the inequality. If $\bar{\varepsilon}_t^{v}=0$ almost surely in each scenario, then $\mathrm{V1}(v,t;s)=\mathrm{ACR}_t^{v}$ for all $s$.
\end{proof}

\begin{proposition}[V1 concentration monotonicity under convex impact]
\label{prop:M1_concentration}
Let $H(\omega)\equiv \sum_{a\in\mathcal{C}^{v}} \omega_{a,t}^{v}\cdot\varepsilon_{a,t}$ denote the collateral weighted deviation at time $t$. Suppose $\varepsilon_{a,t}$ is nonnegative and, conditional on the scenario, is a convex function of the portfolio weight $\omega_{a,t}^{v}$ in a neighborhood of the current allocation (reflecting increasing marginal price impact under concentrated liquidations). Then $\mathrm{V1}(v,t)$ is weakly decreasing under any mean preserving increase in weight concentration.
\end{proposition}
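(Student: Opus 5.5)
The plan is to reduce the claim to a statement about a single scenario and then pass through the expectation and the infimum in the definition of $\mathrm{V1}(v,t)$. Fix $s\in\mathcal{S}$. By \eqref{eq:M1_def}, $\mathrm{V1}(v,t;s)=\mathrm{ACR}_t^{v}\,(1-H(\omega))$, and $\mathrm{ACR}_t^{v}$ does not depend on how collateral value is split across assets. Hence it suffices to show that $H(\omega)$ is weakly \emph{increasing} under a mean preserving increase in concentration. Since $\mathrm{ACR}_t^{v}\ge 0$, the product is then weakly decreasing. Taking $\mathbb{E}[\cdot\mid\mathcal{F}_t]$ preserves this pointwise ordering, and so does $\inf_{s}$, because an infimum of functions that each weakly decrease weakly decreases. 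This is the same monotonicity step used in the proof of Proposition~\ref{prop:M1_coverage_bound}.

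To write the core step, I set $\varepsilon_{a,t}=\phi_a(\omega_{a,t}^{v})$ with $\phi_a\ge 0$ convex near the current allocation, as the statement allows. Then $H(\omega)=\sum_a \omega_a\phi_a(\omega_a)=\sum_a g_a(\omega_a)$ with $g_a(x)\equiv x\phi_a(x)$. The first task is to show that each $g_a$ is convex. On the relevant domain $x\ge 0$ we have $g_a''=2\phi_a'+x\phi_a''$. The term $x\phi_a''$ is nonnegative by convexity, and $\phi_a'\ge 0$ holds because impact is increasing in liquidated quantity, in the spirit of $\partial f/\partial Q>0$ in Proposition~\ref{prop:recovery_endogeneity}. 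I will state that monotonicity explicitly as the reading of ``increasing marginal price impact''. If differentiability is unavailable, the same conclusion follows from the fact that the product of nonnegative, nondecreasing, convex functions on $[0,\infty)$ is convex.

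Next I formalise ``mean preserving increase in weight concentration''. I take it to mean that the new weight vector $\omega'$ majorizes $\omega$, equivalently that $\omega$ is obtained from $\omega'$ by a finite sequence of Robin Hood (Pigou--Dalton) transfers, with $\sum_a\omega_a=1$ preserved. Karamata's inequality, or Schur-convexity, then gives $H(\omega')\ge H(\omega)$. I will first handle the symmetric case $\phi_a=\phi$, where $H$ is Schur-convex and the inequality is immediate. I will then verify the general case one transfer at a time: moving mass $\delta$ from asset $b$ to asset $a$ changes $H$ by $g_a(\omega_a+\delta)-g_a(\omega_a)-[g_b(\omega_b)-g_b(\omega_b-\delta)]$.

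I expect the main obstacle to be the heterogeneous case. With distinct $\phi_a$, majorization alone does not sign that difference, since moving weight toward a deep, low-impact asset can lower $H$ even while concentration rises. I would try to resolve this by reading ``mean preserving'' in the second-order-stochastic-dominance sense over the distribution of per-asset impact. Alternatively, I would restrict to transfers toward the asset with the larger marginal deviation $g_a'(\omega_a)\ge g_b'(\omega_b)$, for which convexity of the $g$'s gives the sign directly. A third option is a common-impact-function assumption $\phi_a=\phi$ within the scenario. I will state whichever of these is adopted as an explicit added condition. The neighborhood qualifier restricts the result to small transfers, which is consistent with a local, transfer-by-transfer argument.
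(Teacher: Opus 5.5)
Your outer argument (fix $s$, show $H$ weakly increases, use that $\mathrm{V1}(v,t;s)$ is affine and decreasing in $\bar{\varepsilon}_t^{v}$ by \eqref{eq:M1_def}, then pass through $\mathbb{E}[\cdot\mid\mathcal{F}_t]$ and $\inf_s$) is the same as the paper's. The difference is the core step, which the paper settles in one sentence: convexity of $\varepsilon_{a,t}$ in $\omega_{a,t}^{v}$ plus ``Jensen's inequality'' makes $\mathbb{E}[H(\omega)\mid\mathcal{F}_t,Z_t\in\mathcal{Z}(s)]$ rise under mean preserving concentration, glossed as ``a spread in the weight distribution toward the high impact collateral.''

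Your majorization/Karamata argument on $g_a(x)=x\phi_a(x)$ is the rigorous form of that step. For a fixed-sum weight vector, majorization is the finite version of a Rothschild--Stiglitz spread, and Karamata is the matching Jensen-type inequality. It also exposes two things the paper's sentence skips.

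\textbf{Convexity of the product.} Convexity is needed for $x\phi_a(x)$, not for $\phi_a$, and that requires your added monotonicity. For example, $\phi(x)=(1-x)^2$ is convex and nonnegative, yet $x(1-x)^2$ is concave on $[0,2/3)$.

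\textbf{Heterogeneous impact.} With heterogeneous impact functions the statement as written is false. Take $\phi_a\equiv 0$ and $\phi_b\equiv c>0$; these are convex, nonnegative and even nondecreasing. Moving from $(0.6,0.4)$ to the more concentrated $(0.7,0.3)$ lowers $H$ from $0.4c$ to $0.3c$. So some directional condition is unavoidable.

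The paper's parenthetical is in effect your second repair, and that is the one to adopt. Convexity gives $g_a(\omega_a+\delta)-g_a(\omega_a)\ge\delta\,g_a'(\omega_a)$ and $g_b(\omega_b)-g_b(\omega_b-\delta)\le\delta\,g_b'(\omega_b)$, using right and left derivatives respectively if nonsmooth. Hence every transfer with $g_a'(\omega_a)\ge g_b'(\omega_b)$ weakly raises $H$. This subsumes the symmetric case: there a concentration increasing transfer runs from the smaller to the larger weight, and $g'$ is nondecreasing. If $\phi_a$ is random given the scenario, impose the condition on the still convex functions $x\mapsto\mathbb{E}[g_a(x)\mid\mathcal{F}_t,Z_t\in\mathcal{Z}(s)]$.

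Under this condition majorization itself does no work. ``Concentration'' in the statement is really shorthand for reallocation toward higher marginal impact. The SOSD reading, as you phrase it, is not yet a precise condition and can be dropped.
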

\begin{proof}
Under convexity of $\varepsilon_{a,t}$ in $\omega_{a,t}^{v}$, Jensen's inequality implies that $\mathbb{E}[H(\omega)\mid\mathcal{F}_t,Z_t\in\mathcal{Z}(s)]$ increases under mean preserving concentration (a spread in the weight distribution toward the high impact collateral). Since $\mathrm{V1}(v,t;s) = \mathrm{ACR}_t^{v}(1 - \bar{\varepsilon}_t^{v}|_s)$ is affine decreasing in $\bar{\varepsilon}_t^{v}$ by \eqref{eq:M1_def}, the conditional expectation $\mathbb{E}[\mathrm{V1}(v,t;s)\mid\mathcal{F}_t]$ weakly decreases for each $s$ and the infimum over $s$ preserves the ordering.
\end{proof} 
%The concavity condition requires empirical verification. 

\begin{corollary}[V1 discrete estimator]
\label{cor:M1_estimator}
With a fixed estimation window length $W$ (e.g., $W=30$ days) and a stress set $\mathcal{S}$ constructed from historical tail events, e.g., worst $N$ daily drawdowns. For each collateral asset $a$, estimate the stressed execution deviation $\widehat{\varepsilon}_{a,t}(s)$ as the median realized slippage from liquidation events or executable DEX quotes in scenario $s$ over $[t-W,t]$. Using onchain snapshots, compute collateral weights $\widehat{\omega}_{a,t}^{v}$ from oracle valued collateral quantities and then compute:
\begin{equation}
\widehat{\mathrm{V1}}(v,t;s)=\widehat{\mathrm{ACR}}_t^{v}
\left(1-\sum_{a\in\mathcal{C}^{v}} \widehat{\omega}_{a,t}^{v}\cdot\widehat{\varepsilon}_{a,t}(s)\right)
\end{equation}
reporting $\widehat{\mathrm{V1}}(v,t)=\min_{s\in\mathcal{S}} \widehat{\mathrm{V1}}(v,t;s)$. Required data inputs are (i) vault liabilities $L_t^{v}$ and oracle valued assets for $\widehat{\mathrm{ACR}}_t^{v}$, (ii) liquidation transaction data to infer realized slippage, (iii) DEX pool depth snapshots to define scenarios and (iv) oracle price feed history. 
%The choice of $W$, scenario construction, and stress quantile are \unverified.
\end{corollary}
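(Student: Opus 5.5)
The corollary is essentially a plug-in statement, so the plan is to show that $\widehat{\mathrm{V1}}(v,t)$ is the sample analog of $\mathrm{V1}(v,t)$ obtained by replacing each population object in \eqref{eq:M1_def} with an onchain estimator, and that the replacement preserves the structural properties established in Proposition~\ref{prop:M1_coverage_bound}. First I would fix a finite scenario set $\mathcal{S}$ built from the $N$ worst daily drawdowns in $[t-W,t]$. Then the infimum in the definition of $\mathrm{V1}(v,t)$ is over finitely many $s$ and becomes a minimum. This justifies reporting $\min_{s\in\mathcal{S}}\widehat{\mathrm{V1}}(v,t;s)$.

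Second, I would handle the three inputs term by term.
\begin{enumerate}
\item $\widehat{\mathrm{ACR}}_t^{v}$ and the weights $\widehat{\omega}_{a,t}^{v}$ are deterministic functions of onchain snapshots (liabilities $L_t^{v}$, oracle prices, collateral quantities). They therefore equal their population counterparts up to data error and are $\mathcal{F}_t$-measurable.
\item For $\widehat{\varepsilon}_{a,t}(s)$, I would invoke Assumption~\ref{ass:M1-1}: $\varepsilon_{a,t}$ is observable from liquidation transactions or executable quotes and has a well-defined conditional law given $Z_t\in\mathcal{Z}(s)$. The median of realized slippage over events in scenario $s$ is then a consistent estimator of the conditional median as the number of events grows, by the standard sample-quantile argument under a continuity condition on that law.
\item Assumption~\ref{ass:M1-2} is what makes these events representative of the tail state $\mathcal{Z}(s)$.
\end{enumerate}
Because $\widehat{\mathrm{V1}}(v,t;s)$ is affine in the vector $(\widehat{\varepsilon}_{a,t}(s))_a$ with coefficients $\widehat{\mathrm{ACR}}_t^{v}\widehat{\omega}_{a,t}^{v}$, the continuous mapping theorem transfers consistency to each $\widehat{\mathrm{V1}}(v,t;s)$. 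Since the minimum of finitely many continuous functions is continuous, consistency also transfers to $\widehat{\mathrm{V1}}(v,t)$.

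Third, I would verify that the estimator inherits Proposition~\ref{prop:M1_coverage_bound}. Under Assumption~\ref{ass:A0} every realized slippage lies in $[0,1)$, so each median does too. The weighted sum $\sum_a\widehat{\omega}_{a,t}^{v}\widehat{\varepsilon}_{a,t}(s)$ then lies in $[0,1)$, which gives $\widehat{\mathrm{V1}}(v,t)\le\widehat{\mathrm{ACR}}_t^{v}$, with equality when all stressed slippages are zero. Finally, I would list the data items (i)--(iv) as exactly the inputs consumed in the three steps above.

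The main obstacle is that the estimator uses a \emph{median} where the definition of $\mathrm{V1}(v,t)$ uses a conditional \emph{expectation}. I would first resolve this by noting that $\mathrm{V1}(v,t;s)$ is affine in $\bar{\varepsilon}_t^{v}$. The estimator therefore targets $\mathbb{E}[\cdot]$ exactly if the conditional slippage distribution is symmetric. If it is not symmetric, I would restate the target as a median-based (robust) version of V1 and bound the discrepancy by the mean--median gap, which is at most one conditional standard deviation. A secondary difficulty is the small number of liquidation events per scenario. Here I would fall back on executable DEX quotes evaluated at scenario pool depths to augment the sample, and treat that substitution as a modelling choice rather than a proved equivalence.
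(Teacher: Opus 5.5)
The paper gives no proof of this corollary. It is an operational definition obtained by substituting onchain sample analogs into \eqref{eq:M1_def} and replacing the infimum over $\mathcal{S}$ by a minimum over a finite historical set. Your first step and your data bookkeeping capture exactly that. Your third step is a correct sample-level mirror of the proof of Proposition~\ref{prop:M1_coverage_bound}: a median of values in $[0,1)$ stays in $[0,1)$, and oracle-valued weights lie in the simplex. The paper does not state this property, but it is a useful sanity check on the estimator. One caveat: Assumption~\ref{ass:A0} is stated for liquidation events, not for executable DEX quotes. When quotes are used, a stale oracle can produce negative measured slippage and hence $\widehat{\mathrm{V1}}(v,t;s)>\widehat{\mathrm{ACR}}_t^{v}$. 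You need either to extend Assumption~\ref{ass:A0} to quotes or to truncate $\widehat{\varepsilon}_{a,t}(s)$ at zero.

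Your two remaining additions go beyond the statement. The consistency claim has no asymptotic regime to rest on. $W$ is fixed, each $s$ is a single historical tail episode, and liquidations within an episode are strongly clustered; the paper itself requires clustered/HAC inference in Procedure~\ref{proc:P1_lambda}. Even with arbitrarily many liquidations inside one episode, the sample median converges to that episode's median slippage, given its particular depth and gas state. It does not converge to the median conditional on $Z_t\in\mathcal{Z}(s)$. Consistency would need the number of independent stress episodes to grow, e.g.\ through pooling across vaults and chains, which the corollary does not do. Since the statement asserts no statistical property, either drop this step or state it under such a pooling regime.

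Your median-versus-mean point is correct and is the substantive issue the paper leaves implicit. Because $\mathrm{V1}(v,t;s)$ is affine in $\bar{\varepsilon}_t^{v}$ with $\mathcal{F}_t$-measurable coefficients, the estimator targets a median-based variant of $\mathrm{V1}(v,t)$. The inequality $|\text{mean}-\text{median}|\le\sigma$ bounds the per-scenario discrepancy by $\mathrm{ACR}_t^{v}\sum_{a}\omega_{a,t}^{v}\sigma_{a}(s)$, where $\sigma_a(s)$ is the conditional standard deviation of $\varepsilon_{a,t}$ in scenario $s$. This bound passes to the minimum since $|\min_s x_s-\min_s y_s|\le\max_s|x_s-y_s|$. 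Mind the likely sign, however. Stressed slippage is bounded below by zero and right-skewed, so the median typically lies below the mean. Then $\widehat{\mathrm{V1}}$ overstates the expectation-based coverage, which is anti-conservative for a metric intended as a stress floor.
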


Collateral eligibility, caps and collateral factors $w_a^{v}$ are governance controlled inputs that shape collateral composition and concentration. A complete assessment of how effectively parameters are set and updated is deferred to a companion paper on governance risk. For Level 1 purposes, these parameters are treated as fixed inputs and their sensitivity is evaluated through stress scenarios in Section~\ref{sec:empirical}.

Within V1, the oracle based coverage ratio $\mathrm{ACR}_t^{v}$ is computed using oracle prices and collateral factors. A distinct but related static buffer exists between a market's maximum borrowing ratio $\mathrm{LTV}_0$ (governing initial borrowing capacity) and its liquidation threshold $\mathrm{LLTV}$ (above which liquidation is enabled). For a fully leveraged position, the price move required to trigger liquidation from initial leverage is:
\begin{equation}
\frac{p_t^{\mathrm{trigger}}}{p_0}
= \frac{\mathrm{LTV}_0}{\mathrm{LLTV}}
\label{eq:trigger_price}
\end{equation}
so the buffer $(\mathrm{LLTV}-\mathrm{LTV}_0)$ directly controls the
percentage price decline that can be absorbed before forced liquidation. A
monitoring frequency adequacy condition requires:
\begin{equation}
\mathrm{LLTV}-\mathrm{LTV}_0
\;\ge\; k_q\cdot\frac{\sigma_a}{\sqrt{f_{\mathrm{mon}}}}
\label{eq:buffer_adequacy}
\end{equation}
where $\sigma_a/\sqrt{f_{\mathrm{mon}}}$ approximates the one standard deviation price move over a single monitoring interval when liquidation eligibility is checked at frequency $f_{\mathrm{mon}}$, the quantile multiplier $k_q$ scales this to the desired tail coverage. Under a GBM approximation, this is a lower bound and fat tailed crypto returns require substantially larger buffers in practice. The buffer adequacy condition \eqref{eq:buffer_adequacy} provides a quantitative criterion for assessing whether governance set LTV/LLTV parameters are plausible for a given collateral's volatility and monitoring regime. Formal governance measurement of parameter setting quality is deferred to the companion paper.

\subsection{Volume adjusted recovery and expected shortfall}
\label{subsec:V2}

V2 metric represents volume adjusted recovery and expected shortfall, and is motivated by Proposition~\ref{prop:recovery_endogeneity}, which establishes that realized recovery rates deteriorate with liquidation mass due to price impact and limited depth. Traditional credit modeling often treats LGD as exogenous and independent of the default event \citep{AltmanRestiSironi2004}. DeFi requires a recovery model that conditions on liquidation clustering and observable depth. V2 operationalizes this by making liquidation volume an explicit conditioning variable in expected shortfall, capturing superlinear amplification when liquidations cluster.

Let $\Delta_t^{v}$ denote the vault shortfall (Definition~\ref{def:loss_function}) and let $Q_{a,t}^{\mathrm{liq}}$ be aggregate liquidation volume in collateral asset $a$ over an infinitesimal window around $t$, and let $D_{a,t}^{\mathrm{pool}}$ denote the corresponding effective onchain depth. Assume the linear impact approximation:
\begin{equation}
f\!\left(Q_{a,t}^{\mathrm{liq}},D_{a,t}^{\mathrm{pool}}\right)=\lambda_a\cdot\frac{P_{a,t}\cdot Q_{a,t}^{\mathrm{liq}}}{D_{a,t}^{\mathrm{pool}}}
\qquad \lambda_a>0
\end{equation}
where $\lambda_a$ is an asset and venue specific impact coefficient. The total liquidation notional is defined as:
\begin{equation}
Q_t^{v} \equiv \sum_{a\in\mathcal{C}^{v}} P_{a,t}\cdot Q_{a,t}^{\mathrm{liq}}
\end{equation}
The volume adjusted expected shortfall is defined as:
\begin{equation}
\mathrm{V2}(v,t)\equiv \mathbb{E}\!\left[\Delta_t^{v} \,\big|\, \mathcal{F}_t,\, Q_t^{v}\right]
\label{eq:M2_shortfall}
\end{equation}
with $\Delta_t^{v}$ computed using realized execution prices $\widetilde{P}_{a,t}=P_{a,t}-f(\cdot)$ and realized recovery rates \eqref{eq:recovery_rate}.

\begin{assumption}[A2-1]
\label{ass:M2-1}
(\emph{Paraimpact approximation.}) \end{assumption}
Execution price impact is approximated by a linear function of normalized liquidation notional, with coefficient $\lambda_a$. This can fail under concentrated liquidity AMMs, multi venue routing and crisis nonlinearities with the consequence of tail underestimation.

\begin{assumption}[A2-2]
\label{ass:M2-2}
(\emph{Depth observability.}) 
\end{assumption}
Effective depth $D_{a,t}^{\mathrm{pool}}$ is observable from onchain liquidity curves or executable quotes. This fails under fragmented or opaque routing. 

\begin{assumption}[A2-3]
\label{ass:M2-3}
(\emph{Convex shortfall in execution deviations.})
\end{assumption}
The shortfall function $\Delta_t^{v}$ is convex in the vector $\{f(Q_{a,t}^{\mathrm{liq}}, D_{a,t}^{\mathrm{pool}})\}_{a\in\mathcal{C}^{v}}$. This holds under standard pool accounting where assets are valued at execution prices and the shortfall is the positive part of the deficit. It fails if hedging or protocol level insurance partially absorbs losses in a concave manner.

\begin{proposition}[V2 superlinearity under linear impact]
\label{prop:M2_superlinearity}
Under the linear impact approximation and Assumption~\ref{ass:M2-3}, holding $(P_{a,t},D_{a,t}^{\mathrm{pool}},\Theta_t^{v})$ fixed, the mapping $Q_t^{v}\mapsto \mathrm{V2}(v,t)$ is convex.
\end{proposition}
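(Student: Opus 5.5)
The plan is to show convexity pathwise (scenario by scenario) and then pass it through the conditional expectation defining $\mathrm{V2}(v,t)$ in \eqref{eq:M2_shortfall}.

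\emph{Step 1: the execution deviations are linear in $Q_t^{v}$.} Hold $(P_{a,t},D_{a,t}^{\mathrm{pool}},\Theta_t^{v})$ fixed. We parametrize the liquidation vector along a ray, $P_{a,t}Q_{a,t}^{\mathrm{liq}}=\phi_{a}\,Q_t^{v}$. Here the shares $\phi_a\ge 0$ sum to one and are fixed by the collateral composition of the liquidated positions, which is pinned down by $\Theta_t^{v}$ and the state $X_t^{v}$. Under the linear impact approximation each deviation becomes
\[
f_a(Q_t^{v})=\lambda_a\,\phi_a\,Q_t^{v}/D_{a,t}^{\mathrm{pool}}.
\]
The vector $\{f_a\}_{a\in\mathcal{C}^{v}}$ is therefore an affine, in fact linear, function of the scalar $Q_t^{v}$.

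\emph{Step 2: pathwise convexity.} By Assumption~\ref{ass:M2-3}, $\Delta_t^{v}$ is convex in the vector $\{f_a\}$ for each realization of the remaining randomness, which comes from oracle paths, fees and other unliquidated positions. A convex function composed with an affine map is convex. Hence $Q\mapsto\Delta_t^{v}(Q,\xi)$ is convex for every realization $\xi$. As a sanity check in the canonical case, $A_t^{v}$ is affine decreasing in each $f_a$, because realized proceeds are $\sum_a(P_{a,t}-f_a)\Delta C_{a,t}$ minus fees. Then $\Delta_t^{v}=(L_t^{v}-A_t^{v})^+$ is the positive part of an affine function, which is convex. This is the same reasoning used in the proof of Proposition~\ref{prop:recovery_endogeneity}.

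\emph{Step 3: pass convexity through the expectation.} For $Q_1,Q_2$ and $\theta\in[0,1]$, the pathwise inequality
\[
\Delta(\theta Q_1+(1-\theta)Q_2,\xi)\le\theta\Delta(Q_1,\xi)+(1-\theta)\Delta(Q_2,\xi)
\]
holds for every $\xi$. Taking $\mathbb{E}[\cdot\mid\mathcal{F}_t]$ preserves it by monotonicity and linearity of conditional expectation. This yields convexity of $Q\mapsto\mathrm{V2}(v,t)$, provided the conditional law of $\xi$ does not itself depend on $Q$. Integrability holds because $0\le\Delta_t^{v}\le L_t^{v}$ plus fees.

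\emph{Main obstacle.} The delicate point is Step 3's requirement that conditioning on $Q_t^{v}$ in \eqref{eq:M2_shortfall} not change the distribution of the other randomness. If larger $Q_t^{v}$ selects different stress states, for example higher gas or thinner depth, the conditional expectation is a mixture whose weights move with $Q$, and convexity can fail. My approach is to read the statement's ``holding $(P_{a,t},D_{a,t}^{\mathrm{pool}},\Theta_t^{v})$ fixed'' as a comparative-statics evaluation. In that reading, $Q_t^{v}$ enters only through the impact function and the residual randomness $\xi$ is held at its $\mathcal{F}_t$-conditional law. The proof should state this interpretation explicitly.

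A second, lesser issue is the fixed-mix assumption $\phi_a$ in Step 1. Without it, a general vector $Q\mapsto\{Q_{a,t}^{\mathrm{liq}}\}$ that is merely monotone would not preserve convexity. I would state the fixed-mix assumption as part of the parametrization.
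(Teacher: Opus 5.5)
Your proposal is correct and follows the same route as the paper's proof: the linear impact approximation makes the deviation vector affine in liquidation volume, Assumption~\ref{ass:M2-3} with composition by an affine map gives convexity of $\Delta_t^{v}$ for each realization, and conditional expectation preserves that convexity. Your fixed-mix ray parametrization and your requirement that the law of the residual randomness not move with $Q_t^{v}$ make explicit what the paper's proof assumes without saying so in its steps ``$Q_t^{v}$ is a linear aggregation'' and ``conditional expectation preserves convexity.''
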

\begin{proof}
Under the linear impact approximation, each execution deviation $f(Q_{a,t}^{\mathrm{liq}}, D_{a,t}^{\mathrm{pool}}) = \lambda_a Q_{a,t}^{\mathrm{liq}}/D_{a,t}^{\mathrm{pool}}$ is affine in $Q_{a,t}^{\mathrm{liq}}$. Under Assumption~\ref{ass:M2-3}, $\Delta_t^{v}$ is convex in the vector of deviations, and hence (by composition with an affine map) convex in $Q_{a,t}^{\mathrm{liq}}$ for each $a$. Since $Q_t^{v} = \sum_a P_{a,t} Q_{a,t}^{\mathrm{liq}}$ is a linear aggregation, $\Delta_t^{v}$ is convex in $Q_t^{v}$. Conditional expectation preserves convexity, so $\mathrm{V2}(v,t) = \mathbb{E}[\Delta_t^{v}\mid\mathcal{F}_t, Q_t^{v}]$ is convex in $Q_t^{v}$.
\end{proof}

\begin{proposition}[V2 monotonicity in impact coefficient]
\label{prop:M2_monotone}
If $\lambda_a \le \lambda_a'$ for all $a\in\mathcal{C}^{v}$, then $\mathrm{V2}(v,t;\lambda)\le \mathrm{V2}(v,t;\lambda')$ almost surely, where $\mathrm{V2}(v,t;\lambda)$ denotes \eqref{eq:M2_shortfall} evaluated under impact coefficients $\{\lambda_a\}$.
\end{proposition}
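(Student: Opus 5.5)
The argument is a pathwise comparison. I fix the scenario and realization and hold $(P_{a,t}, D_{a,t}^{\mathrm{pool}}, Q_{a,t}^{\mathrm{liq}}, \Theta_t^{v})$ fixed, so that only the impact coefficients differ. Under the linear impact approximation, the execution deviation for asset $a$ is
\[
f_a(\lambda_a)=\lambda_a P_{a,t}Q_{a,t}^{\mathrm{liq}}/D_{a,t}^{\mathrm{pool}}.
\]
Since $Q_{a,t}^{\mathrm{liq}}\ge 0$, $P_{a,t}>0$ and $D_{a,t}^{\mathrm{pool}}>0$, each $f_a$ is nondecreasing in $\lambda_a$. So $\lambda_a\le\lambda_a'$ for all $a$ gives $f_a(\lambda_a)\le f_a(\lambda_a')$ componentwise. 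It follows that $\widetilde{P}_{a,t}(\lambda)\ge \widetilde{P}_{a,t}(\lambda')$ for every collateral asset.

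Next I propagate this ordering through the accounting. By Definition~\ref{def:recovery}, realized recovery $\mathcal{R}_{u,t}^{v}$ is nondecreasing in each $\widetilde{P}_{a,t}$, because the seized quantities $\Delta C_{u,a,t}\ge 0$ and the fees are held fixed. Hence $\mathcal{R}_{u,t}^{v}(\lambda)\ge\mathcal{R}_{u,t}^{v}(\lambda')$ for every account. Summing over accounts, the realized liquidation value assets satisfy $A_t^{v}(\lambda)\ge A_t^{v}(\lambda')$, which is the same monotonicity used in the proof of Proposition~\ref{prop:recovery_endogeneity}. The map $x\mapsto (L_t^{v}-x)^+$ is nonincreasing, so $\Delta_t^{v}(\lambda)\le\Delta_t^{v}(\lambda')$ pointwise.

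Finally, conditional expectation given $(\mathcal{F}_t, Q_t^{v})$ is a monotone operator. Applying it to the pointwise inequality yields $\mathrm{V2}(v,t;\lambda)\le\mathrm{V2}(v,t;\lambda')$ almost surely.

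The main obstacle is to make precise that the seized quantities $\Delta C_{u,a,t}$, the debt covered $\Delta B_{u,d,t}$ and the liquidation volumes $Q_{a,t}^{\mathrm{liq}}$ do not themselves depend on $\lambda$. If they did, the comparison would not be ceteris paribus. I would handle this by reading the statement as a comparison at fixed liquidation schedule. This is consistent with the conditioning on $Q_t^{v}$ in \eqref{eq:M2_shortfall} and with holding $\Theta_t^{v}$ fixed, since seizure is set at oracle prices times $(1+\pi)$ and therefore does not involve $\widetilde{P}_{a,t}$. I would also use that both conditional expectations are taken with respect to the same conditional law, so that the almost-sure comparison is meaningful. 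Assumption~\ref{ass:A0} is not needed, except to note that the ordering survives the truncation $\varepsilon_{a,t}<1$ whenever both impacts stay below $P_{a,t}$.
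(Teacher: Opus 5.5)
Your proposal is correct and follows the paper's own argument. A larger $\lambda_a$ lowers $\widetilde{P}_{a,t}$ at fixed $(Q_{a,t}^{\mathrm{liq}},D_{a,t}^{\mathrm{pool}})$, which lowers $A_t^{v}$ and raises $(L_t^{v}-A_t^{v})^{+}$, and monotonicity of conditional expectation then gives the result; your explicit fixed-liquidation-schedule and same-conditional-law caveats simply make precise the ceteris paribus reading that the paper leaves implicit.
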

\begin{proof}
Higher $\lambda_a$ weakly decreases $\widetilde{P}_{a,t}$ for any given $(Q_{a,t}^{\mathrm{liq}},D_{a,t}^{\mathrm{pool}})$, which weakly decreases realized assets $A_t^{v}$ and weakly increases $\Delta_t^{v}=(L_t^{v}-A_t^{v})^+$. Taking conditional expectations preserves the inequality. 
\end{proof}

\begin{corollary}[V2 discrete estimator]
\label{cor:M2_estimator}
Fix an estimation window $W$ and discretize time at $\Delta$, e.g., hourly. For each asset $a$, estimate $\lambda_a$ by regressing realized slippage from liquidation events (or executable quote slippage) on normalized liquidation notional:
\begin{equation}
\widehat{\varepsilon}_{a,\tau} \approx \widehat{\lambda}_a\cdot\frac{P_{a,\tau}\cdot Q_{a,\tau}^{\mathrm{liq}}}{D_{a,\tau}^{\mathrm{pool}}}
\qquad \tau\in[t-W,t]
\end{equation}
where $\widehat{\varepsilon}_{a,\tau}=1-\widetilde{P}_{a,\tau}/P_{a,\tau}$ is inferred from onchain execution. Then compute stressed shortfall paths by simulating liquidation volumes $Q_{a,\tau}^{\mathrm{liq}}$ under a stress scenario set, e.g., historical tail price moves, generating:
\begin{equation}
\widehat{\mathrm{V2}}(v,t) = \frac{1}{N}\sum_{n=1}^N \widehat{\Delta}_{t}^{v}(n)
\end{equation}
where $\widehat{\Delta}_{t}^{v}(n)$ is the simulated shortfall under scenario path $n$. Required inputs are liquidation event logs, pool depth snapshots $D_{a,\tau}^{\mathrm{pool}}$, oracle prices $P_{a,\tau}$ and vault wide liabilities $L_\tau^{v}$. The stress simulation requires an assumption about how $D_{a,\tau}^{\mathrm{pool}}$ evolves under the same price shock that generates liquidation clustering. A standard conservative assumption is that depth contracts proportionally with realized volatility, consistent with the empirical pattern that LP withdrawals accelerate as price moves exceed historical norms. Implementation requires selecting reference DEX pools per collateral asset. These should be reviewed at least monthly, since pool depth can migrate across venues following incentive program changes.\footnote{Pool selection bias is a material estimation risk for $\widehat{\lambda}_a$ since a thin or illiquid venue systematically overstates the impact coefficient, while an aggregated quote including routing optimization understates it. Conservative practice is to use worst case single venue depth and report sensitivity across alternative venue assumptions.}
\end{corollary}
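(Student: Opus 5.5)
The plan is to read Corollary~\ref{cor:M2_estimator} as a claim that the two-stage procedure (impact regression, then scenario simulation) gives a consistent plug-in estimator of $\mathrm{V2}(v,t)$ in \eqref{eq:M2_shortfall}. I would argue this in three steps that mirror the construction.

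\emph{Step 1 (estimating the impact coefficient).} Under Assumption~\ref{ass:M2-1}, the definition $\varepsilon_{a,\tau}=1-\widetilde{P}_{a,\tau}/P_{a,\tau}$ and the linear impact form of $f$ give the identity $\varepsilon_{a,\tau}=\lambda_a x_{a,\tau}+e_{a,\tau}$. Here $x_{a,\tau}\equiv P_{a,\tau}Q^{\mathrm{liq}}_{a,\tau}/D^{\mathrm{pool}}_{a,\tau}$, and the error $e_{a,\tau}$ absorbs measurement noise. This is a regression through the origin, with estimator $\widehat{\lambda}_a=\sum_\tau x_{a,\tau}\widehat{\varepsilon}_{a,\tau}/\sum_\tau x_{a,\tau}^2$. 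Under Assumption~\ref{ass:M2-2}, $x_{a,\tau}$ is observed. Together with $\mathbb{E}[e_{a,\tau}\mid x_{a,\tau}]=0$ and $\sum_\tau x_{a,\tau}^2\to\infty$ over the window, standard least-squares arguments give $\widehat{\lambda}_a\to\lambda_a$ in probability.

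\emph{Step 2 (continuity of the shortfall map).} For a fixed scenario path $n$, write the simulated shortfall as
\[
\widehat{\Delta}^v_t(n)=\Bigl(L^v_t-\sum_a P_{a,t}\,Q^{\mathrm{liq}}_{a}(n)\bigl(1-\widehat{\lambda}_a x_a(n)\bigr)+\cdots\Bigr)^{+},
\]
where the omitted terms are the non-liquidated assets, which do not depend on $\lambda$. Because $y\mapsto y^+$ is $1$-Lipschitz, $|\widehat{\Delta}^v_t(n)-\Delta^v_t(n)|\le\sum_a P_{a,t}Q^{\mathrm{liq}}_a(n)\,x_a(n)\,|\widehat{\lambda}_a-\lambda_a|$. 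So if liquidation volumes are bounded within the scenario set, the plug-in error vanishes uniformly in $n$.

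Proposition~\ref{prop:M2_monotone} also handles the conservative variant from the footnote. Worst-case single-venue depth raises $\widehat{\lambda}_a$, and by that proposition the resulting estimate bounds the true V2 from above.

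\emph{Step 3 (the Monte Carlo average).} Assume the paths $n=1,\dots,N$ are drawn i.i.d.\ from the conditional law of the state given $(\mathcal{F}_t,Q^v_t)$. The strong law of large numbers then gives $\frac1N\sum_n\Delta^v_t(n)\to\mathbb{E}[\Delta^v_t\mid\mathcal{F}_t,Q^v_t]=\mathrm{V2}(v,t)$. Combining this with Step 2 by the triangle inequality yields $\widehat{\mathrm{V2}}(v,t)\to\mathrm{V2}(v,t)$. Proposition~\ref{prop:M2_superlinearity} then carries over to the estimator: for each fixed $\widehat{\lambda}$, the map from $Q^v_t$ to the average is a finite average of convex maps, so it is convex in $Q^v_t$.

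\emph{Main obstacle.} The hard step is the premise of Step 3: that the stress scenarios, together with the depth-evolution rule (depth contracting proportionally with realized volatility), actually reproduce the conditional law of $(Q^{\mathrm{liq}}_{a},D^{\mathrm{pool}}_{a})$ given $(\mathcal{F}_t,Q^v_t)$. This cannot be derived from the earlier results; it is a modeling assumption. I would therefore state it as an explicit hypothesis, in the style of Assumption~\ref{ass:M1-2}. I would then present the estimator's guarantee as conditional on that hypothesis, and treat misspecification of the scenario law as a bias that the validation strategy in Section~\ref{sec:empirical} is meant to check.
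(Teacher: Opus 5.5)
The paper gives no proof of this corollary. It is an operational recipe: the sample analog of \eqref{eq:M2_shortfall}, with $\lambda_a$ replaced by a regression estimate and the conditional expectation replaced by a scenario average, and it claims no consistency property. Supplying a consistency argument is a reasonable reading, but your Step~1 identity is false under the paper's own impact model. By \eqref{eq:impact_function}, $\widetilde{P}_{a,t}=P_{a,t}-f$, and $f=\lambda_a P_{a,t}Q^{\mathrm{liq}}_{a,t}/D^{\mathrm{pool}}_{a,t}$. Hence $\varepsilon_{a,t}=f/P_{a,t}=\lambda_a Q^{\mathrm{liq}}_{a,t}/D^{\mathrm{pool}}_{a,t}=\lambda_a x_{a,t}/P_{a,t}$, not $\lambda_a x_{a,t}$. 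The through-origin regression on $x_{a,\tau}$ that the corollary writes down therefore behaves like
\[
\widehat{\lambda}_a=\frac{\sum_\tau x_{a,\tau}\widehat{\varepsilon}_{a,\tau}}{\sum_\tau x_{a,\tau}^2}\;\approx\;\lambda_a\,\frac{\sum_\tau x_{a,\tau}^2/P_{a,\tau}}{\sum_\tau x_{a,\tau}^2},
\]
which is a weighted average of $\lambda_a/P_{a,\tau}$ rather than $\lambda_a$. Your Step~2 then simulates with $\widetilde{P}=P(1-\widehat{\lambda}x)$, which is not the execution price that defines $\mathrm{V2}$ in \eqref{eq:M2_shortfall}, so the whole chain converges to the wrong object. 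You need to fix one specification and carry it through. One option is to regress on $Q^{\mathrm{liq}}_{a,\tau}/D^{\mathrm{pool}}_{a,\tau}$, as Procedure~\ref{proc:P1_lambda} and the proof of Proposition~\ref{prop:M2_superlinearity} do. The other is to re-specify the model as relative impact $\varepsilon=\lambda_a PQ/D$, with depth in unit of account, and use that same form in the target. After either repair, your Steps~1--2 go through essentially as written.

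The second gap is in Step~3. $\mathrm{V2}$ conditions on the aggregate liquidation notional $Q^v_t$, but the procedure simulates the liquidation volumes themselves from the stress scenario set, so $Q^v_t$ differs across paths. Paths with different aggregate notional cannot all be i.i.d.\ draws from the law conditional on a single value of $Q^v_t$. Your premise therefore fails by construction, not merely as the modelling approximation you flag as the main obstacle. By the tower property, the average instead targets \eqref{eq:M2_shortfall} integrated over the scenario-induced distribution of $Q^v_t$. To estimate $\mathrm{V2}$ itself, you must either hold $Q^v_t$ fixed and simulate only its cross-asset composition and the depth path, or report the curve $q\mapsto\widehat{\mathrm{V2}}(v,t;q)$. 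Only the second option gives your convexity carry-over a meaning, because the scalar average in the corollary is not a function of $Q^v_t$.

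Finally, Proposition~\ref{prop:M2_monotone} does not deliver the footnote's conservatism. That proposition varies $\lambda$ with $(Q,D)$ held fixed, whereas the worst-case-venue rule changes $D^{\mathrm{pool}}_{a,\tau}$. Shrinking only the depth in the regressor, with observed slippage held fixed, actually scales $\widehat{\lambda}_a$ down. In any case, an upper bound on the true $\mathrm{V2}$ would also require the event $\widehat{\lambda}_a\ge\lambda_a$, which is a statistical question, not a consequence of monotonicity.
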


Liquidation incentives and collateral caps influence liquidation clustering and the effective liquidation notional $Q_t^{v}$, which enters V2. Formal evaluation of how these parameters are chosen and updated is deferred to a companion paper. For Level 1, the parameter configuration is fixed and V2 is computed under stress scenarios to quantify sensitivity to misspecification.

\subsection{Liquidity stress index}
\label{subsec:V3}

V3 metric represents liquidity stress index and is motivated by Proposition~\ref{prop:liquidity_run}, which highlights that utilization can induce blocked withdrawals and run dynamics under common information, without an external liquidity backstop (see Remark~\ref{rem:common_info} for the qualification that common information is a worst case bound). Utilization is a state variable but not a forward looking risk measure. It does not quantify the probability that the withdrawal blocking boundary is reached within a horizon under stress.
Therefore, V3 is defined as the stress conditional probability that
utilization $U_t^v$ reaches the withdrawal blocking boundary $U^{\max}$
within a horizon $h$, computed over a scenario set $\mathcal{S}$ that
conditions on adverse collateral price paths and jumps calibrated
to the vault's historical stress episodes.
This construction operationalizes funding liquidity fragility as a
forward looking depositor risk object since it captures not only the current
level of utilization but the speed and volatility of the path toward the
boundary, the amplifying feedback between liquidation activity and
withdrawal demand, and the absence of any external backstop that would
contain the dynamic before the boundary is reached.

For a fixed horizon $h>0$, let $U_t^{v}$ be the utilization ratio in \eqref{eq:utilization}. Define the liquidity stress index as the stress conditional boundary hitting probability:
\begin{equation}
\mathrm{V3}(v,t;h) \equiv \Pr\!\left(\sup_{s\in[t,t+h]} U_s^{v} \ge 1 \,\Big|\, \mathcal{F}_t,\ Z_t\in\mathcal{Z}(\mathcal{S})\right)
\label{eq:M3_def}
\end{equation}
where $Z_t$ is a vector of stress indicators and $\mathcal{Z}(\mathcal{S})$ denotes the union of stress constraints defining the scenario set $\mathcal{S}$. To represent clustered withdrawals under common information, assume a reduced form utilization dynamics:
\begin{equation}
dU_t^{v} = \mu(U_t^{v},t)\,dt + \sigma_U(U_t^{v},t)\,dW_t + dJ_t
\end{equation}
where $\mu(\cdot)$ and $\sigma_U(\cdot)$ are drift and diffusion terms and $J_t$ is a pure jump process capturing run like outflows (or sudden borrow demand spikes) that are conditionally activated in stress.

\begin{assumption}[A3-1]
\label{ass:M3-1}
(\emph{Reduced form utilization dynamics.})\end{assumption}
Utilization admits a well defined semimartingale representation with a jump component $J_t$ that captures clustered net flows. This fails if utilization is dominated by deterministic mechanics not captured by the specification.

\begin{assumption}[A3-2]
\label{ass:M3-2}
(\emph{Stress conditioning is stable.})\end{assumption}
Conditioning on $Z_t\in\mathcal{Z}(\mathcal{S})$ isolates elevated run propensity. It fails when stress is driven by unobserved coordination shocks. 

\begin{proposition}[V3: utilization monotonicity]
\label{prop:M3_utilization_monotone}
Assume $\mu(U,t)$ is nondecreasing in $U$ and $\sigma_U(U,t)$ is nondecreasing in $U$ on $[0,1)$, and that the jump intensity of $J_t$ is nondecreasing in $U$ under stress. Then $\mathrm{V3}(v,t;h)$ is nondecreasing in the current utilization $U_t^{v}$.
\end{proposition}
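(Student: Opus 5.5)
The plan is a pathwise comparison (coupling) argument for the jump-diffusion in Assumption~\ref{ass:M3-1}. Fix two initial utilization levels $u\le u'$ in $[0,1)$ and the same stress conditioning $Z_t\in\mathcal{Z}(\mathcal{S})$. Let $U$ and $U'$ be the solutions started at $U_t=u$ and $U'_t=u'$, driven by the \emph{same} Brownian motion $W$ and a common Poisson random measure $N(ds,dz)$ on $[t,t+h]\times\mathbb{R}_+\times\mathbb{R}$. I aim to show $U_s\le U'_s$ for all $s\in[t,t+h]$ almost surely. Then $\{\sup_{s}U_s\ge 1\}\subseteq\{\sup_s U'_s\ge 1\}$, and taking conditional probabilities in \eqref{eq:M3_def} gives $\mathrm{V3}(v,t;h)(u)\le\mathrm{V3}(v,t;h)(u')$.

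\textbf{Step 1: jump component.} I represent $J$ by thinning. A point $(s,z,y)$ of $N$ generates a jump of size $y$ in $U$ at time $s$ iff $z\le\lambda(U_{s-},s)$, where $\lambda$ is the stress-activated intensity. Because $\lambda$ is nondecreasing in $U$, whenever $U_{s-}\le U'_{s-}$ every jump that hits $U$ also hits $U'$. For the order to be preserved across jumps, I add the standing interpretation that jump sizes are nonnegative (run-like outflows and borrow spikes both raise $U$) and are not negatively dependent on the level; the simplest choice is a common mark $y\ge0$ drawn independently of $U$. With the same mark applied to both paths, a jump common to both keeps the order. A jump that hits only $U'$ moves $U'$ upward and so also keeps the order.

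\textbf{Step 2: continuous part.} Between jumps the two processes are diffusions with common $W$. Here I invoke the standard one-dimensional comparison theorem (Ikeda--Watanabe / Yamada type). It requires the coefficients to be Lipschitz in the diffusion term, or at least to satisfy a Yamada modulus condition $|\sigma_U(x)-\sigma_U(y)|\le\rho(|x-y|)$ with $\int_{0^+}\rho^{-2}=\infty$, so that pathwise uniqueness holds. Under this condition, ordered initial values give ordered paths for identical drift. The statement's monotonicity of $\mu$ is then more than needed but consistent. I would add this regularity as part of Assumption~\ref{ass:M3-1}, since well-posedness of the semimartingale representation already implicitly requires it.

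\textbf{Step 3: pasting and conclusion.} I paste Steps 1 and 2 inductively over the a.s.\ finite sequence of jump times on $[t,t+h]$; this needs jump intensity bounded in stress, which I also impose. This yields $U_s\le U'_s$ on the whole horizon, and the event inclusion above finishes the proof.

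\textbf{Main obstacle.} The hard part is the role of monotone $\sigma_U$. Pathwise comparison does not actually use monotonicity of the diffusion coefficient. It needs only pathwise uniqueness, so the stated hypothesis is really regularity in disguise. I will remark that the monotonicity of $\sigma_U$ is not needed for the coupling.

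A subtler point arises if the jump sizes depend on $U$: a larger jump at a lower level could overtake the other path. To handle it, I would first try to require that $x+y(x)$ be nondecreasing in $x$, which preserves order. If that fails, I would fall back on a state-independent mark distribution.
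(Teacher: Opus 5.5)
Your argument is correct, and it is a more explicit route than the paper's. The paper treats the two components separately. For the diffusion it invokes the Ikeda--Watanabe comparison theorem and describes the outcome as first-order stochastic dominance driven by larger drift and volatility at higher $U$. For the jumps it appeals to a stochastic-ordering result under larger intensity. You instead build one coupling (common $W$, a common Poisson random measure thinned at $\lambda(U_{s-},s)$, common marks) and paste the comparison theorem across the finitely many candidate jump times.

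The paper's version is shorter because it delegates both halves to citations, but yours makes explicit three things it leaves implicit. First, you obtain pathwise ordering on all of $[t,t+h]$, which is exactly what the running supremum in \eqref{eq:M3_def} requires. This cannot be had by ordering the diffusive and jump parts separately, because the intensity depends on the diffusive path and the diffusion restarts from post-jump values. Second, you correctly see that monotonicity of $\mu$ and $\sigma_U$ plays no role; only pathwise uniqueness matters. Third, you isolate the jump conditions that are actually needed.

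These jump conditions are a genuine repair of the statement, not a convenience. Take $\mu=\sigma_U=0$ and constant intensity $1$. Let jumps send $U$ to $1$ from levels below some $a\in(0,1)$, and raise it by a small $\epsilon$ from levels in $[a,1-\epsilon)$. Every stated hypothesis holds, and all jumps are even upward, matching the outflow interpretation. Yet $\mathrm{V3}=1-e^{-h}$ from $u<a$, while $\mathrm{V3}\le 1-e^{-h}-he^{-h}$ from $u'\in[a,1-\epsilon)$. Both of your conditions should therefore be stated as hypotheses: nonnegative jumps (for jumps that hit only the upper path) and $x\mapsto x+y(x)$ nondecreasing (for common jumps).

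One precision for your Step 2: the Yamada condition on $\sigma_U$ alone does not give pathwise uniqueness. You also need $\mu$ Lipschitz in $U$ (or an Osgood-type modulus), as the comparison theorem does when the two drifts coincide.
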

\begin{proof}
Higher initial utilization increases drift toward the boundary and increases volatility under the stated monotonicity conditions. For the diffusion component, the result follows from stochastic comparison where under nondecreasing drift $\mu(U,t)$ and diffusion $\sigma_U(U,t)$ in $U$, the solution process starting from higher $U_t^v$ dominates in the first order stochastic sense\footnote{See \citet{Ikeda1989} Ch.\ VI, Theorem 1.1}, implying a higher boundary hitting probability. For the jump component, under a conditional jump intensity $\lambda(U_t^v)$ that is nondecreasing in $U$, the result follows from stochastic ordering for jump diffusions (Decreusefond and Moyal 2008) where higher intensity increases the probability of crossing the boundary within horizon $h$. 
\end{proof}

\begin{proposition}[V3 stress amplification via liquidation feedback]
\label{prop:M3_stress_amplification}
Suppose that in stress, liquidation trigger clustering increases utilization through a repayment delay or liquidity drain channel such that $dJ_t$ has higher jump intensity when the measure of triggered accounts $\sum_u \mathbf{1}\{\mathcal{L}_{u,t}\}$ increases. Then $\mathrm{V3}(v,t;h)$ is nondecreasing in the conditional expectation $\mathbb{E}[\sum_u \mathbf{1}\{\mathcal{L}_{u,t}\}\mid\mathcal{F}_t]$.
\end{proposition}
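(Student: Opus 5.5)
The plan is to treat the jump component of the utilization dynamics as the only channel through which the liquidation trigger mass enters, and to argue by a coupling (stochastic ordering) argument in the jump intensity, in the spirit of the proof of Proposition~\ref{prop:M3_utilization_monotone}.

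\emph{Step 1: reduced form.} Write $N_t \equiv \sum_u \mathbf{1}\{\mathcal{L}_{u,t}\}$ and $m_t \equiv \mathbb{E}[N_t\mid\mathcal{F}_t]$. The hypothesis is formalized as follows. Under stress, $J_t$ is a compound process with upward (utilization increasing) jump sizes $\xi \ge 0$ drawn from a distribution not depending on $N_t$. Its $\mathcal{F}_t$-conditional intensity $\lambda^J_t = \Lambda(U_t^v, m_t)$ is nondecreasing in $m_t$. The drift $\mu$ and diffusion $\sigma_U$ do not depend on $m_t$. This is the reading under which ``higher jump intensity when the measure of triggered accounts increases'' is meaningful at the level of the filtration defining $\mathrm{V3}$.

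\emph{Step 2: coupling.} Fix two conditioning values $m \le m'$ with the same $U_t^v$ and the same Brownian path. Realize both jump processes by thinning a single Poisson random measure on $[t,t+h]\times\mathbb{R}_+\times\mathbb{R}_+$. A point $(s,y,\xi)$ is accepted as a jump of size $\xi$ if $y \le \Lambda(U_{s-},m)$, and correspondingly for $m'$. Monotonicity of $\Lambda$ in $m$, together with nondecreasing $\mu,\sigma_U,\Lambda$ in $U$ (inherited from Proposition~\ref{prop:M3_utilization_monotone}), gives a comparison theorem for jump SDEs. It yields $U_s^{v}(m) \le U_s^{v}(m')$ for all $s\in[t,t+h]$ pathwise. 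The argument runs between jump times via the diffusion comparison cited from \citet{Ikeda1989}, and at each jump time by induction: the $m'$ path is weakly higher, so it accepts a superset of points, and the jumps are nonnegative. Hence $\{\sup_s U_s^v(m)\ge 1\}\subseteq\{\sup_s U_s^v(m')\ge 1\}$, and taking conditional probability given $\mathcal{F}_t$ and $Z_t\in\mathcal{Z}(\mathcal{S})$ gives $\mathrm{V3}(v,t;h)$ nondecreasing in $m_t$.

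\emph{Main obstacle.} The delicate point is the induction at jump times. When the $m'$ path sits above the $m$ path, acceptance depends on $\Lambda(U_{s-},\cdot)$ evaluated at different $U$. Monotonicity of $\Lambda$ in $U$ is then needed, and it is exactly what ensures the accepted set is nested. A further difficulty arises when the diffusion coefficient depends on $U$: pathwise comparison between jumps requires the Yamada-type condition of \citet{Ikeda1989}. If that fails, I would fall back to the weaker first-order stochastic dominance argument used for Proposition~\ref{prop:M3_utilization_monotone}, which suffices for monotonicity of the hitting probability. I would state explicitly that negative jumps, such as stress-induced repayments, are excluded, since they would break the result.
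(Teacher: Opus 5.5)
Your argument is correct and is essentially the paper's proof, which asserts in one line that more triggered accounts raise the intensity of utilization-increasing jumps in $J_t$ and hence, by stochastic comparison, the boundary hitting probability. Your thinning coupling makes that comparison explicit. In doing so it surfaces hypotheses the paper's proof leaves implicit: jump intensity nondecreasing in $U$, nonnegative jump sizes, and $\mu,\sigma_U$ not depending on the trigger count.
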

\begin{proof}
An increase in expected triggers raises the conditional intensity of negative net flow jumps in $J_t$, which increases the boundary hitting probability by stochastic comparison. 
\end{proof} 

\begin{corollary}[V3 discrete estimator]
\label{cor:M3_estimator}
Fix discretization $\Delta$, e.g., one hour, and horizon $h=m\Delta$. Using historical utilization observations $\{U_{t-j\Delta}^{v}\}_{j=0}^{J}$ over window $W=J\Delta$, estimate a reduced form transition model for increments $\Delta U_{t}^{v}$, allowing for a jump events. Under stress conditioning, simulate $N$ paths forward for $m$ steps to compute:
\begin{equation}
\widehat{\mathrm{V3}}(v,t;h)=\frac{1}{N}\sum_{n=1}^N \mathbf{1}\left\{\max_{j=1,\ldots,m} U_{t+j\Delta}^{v}(n)\ge 1\right\}
\end{equation}
Required inputs are time series of total borrows $B_t^{v}$ and deposits $D_t^{v}$ to compute $U_t^{v}$, stress proxies $Z_t$, and liquidation trigger counts.
\end{corollary}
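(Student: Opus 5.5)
The statement just above is Corollary~\ref{cor:M3_estimator}, the discrete V3 estimator. The thing to prove is that $\widehat{\mathrm{V3}}(v,t;h)$ is a consistent approximation of $\mathrm{V3}(v,t;h)$ in \eqref{eq:M3_def}. I would split the error into three pieces and bound each: Monte Carlo error, discretization (monitoring) error, and estimation error in the transition model.

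\textbf{Monte Carlo error.} Fix the estimated transition model and the stress conditioning $Z_t\in\mathcal{Z}(\mathcal{S})$. The indicators $\mathbf{1}\{\max_{j\le m}U^{v}_{t+j\Delta}(n)\ge 1\}$ are then i.i.d.\ Bernoulli with mean
\[
p_\Delta \equiv \Pr\Big(\max_{j\le m}U^{v}_{t+j\Delta}\ge 1\,\Big|\,\mathcal{F}_t,\,Z_t\in\mathcal{Z}(\mathcal{S})\Big).
\]
The strong law of large numbers gives $\widehat{\mathrm{V3}}\to p_\Delta$ almost surely. Hoeffding's inequality gives a deviation bound of order $N^{-1/2}$, uniformly, because the indicators are bounded.

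\textbf{Discretization error.} Here I compare $p_\Delta$ with $\mathrm{V3}(v,t;h)$. Since the grid $\{t+j\Delta\}$ is a subset of $[t,t+h]$, we have $\max_j U_{t+j\Delta}\le\sup_s U_s$ pathwise. This gives the one-sided bound $p_\Delta\le \mathrm{V3}(v,t;h)$, so the discrete estimator is conservative from below. That point is worth stating, since V3 is meant to be a conservative liquidity stress metric (Remark~\ref{rem:common_info}).

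For convergence as $\Delta\to 0$ with $h$ fixed, I would use the semimartingale representation from Assumption~\ref{ass:M3-1}, taking sample paths to be càdlàg.
\begin{itemize}
\item On jump events, the jump lands at some time in $[t,t+h]$. If it pushes $U$ to the boundary at $1$, utilization stays at $1$ until the next grid point, because withdrawals are blocked mechanically there. So these crossings are detected in the limit.
\item For the diffusion part, the only missed events are excursions above $1$ that occur strictly between grid points. Their probability vanishes by the standard $O(\sqrt{\Delta})$ bound for discretely monitored barrier crossings of diffusions with locally bounded $\sigma_U$.
\end{itemize}

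\textbf{Estimation error.} This is the main obstacle. The transition model for $\Delta U^{v}$, including the jump component, is estimated from a finite window $W=J\Delta$. I would argue continuity of $p_\Delta$ in the model parameters, meaning the drift, diffusion and jump-intensity coefficients. The hitting functional is discontinuous only on paths that touch $1$ without crossing it, and this set has probability zero under nondegenerate $\sigma_U$. Weak convergence of the estimated law then transfers to $p_\Delta$ by the continuous mapping theorem.

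Consistency of the parameter estimates themselves needs stationarity or ergodicity of utilization increments within the window, together with Assumption~\ref{ass:M3-2}. Stress episodes are rare in the window, so the jump-intensity estimate converges slowly. I therefore expect this step to hold only as a conditional (assumption-driven) statement, not an unconditional rate.

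I would close by noting that the monotonicity properties in Propositions~\ref{prop:M3_utilization_monotone} and~\ref{prop:M3_stress_amplification} pass to $\widehat{\mathrm{V3}}$ under common random numbers. Coupled simulated paths preserve the pathwise ordering, so the estimator preserves the ordering pathwise as well.
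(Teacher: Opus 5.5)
There is no proof in the paper to compare against. The corollary is a constructive recipe: the supremum in \eqref{eq:M3_def} is replaced by a maximum over grid points, and the conditional probability by a simulation frequency under a fitted transition model. No consistency property is asserted, so your error decomposition is supplementary. That is a reasonable aim, but its discretization step does not hold together.

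By \eqref{eq:utilization}, $U_t^{v}\in[0,1]$, so $\{\sup_s U_s^{v}\ge 1\}$ is exactly the event that utilization touches $1$. There are no ``excursions above $1$'', and your claim that paths touching $1$ without crossing form a null set would force $\mathrm{V3}=0$. Your bullets also use two incompatible boundary conventions: absorption at $1$ for jumps, but an unconstrained process that crosses $1$ for the diffusion and for the continuous-mapping step. Assumption~\ref{ass:M3-1} fixes neither. Absorption does not follow from blocked withdrawals either, since the paper itself notes that repayments induced by steep rate kinks lower $U_t^{v}$.

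The conclusion genuinely depends on which convention you adopt.
\begin{itemize}
\item Under absorption, $p_\Delta=\mathrm{V3}$ exactly, because $t+m\Delta=t+h$ is a grid point. No $O(\sqrt{\Delta})$ argument is needed.
\item Under reflection at $1$ with nondegenerate $\sigma_U$, the occupation time of $\{1\}$ is Lebesgue-null. A fixed grid then detects a touch with probability zero, and $p_\Delta$ need not approach $\mathrm{V3}$ at all.
\item If you treat the fitted dynamics as an unconstrained latent process, your argument is sound. This is what the simulation effectively does, since simulated paths may exceed $1$. But then jump crossings are caught by right-continuity of c\`adl\`ag paths rather than by absorption. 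Also, the $O(\sqrt{\Delta})$ rate needs nondegeneracy and regularity of $\sigma_U$ near $1$, not mere local boundedness.
\end{itemize}

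Three further corrections.

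\textbf{Direction of the bias.} The bound $p_\Delta\le\mathrm{V3}$ holds for the true process monitored on the grid. It does not hold for the fitted chain you actually simulate, whose model error can go either way. And because a larger $\mathrm{V3}$ means more liquidity risk, it is a downward, anti-conservative bias. It cuts against Remark~\ref{rem:common_info} rather than supporting it, and would call for a correction such as a shifted barrier if conservatism is the goal.

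\textbf{Common random numbers.} These do not preserve pathwise order in a discrete scheme. Take an Euler step $U\mapsto U+\mu(U)\Delta+\sigma_U(U)\sqrt{\Delta}\,\xi$ with $\sigma_U$ nondecreasing in $U$, exactly as Proposition~\ref{prop:M3_utilization_monotone} assumes. A sufficiently negative common shock $\xi$ carries the higher path below the lower one. So monotonicity of $\widehat{\mathrm{V3}}$ in $U_t^{v}$ is not guaranteed at fixed $\Delta$.

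\textbf{Estimation step.} A fixed window $W$ cannot consistently identify the drift or the jump intensity. That requires $W\to\infty$, which is at odds with the paper's non-stationarity concerns. Moreover, the law to be simulated is the stress-conditional one given $Z_t\in\mathcal{Z}(\mathcal{S})$. Window estimation does not deliver it without an explicit model of how stress conditioning shifts the fitted dynamics.
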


Interest rate curve parameters and caps influence utilization dynamics and thus V3. Governance quality is analyzed elsewhere. In this paper, these parameters enter as fixed inputs and V3 is evaluated under stress scenarios to quantify liquidity fragility conditional on the current configuration.

\subsection{Oracle integrity score}
\label{subsec:V4}

V4 metric represents oracle integrity score and is motivated by Proposition~\ref{prop:oracle_manipulation}, which shows that oracle error can induce false solvency and false insolvency, making oracle integrity a first order driver of depositor losses. Unlike discretionary valuation in TradFi, oracle inputs directly drive automated enforcement, so errors can translate into missed liquidations and latent bad debt. V4 quantifies the expected shortfall attributable to oracle imperfections and normalizes it by liabilities, yielding a score on $[0,1]$.

Let $\Delta_t^{v,\mathrm{oracle}}$ denote oracle induced shortfall as in Proposition~\ref{prop:oracle_manipulation}. The oracle integrity score is defined as:
\begin{equation}
\mathrm{V4}(v,t)\equiv 1-\frac{\mathbb{E}\!\left[\Delta_t^{v,\mathrm{oracle}}\mid\mathcal{F}_t\right]}{L_t^{v}}
\qquad \mathrm{V4}(v,t)\in[0,1]
\end{equation}
Decompose oracle risk into latency and manipulation components by defining $\mathrm{V4a}(v,t)$ for latency and $\mathrm{V4b}(v,t)$ for manipulation, with
\begin{equation}
\mathrm{V4}(v,t)=\mathrm{V4a}(v,t)\cdot \mathrm{V4b}(v,t)
\end{equation}
interpreting the product as a conservative combination under potential dependence. For latency, assume a lag model:
\begin{equation}
P_{a,t}=P_{a,t-\delta_a^{\mathrm{oracle}}}^{\mathrm{true}}
\qquad \delta_a^{\mathrm{oracle}}\ge 0
\label{eq:M4a_latency}
\end{equation}
where $\delta_a^{\mathrm{oracle}}$ is the oracle staleness for asset $a$. For manipulation, define an economic feasibility condition based on a reference market depth proxy $\mathcal{D}_{a,t}^{\mathrm{ref}}$ and a targeted oracle deviation magnitude $\Delta P_{a,t}$:
\begin{equation}
\text{Manipulation feasible at }t \ \text{if}\
\mathrm{Benefit}_{t}(\Delta P_{a,t}) \;>\; \mathrm{Cost}_{t}(\Delta P_{a,t};\mathcal{D}_{a,t}^{\mathrm{ref}})
\label{eq:M4b_manipulation}
\end{equation}
where benefit and cost are measured in unit of account and depend on liquidation boundary geometry and the liquidity of the reference market used by the oracle.

\begin{assumption}[A4-1]
\label{ass:M4-1}
(\emph{Latency model adequacy.})\end{assumption}
Oracle staleness can be summarized by a lag $\delta_a^{\mathrm{oracle}}$ as in \eqref{eq:M4a_latency} but fails for complex aggregation/filtering mechanisms.

\begin{assumption}[A4-2]
\label{ass:M4-2}
(\emph{Manipulation feasibility is economically characterizable.})\end{assumption}
A meaningful cost and benefit model exists for oracle manipulation, capturing the dominant attack vector in public protocols where economic incentives are transparently encoded. This assumption fails without reliable reference depth and attack cost measurement, and does not capture attacks motivated by non economic goals such as adversarial actors targeting a competing protocol regardless of direct profit. Therefore, non economic manipulation should be treated as a tail event in V4b sensitivity analysis.

Assumptions~\ref{ass:M4-1} and~\ref{ass:M4-2} presuppose that oracle error
$\eta_{a,t}$ is estimable from observable data and that its conditional
distribution given $\mathcal{F}_t$ has mean close to zero in normal regimes.

\begin{remark}[RWA collateral and systematic oracle bias]
\label{rem:rwa_oracle}
For real world asset (RWA) collateral (including tokenized private credit, real estate backed instruments and tokenized money market fund claims) Assumption~\ref{ass:M4-2} fails structurally. Define the offchain information arrival time $\tau_{\mathrm{info}}$ (when issuer level impairment information becomes available) and the onchain update time $\tau_{\mathrm{chain}}$ (when this information is reflected in the oracle NAV feed). The latency window $\delta^{\mathrm{NAV}}=\tau_{\mathrm{chain}}-\tau_{\mathrm{info}}\ge 0$
generates a systematic, directional oracle bias:
\begin{equation}
\mathbb{E}\!\left[\eta_{a,t}\mid \tau_{\mathrm{info}}\le t<\tau_{\mathrm{chain}}\right]
= P_{a,t}^{\mathrm{oracle}} - P_{a,t}^{\mathrm{true}}
> 0
\label{eq:nav_bias}
\end{equation}
that is, the oracle systematically overstates collateral value by $\eta_{a,t}$ in unit of account terms during the latency window.
Positions that appear solvent under oracle prices may be insolvent under true prices, generating unrecognized expected shortfall
\begin{equation}
\mathbb{E}\!\left[\Delta_t^{v,\mathrm{NAV}}\right]
\;\ge\;
f^{\mathrm{RWA}}\cdot L_t^{v}\cdot
\frac{\delta^{\mathrm{NAV}}\cdot
\left|\frac{\mathrm{d}P_{a}^{\mathrm{true}}}{\mathrm{d}t}\right|^{\!-}}
{P_{a,t}}
\label{eq:nav_shortfall}
\end{equation}
where $\delta^{\mathrm{NAV}} = \tau_{\mathrm{chain}} - \tau_{\mathrm{info}}$ is the NAV latency window and $\left|\mathrm{d}P_a^{\mathrm{true}}/\mathrm{d}t\right|^{-}$ is the magnitude of the price drift during a declining episode. The consequence is that Proposition~\ref{prop:PI1_oracle_bound} (which bounds oracle error using reference price spreads) does not hold for RWA assets. The identified set for $\eta_{a,t}$ is not bounded by reference price deviations because the reference prices themselves are stale when offchain information exists. For vaults with material RWA collateral, V4 must be augmented with an offchain NAV monitoring discipline that separately tracks $\delta^{\mathrm{NAV}}$ and the expected impairment pace, and the worst case oracle bias should be treated as a floor on V4.
\end{remark}

\begin{remark}[Market hours oracle staleness for tokenized assets]
\label{rem:mkt_hours}
For collateral assets backed by instruments traded during fixed market hours (tokenized equities, REITs, and certain fixed income instruments), the oracle feed can be frozen at the last traded price during market closures. The oracle error $\eta_{a,t}$ during a closed market period of duration $h$ is not mean zero. Economic value continues to evolve through after hours events, so the oracle price $P_{a,t}$ is simply the prior close. Under a GBM approximation for the underlying, the expected absolute gap at market open is:
\begin{equation}
\mathbb{E}\!\left[|P_{a,t_{\mathrm{open}}}^{\mathrm{market}}
- P_{a,t_{\mathrm{close}}}^{\mathrm{oracle}}|\right]
\approx P_{a,t_{\mathrm{close}}}\cdot\sigma_a\cdot\sqrt{h/365}
\label{eq:gap_risk}
\end{equation}
where $\sigma_a$ is calendar time annualised volatility and $h$ is measured in calendar days.\footnote{If $\sigma_a$ is calibrated as a 252 day trading volatility, it must be rescaled by $\sqrt{252/365}$ to obtain the calendar time equivalent.} In practice, fat-tailed returns and the concentration of macro news in market closed periods make \eqref{eq:gap_risk} a lower bound. The credit risk implication is that the oracle staleness $\delta_a^{\mathrm{oracle}}$ in Procedure~P2 must be defined as the maximum of the typical update frequency and the market closure duration for RWA backed assets, the latter can dominate. Consequently, the LTV/LLTV buffer for any collateral with periodic market closures must embed the expected gap at open as an additional coverage buffer over and above intraday volatility considerations.
\end{remark}

%\paragraph{Properties.}
%\begin{proposition}[V4a: volatility--latency sensitivity]
%\label{prop:M4a_volatility_latency}
%Assume the latent log price for collateral asset $a$ satisfies $d\ln P_{a,t}^{\mathrm{true}}=\mu_a\,dt+\sigma_a\,dW_t$ with $\sigma_a>0$ and that oracle pricing follows \eqref{eq:M4a_latency}. Then the variance of oracle error satisfies
%\begin{equation}
%\mathrm{Var}\!\left(\eta_{a,t}\mid\mathcal{F}_{t-%\delta_a^{\mathrm{oracle}}}\right)
%\ \text{is increasing in}\ \delta_a^{\mathrm{oracle}}\sigma_a^2,
%\end{equation}
%and therefore latency induced expected oracle shortfall is weakly increasing in $\delta_a^{\mathrm{oracle}}\sigma_a$ under monotone liquidation-trigger sensitivity to price.
%\end{proposition}
%\begin{proof}[Sketch]
%Under GBM, $\ln P_{a,t}^{\mathrm{true}}-\ln P_{a,t-\delta}^{\mathrm{true}}$ has variance $\sigma_a^2\delta$. Since $P_{a,t}$ is lagged, oracle error inherits dispersion increasing in $\sigma_a^2\delta$. Mapping dispersion to expected shortfall depends on liquidation geometry and is empirical. \unverified
%\end{proof}

\begin{proposition}[V4 boundedness]
\label{prop:M4_boundedness}
For any vault $v$ and time $t$, $\mathrm{V4}(v,t)\in[0,1]$, with $\mathrm{V4}(v,t)=1$ if and only if $\mathbb{E}[\Delta_t^{v,\mathrm{oracle}}\mid\mathcal{F}_t]=0$.
\end{proposition}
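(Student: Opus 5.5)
The plan is to read the claim directly off the definition $\mathrm{V4}(v,t)=1-\mathbb{E}[\Delta_t^{v,\mathrm{oracle}}\mid\mathcal{F}_t]/L_t^{v}$. It then suffices to show
\[
0\le \mathbb{E}[\Delta_t^{v,\mathrm{oracle}}\mid\mathcal{F}_t]\le L_t^{v}
\]
almost surely, with $L_t^{v}>0$ taken as given so the ratio is well defined. The first step is to pin down what $\Delta_t^{v,\mathrm{oracle}}$ is. Proposition~\ref{prop:oracle_manipulation} only gives a lower bound for its expectation, so I will fix the interpretation that $\Delta_t^{v,\mathrm{oracle}}$ is the component of the vault shortfall attributable to oracle error. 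Concretely,
\[
\Delta_t^{v,\mathrm{oracle}}=\Delta_t^{v}\cdot\mathbf{1}\{\text{oracle misrepresentation event}\},
\]
or more generally any quantity satisfying $0\le\Delta_t^{v,\mathrm{oracle}}\le\Delta_t^{v}$ pointwise. This sandwich is the key modeling input, and I would state it explicitly as the reading of Proposition~\ref{prop:oracle_manipulation} being used.

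Given the sandwich, the bounds follow from Definition~\ref{def:loss_function}. Since $\Delta_t^{v}=(L_t^{v}-A_t^{v})^{+}\ge 0$, we have $\Delta_t^{v,\mathrm{oracle}}\ge 0$. For the upper bound, $\Delta_t^{v}\le L_t^{v}$ whenever $A_t^{v}\ge 0$, which is exactly the statement $\ell_t^{v}\in[0,1]$ in Definition~\ref{def:loss_function}. Monotonicity of conditional expectation then gives $0\le\mathbb{E}[\Delta_t^{v,\mathrm{oracle}}\mid\mathcal{F}_t]\le L_t^{v}$, using that $L_t^{v}$ is $\mathcal{F}_t$-measurable so it can be pulled out of the conditional expectation. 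Dividing by $L_t^{v}$ and subtracting from one yields $\mathrm{V4}(v,t)\in[0,1]$.

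The ``if and only if'' clause is immediate: since $L_t^{v}>0$, we have $\mathrm{V4}(v,t)=1$ exactly when the numerator $\mathbb{E}[\Delta_t^{v,\mathrm{oracle}}\mid\mathcal{F}_t]$ vanishes.

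The main obstacle is the upper bound $\Delta_t^{v}\le L_t^{v}$. It requires $A_t^{v}\ge 0$, but the paper explicitly allows negative recovery ($\mathrm{RR}<0$) when execution costs exceed proceeds, and this could in principle drive $A_t^{v}$ negative. I would handle this in one of two ways:
\begin{enumerate}
\item invoke Definition~\ref{def:loss_function}'s stipulation $\ell_t^{v}\in[0,1]$, which implicitly caps the shortfall at liabilities, reflecting that depositors cannot lose more than their claim; or
\item define the oracle shortfall truncated at $L_t^{v}$, so that $\min\{\Delta_t^{v},L_t^{v}\}$ enters V4.
\end{enumerate}
I would try the first, since it uses only stated definitions, and note the truncation as a remark.
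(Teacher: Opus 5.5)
Your proposal is correct and follows essentially the same route as the paper, whose proof simply asserts that $\Delta_t^{v,\mathrm{oracle}}$ is nonnegative by definition and cannot exceed $L_t^{v}$, and then reads the bound $\mathrm{V4}(v,t)\in[0,1]$ off the definition. Your version is more careful: it makes explicit the reading $0\le\Delta_t^{v,\mathrm{oracle}}\le\Delta_t^{v}$, the requirements $L_t^{v}>0$ and $A_t^{v}\ge 0$ (the latter via $\ell_t^{v}\in[0,1]$), and the one-line argument for the ``if and only if'' clause, which the paper's proof leaves unstated.
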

\begin{proof}
$\Delta_t^{v,\mathrm{oracle}}\ge 0$ by definition and cannot exceed $L_t^{v}$, so the normalized expected oracle shortfall $\mathbb{E}[\Delta_t^{v,\mathrm{oracle}}\mid\mathcal{F}_t]/L_t^{v}$ lies in $[0,1]$, and, therefore, $\mathrm{V4}(v,t) = 1 - \mathbb{E}[\Delta_t^{v,\mathrm{oracle}}\mid\mathcal{F}_t]/L_t^{v} \in [0,1]$. 
\end{proof}

\begin{corollary}[V4 discrete estimator]
\label{cor:M4_estimator}
Using discrete timestamps of oracle updates, estimate $\widehat{\delta}_a^{\mathrm{oracle}}(t)$ as the time elapsed since the last oracle update for asset $a$ at time $t$. Estimate realized volatility $\widehat{\sigma}_a(t)$ over a rolling window $W$ from high frequency oracle prices or reference spot prices. Under \eqref{eq:M4a_latency}, approximate latency risk by computing a false solvency probability proxy proportional to $\widehat{\delta}_a^{\mathrm{oracle}}(t)\widehat{\sigma}_a^2(t)$ and map it into $\widehat{\mathrm{V4a}}(v,t)$ via a monotone transform calibrated to historical near miss liquidations. For manipulation risk, extract oracle architecture parameters (TWAP window, data sources, reference venues) and estimate reference depth $\widehat{\mathcal{D}}_{a,t}^{\mathrm{ref}}$, compute a feasibility indicator from \eqref{eq:M4b_manipulation} using conservative lower bounds for $\mathrm{Cost}_t$ and report:
\begin{equation}
\widehat{\mathrm{V4}}(v,t)=\widehat{\mathrm{V4a}}(v,t)\cdot \widehat{\mathrm{V4b}}(v,t)
\end{equation}
Required inputs are oracle update logs, price histories, oracle mechanism documentation and incident/attack datasets.
\end{corollary}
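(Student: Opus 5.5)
The corollary is a specification of an estimator rather than an identity, so I would prove three things. First, $\widehat{\mathrm{V4}}(v,t)$ is well defined and inherits the range property of Proposition~\ref{prop:M4_boundedness}. Second, each factor is ordinally consistent with its population counterpart in the lag model \eqref{eq:M4a_latency} and the feasibility condition \eqref{eq:M4b_manipulation}. Third, the product construction is conservative, meaning it weakly understates oracle integrity.

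\emph{Step 1 (range).} Both inputs are nonnegative. The staleness $\widehat{\delta}_a^{\mathrm{oracle}}(t)$ is read directly from update timestamps, and $\widehat{\sigma}_a(t)$ is a realized volatility over the window $W$, so the proxy $\widehat{\delta}_a^{\mathrm{oracle}}\widehat{\sigma}_a^2$ is nonnegative. I would require the calibrated transform $g:[0,\infty)\to[0,1]$ to be nonincreasing, with $g(0)=1$. Then $\widehat{\mathrm{V4a}}=g(\cdot)\in[0,1]$, and $\widehat{\mathrm{V4a}}=1$ exactly when there is no staleness or no volatility. $\widehat{\mathrm{V4b}}$ is built from a feasibility indicator, or its calibrated frequency, mapped through one minus a normalized shortfall, so it also lies in $[0,1]$. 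The product of two numbers in $[0,1]$ lies in $[0,1]$, which mirrors Proposition~\ref{prop:M4_boundedness}.

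\emph{Step 2 (latency proxy).} Under a GBM approximation for $P_{a,\cdot}^{\mathrm{true}}$, the lag model \eqref{eq:M4a_latency} gives
\[
\ln P_{a,t}-\ln P^{\mathrm{true}}_{a,t}=-\bigl(\ln P^{\mathrm{true}}_{a,t}-\ln P^{\mathrm{true}}_{a,t-\delta}\bigr),
\]
which has variance $\sigma_a^2\delta_a^{\mathrm{oracle}}$. False solvency for an account with health-factor gap $x$ occurs when the oracle error exceeds $x$. Its probability, $\Pr(|\eta|>x)$, is increasing in the error dispersion and hence in $\delta\sigma^2$, for every fixed liquidation gap distribution. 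This justifies the proxy as an ordinal statistic. Consistency then follows in two parts:
\begin{itemize}
\item $\widehat{\sigma}_a\to\sigma_a$ as the sampling frequency grows, by the standard consistency of realized variance.
\item The continuous mapping theorem then carries this convergence through $g$.
\end{itemize}

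\emph{Step 3 (manipulation proxy).} Because $\widehat{\mathrm{V4b}}$ uses a lower bound on $\mathrm{Cost}_t$, the estimated feasibility region contains the true one. The feasibility indicator therefore weakly overstates true feasibility, and $\widehat{\mathrm{V4b}}\le\mathrm{V4b}$.

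\emph{Step 4 (product is conservative).} I would show that $\mathrm{V4a}\cdot\mathrm{V4b}\le\min\{\mathrm{V4a},\mathrm{V4b}\}$. I would also show that, when latency and manipulation shortfalls are positively dependent, one minus the combined normalized shortfall is still at least the product. Together these justify reading the product as a conservative combination, as stated in the definition of V4.

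\emph{Main obstacle.} The hardest step is linking the dispersion proxy $\delta\sigma^2$ to the \emph{level} of expected oracle shortfall, not just its ordering. That link depends on the empirical distribution of health-factor gaps and on liquidation geometry, which no closed-form argument provides. I would first try to bound the expected missed-liquidation shortfall by a constant times the mass of accounts within $c\,\sigma\sqrt{\delta}$ of $\mathrm{HF}=1$. If that bound is too loose, I would fall back to claiming only monotonicity and range, and leave the level to the historical near-miss calibration of $g$, stated as an assumption in the same spirit as Assumptions~\ref{ass:M4-1}--\ref{ass:M4-2}.
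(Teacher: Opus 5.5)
The paper gives no proof of this corollary. It is an operational specification assembled by construction from the lag model \eqref{eq:M4a_latency}, the feasibility condition \eqref{eq:M4b_manipulation} and the product decomposition in the definition of V4; the GBM latency mapping is written out only later, in Procedure~\ref{proc:P2_oracle}. Your range, consistency and conservativeness argument is therefore an addition rather than a reconstruction. Steps 1--2 go through as design-level properties after small repairs. Your claim that $\widehat{\mathrm{V4a}}=1$ only when $\widehat{\delta}_a^{\mathrm{oracle}}\widehat{\sigma}_a^2=0$ needs $g(x)<1$ for $x>0$, not merely $g$ nonincreasing. False solvency in Proposition~\ref{prop:oracle_manipulation} is the one-sided event $\eta_{a,t}>0$ on collateral, not $|\eta_{a,t}|>x$. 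The tail probability depends on $\delta\sigma^2$ alone only after dropping the drift: if $d\ln P_{a,t}^{\mathrm{true}}=\mu_a\,dt+\sigma_a\,dW_t$, the log oracle error under \eqref{eq:M4a_latency} has mean $-\mu_a\delta_a^{\mathrm{oracle}}$, and \eqref{eq:oracle_false_solvency_prob} implicitly sets $\mu_a=0$. The continuous mapping step needs $g$ continuous at the limit, which a binned near-miss calibration need not provide. On your main obstacle, integrating \eqref{eq:oracle_false_solvency_prob} against the observed health factor gap distribution (a Layer~1 input) already gives a computable false solvency probability, so the level need not be left entirely to calibration.

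The step that fails as written is the second half of Step 4. Whether $\mathrm{V4a}\cdot\mathrm{V4b}$ is conservative depends on how the latency and manipulation shortfalls combine into $\Delta_t^{v,\mathrm{oracle}}$. Neither the paper nor your sketch specifies this, and the sign of the error flips between natural rules. Put $x=\Delta^{\mathrm{lat}}/L_t^{v}$ and $y=\Delta^{\mathrm{man}}/L_t^{v}$, and read, as you do, $\mathrm{V4a}=1-\mathbb{E}[x\mid\mathcal{F}_t]$ and $\mathrm{V4b}=1-\mathbb{E}[y\mid\mathcal{F}_t]$.

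If losses compound on surviving assets, $1-\Delta_t^{v,\mathrm{oracle}}/L_t^{v}=(1-x)(1-y)$ pathwise, which covers a union of total-loss events. Then $\mathrm{V4}=\mathrm{V4a}\cdot\mathrm{V4b}+\mathrm{Cov}(x,y\mid\mathcal{F}_t)$, so your claim holds exactly when this covariance is nonnegative. If instead the two channels create bad debt on different accounts, $\Delta_t^{v,\mathrm{oracle}}=\Delta^{\mathrm{lat}}+\Delta^{\mathrm{man}}$. Then $\mathrm{V4}=\mathrm{V4a}+\mathrm{V4b}-1=\mathrm{V4a}\cdot\mathrm{V4b}-(1-\mathrm{V4a})(1-\mathrm{V4b})$, and the product overstates integrity whenever both channels are active, whatever the dependence. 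The inequality $\mathrm{V4a}\cdot\mathrm{V4b}\le\min\{\mathrm{V4a},\mathrm{V4b}\}$ settles nothing, since in both models the true V4 is itself at most the minimum.

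You need the aggregation rule as an explicit assumption, with ``positively dependent'' made precise as nonnegative covariance. Alternatively, whenever the combined shortfall is subadditive, $\max\{0,\mathrm{V4a}+\mathrm{V4b}-1\}$ is a lower bound on V4 under any dependence, which the product is not.

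Step 3 has a smaller missing assumption of the same kind. The containment $\{\mathrm{Benefit}_t>\mathrm{Cost}_t\}\subseteq\{\mathrm{Benefit}_t>\underline{\mathrm{Cost}}_t\}$ controls the true feasibility region only if the benefit is known exactly or replaced by an upper bound. Yet the benefit is itself estimated from liquidation boundary geometry, and the corollary makes only the cost side conservative. Passing from the indicator to $\widehat{\mathrm{V4b}}\le\mathrm{V4b}$ also needs Assumption~\ref{ass:M4-2}, so that economically infeasible attacks contribute no shortfall.
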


\begin{remark}[Oracle immutability and forward looking quality requirements]
\label{rem:oracle_immut}
In some lending protocol designs, oracle addresses are immutable per market since they are set at market deployment and cannot be changed without migrating the entire market. Canonical instances include fixed oracle market designs such as Morpho Blue. In such settings, the vault cannot respond to oracle quality deterioration between deployment and the end of its operating horizon, even if governance timelocks would otherwise permit parameter updates. The consequence for V4 is that initial oracle quality must be assessed not only for current fitness but for expected fitness over the full anticipated horizon. If the oracle reference market's liquidity is expected to decline or if the oracle's computational logic becomes outdated, the expected future V4 score is systematically worse than its current value but cannot be corrected through ordinary governance. Formally, let $Q_a(t)$ denote an oracle quality index for asset $a$ (higher is better) and let $Q^{\min}$ be a minimum acceptable quality. A conservative time 0 oracle adequacy condition, derived from a first order delta approximation, is:
\begin{equation}
Q_a(0) \;\ge\; Q^{\min} + \frac{\sup_{t\in[0,T]}\mathrm{Var}(Q_a(t))^{1/2}}{\bigl|\partial \mathrm{V4}/\partial Q_a\bigr|_0}
\label{eq:oracle_irreversibility}
\end{equation}
where $\sup_{t\in[0,T]}\mathrm{Var}(Q_a(t))^{1/2}$ is the worst case standard deviation of oracle quality variation over the vault's horizon, and $|\partial \mathrm{V4}/\partial Q_a|_0$ is the sensitivity of V4 to oracle quality evaluated at $t=0$.\footnote{This conservative bound follows from a first order delta approximation where the expected decline in V4 attributable to oracle quality deterioration is at most the sensitivity of V4 to $Q_a$, multiplied by the standard deviation of quality variation. Dimensions are consistent, $[Q_a(0)]$ and $[Q^{\min}]$ are in units of oracle quality; the ratio is $[\mathrm{Std}(Q_a)]/[1/\mathrm{quality}] = [\mathrm{quality}]$.} Immutable oracle protocols require a more conservative initial oracle selection and cannot rely on governance responsiveness to compensate for post deployment quality decline.
\end{remark}

Oracle selection and parameterization are governance controlled inputs, but this article holds them fixed and measures the implied oracle integrity exposure. A companion paper addresses governance quality, including how oracles are selected and updated. For Level 1, V4 is evaluated under stress scenarios to quantify loss sensitivity to oracle latency and manipulation feasibility.

\subsection{Execution viability under stress}
\label{subsec:V5}

V5 metric represents execution viability under stress and is motivated by Proposition~\ref{prop:network_congestion}, which shows that liquidation viability can fail with higher probability precisely when liquidation triggers occur, due to congestion and MEV competition. Traditional settlement risk frameworks assume institutional default management resources. DeFi relies on decentralized execution where costs are endogenous to stress. V5 quantifies the stress conditional probability that a triggered liquidation is economically viable and estimates the expected shortfall contribution of execution failure.

Let $\mathcal{L}_{u,t}=\{\mathrm{HF}_{u,t}^{v}<1\}$ be the liquidation trigger event and let $\mathcal{V}_{u,t}$ be the liquidation viability event in \eqref{eq:viability}. Define the vault level execution viability rate as the conditional viability probability averaged over the distribution of triggered accounts:
\begin{equation}
\mathrm{V5}(v,t)\equiv \mathbb{E}\!\left[\Pr\!\left(\mathcal{V}_{u,t}\mid \mathcal{L}_{u,t},\mathcal{F}_t\right)\right]
\label{eq:M5_viability_rate}
\end{equation}
where the outer expectation is taken over accounts $u$ weighted by their contribution to liquidation mass. Define the stress conditional execution shortfall contribution as:
\begin{equation}
\mathrm{V5\text{-}ES}(v,t)\equiv
\mathbb{E}\!\left[\Delta_t^{v} \mid \mathcal{L}_{u,t}, \mathcal{V}_{u,t}^{c}, \mathcal{F}_t\right]\cdot
\Pr\!\left(\mathcal{V}_{u,t}^{c}\mid \mathcal{L}_{u,t},\mathcal{F}_t\right)
\label{eq:M5_stress_es}
\end{equation}
interpreted as expected shortfall attributable to execution failure conditional on liquidation need.

\begin{assumption}[A5-1]
\label{ass:M5-1}
(\emph{Rational liquidator participation.})
\end{assumption} Liquidators participate when the net profit condition in \eqref{eq:viability} holds  and it fails if participation is constrained by non economic frictions.

\begin{assumption}[A5-2]
\label{ass:M5-2}
(\emph{Stress correlation is stable.})
\end{assumption} Gas price $g_t$ and MEV competition are positively correlated with stress measures driving liquidation triggers. This fails under structural changes in chain congestion regimes. 

\begin{proposition}[V5 correlation decay of viability]
\label{prop:M5_correlation_decay}
Let $S_t$ denote a scalar stress measure and suppose $\mathrm{Corr}(g_t,S_t)=\rho_g>0$ under stress, while $\Pi_{u,t}^{\mathrm{liq}}$ is decreasing in $S_t$ through worsening execution prices. Then $\mathrm{V5}(v,t)$ is weakly decreasing in $\rho_g$ holding marginal distributions fixed.
\end{proposition}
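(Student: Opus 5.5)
The plan is to write viability as an event defined by two random variables, the gas cost and the gross profit, that are both driven by $S_t$. By \eqref{eq:viability}, conditional on the trigger $\mathcal{L}_{u,t}$, viability is the event $\{\Pi_{u,t}^{\mathrm{liq}} - \mathrm{MEV}_{u,t}^{\mathrm{cost}} \ge g_t Q_{u,t}^{\mathrm{gas}}\}$. I will hold $Q_{u,t}^{\mathrm{gas}}$ and the MEV term fixed, or treat them as independent of $g_t$ given $S_t$. I will also use that $\Pi_{u,t}^{\mathrm{liq}} = \pi(S_t)$, with $\pi$ decreasing, which is the hypothesis of the proposition.

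The proposition is not literally true for every joint law with a given correlation, so I will make it precise with a copula device. The marginals of $(g_t,S_t)$ stay fixed, and the dependence is indexed by a family of copulas that is increasing in the concordance (supermodular) order. A bivariate Gaussian copula with parameter $\rho_g$ is the concrete choice. This family is concordance-ordered in $\rho_g$ and has Spearman and Pearson correlation monotone in $\rho_g$. I will state this restriction explicitly as the meaning of ``holding marginal distributions fixed''.

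The key step is to show that $\Pr(\mathcal{V}_{u,t}\mid\mathcal{L}_{u,t},\mathcal{F}_t) = \mathbb{E}[\phi(g_t,S_t)]$ is the expectation of a submodular function, where
\[
\phi(g,s)=\mathbf{1}\{\pi(s)-c \ge g\,Q^{\mathrm{gas}}\}.
\]
Since $\pi$ is decreasing, the set $\{(g,s): g \le (\pi(s)-c)/Q^{\mathrm{gas}}\}$ is a lower set in the product order. Its indicator is nonincreasing in both arguments, and the indicator of a lower set of the form $\{g\le h(s)\}$ with $h$ nonincreasing is submodular. I will check this from the definition of submodularity on rectangles: for $g<g'$ and $s<s'$, the four-corner inequality reduces to a case check on whether each corner lies below the curve $h$. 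An alternative route is to note that $-\phi$ is a bivariate distribution-type (supermodular) function. Expectations of submodular functions decrease under concordance increases, by Tchen's theorem or the supermodular-order characterization of Müller and Stoyan. So the conditional viability probability is weakly decreasing in $\rho_g$ for each account $u$.

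Finally, V5 in \eqref{eq:M5_viability_rate} is a weighted average over triggered accounts. To keep the order intact through that average, I will condition on the trigger event in a way compatible with the copula ordering. The cleanest way is to treat the trigger region as depending on $S_t$ only, and to require the weights to be fixed given $\mathcal{F}_t$. The main obstacle is that conditioning on $\mathcal{L}_{u,t}$ (an event in $S_t$) changes the marginal of $S_t$. Under that conditioning, the concordance ordering is preserved only for conditioning sets that are monotone in $S_t$ (an upper set $\{S_t\ge s^\ast\}$), by the result that truncating one coordinate to an upper set preserves the concordance order for Gaussian copulas. I will verify this step directly for the Gaussian case. If it fails, I will fall back on stating the result unconditionally on $\mathcal{L}_{u,t}$.
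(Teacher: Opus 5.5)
\textbf{There is a genuine gap: your key lemma is false.} Take $\phi(g,s)=\mathbf{1}\{g\le h(s)\}$ with $h(s)=(\pi(s)-c)/Q^{\mathrm{gas}}$ nonincreasing, and run the four-corner check you propose with $s<s'$ and $h(s')<h(s)$.
\begin{itemize}
\item Choosing $h(s')<g\le h(s)<g'$ gives $\phi(g,s)+\phi(g',s')=1>0=\phi(g',s)+\phi(g,s')$. So $\phi$ is not submodular, and equivalently $-\phi$ is not supermodular.
\item Choosing $g\le h(s')<g'\le h(s)$ gives $1<2$. So $\phi$ is not supermodular either.
\end{itemize}
As soon as $\Pi_{u,t}^{\mathrm{liq}}$ genuinely decreases in $S_t$ (the hypothesis you want to use), the viability indicator has no modularity sign. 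Neither Tchen's theorem nor the M\"uller--Stoyan characterization then gives any comparison.

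The general principle you invoke also fails. $\mathbf{1}\{g\le a,\ s\le b\}$ is the indicator of a lower set of your form ($h=a$ on $(-\infty,b]$, $h=-\infty$ beyond). Yet it is supermodular, and its expectation, the joint distribution function, increases with concordance.

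By contrast, the conditioning you flag as the main obstacle is harmless. If $\mathcal{L}_{u,t}=\{S_t\ge s^\ast\}$, then $\Pr(\mathcal{V}_{u,t}\mid\mathcal{L}_{u,t})=\mathbb{E}\bigl[\phi(g_t,S_t)\mathbf{1}\{S_t\ge s^\ast\}\bigr]/\Pr(S_t\ge s^\ast)$. The denominator is fixed by the $S_t$-marginal, so you only need a modularity sign for $\phi(g,s)\mathbf{1}\{s\ge s^\ast\}$. No ``truncated concordance order'' is needed. That notion is ill-posed anyway, since truncation changes the $g_t$-marginal differently for different $\rho_g$.

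\textbf{The claimed monotonicity can fail inside your own Gaussian-copula family.} Let $(Z_1,Z_2)$ be standard bivariate normal with correlation $\rho\in(0,1)$. Set $g_t=e^{Z_1}$, $S_t=Z_2$, $Q^{\mathrm{gas}}_{u,t}=1$, MEV cost $c$, and $\Pi^{\mathrm{liq}}_{u,t}=c+e^{k-S_t}$, which is decreasing in $S_t$.
\begin{itemize}
\item Then $\mathcal{V}_{u,t}=\{Z_1+Z_2\le k\}$ and $\Pr(\mathcal{V}_{u,t})=\Phi\bigl(k/\sqrt{2+2\rho}\bigr)$.
\item This is strictly increasing in $\rho$ when $k<0$ and decreasing when $k>0$.
\item The marginals stay fixed, and $\mathrm{Corr}(g_t,S_t)=\rho/\sqrt{e-1}$ increases with $\rho$.
\end{itemize}
Raising dependence only spreads $Z_1+Z_2$, pushing mass toward whichever side of the boundary is the minority. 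This refutes your unconditional fallback. It also refutes the conditional statement whenever $\mathcal{L}_{u,t}$ carries no information about $(g_t,S_t)$, which nothing in the hypotheses rules out.

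\textbf{What a correct version needs.} You need an explicit regime assumption; with one, your concordance argument does work.
\begin{itemize}
\item If $\Pi^{\mathrm{liq}}_{u,t}$ is flat in $S_t$ on the trigger region $\{s\ge s^\ast\}$, let $a$ be the value of $h$ there. Then $\phi(g,s)\mathbf{1}\{s\ge s^\ast\}=\mathbf{1}\{g\le a\}\mathbf{1}\{s\ge s^\ast\}$ is a nonincreasing function of $g$ times a nondecreasing function of $s$, hence submodular. Conditional viability therefore falls along the concordance order. Here the mechanism is the high-stress conditioning, not the decline of $\Pi$.
\item With nonconstant $h$ under a Gaussian copula, write $F_g,F_S$ for the marginal distribution functions, $z(s)=\Phi^{-1}(F_S(s))$ and $w(s)=\Phi^{-1}(F_g(h(s)))$. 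Then $\Pr(\mathcal{V}^c_{u,t}\mid S_t=s)$ has $\rho$-derivative of the same sign as $z(s)-\rho\,w(s)$. Imposing $z(s)>\rho\,w(s)$ on the whole trigger region and integrating against the fixed conditional law of $S_t$ gives the claim.
\end{itemize}

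Do not expect the paper to supply the missing step. Its proof assumes a log-supermodular joint density and cites FKG. For $\mathcal{L}_{u,t}$ an upper set in $S_t$, that delivers the level inequality $\Pr(\mathcal{V}^c_{u,t}\mid\mathcal{L}_{u,t})\ge\Pr(\mathcal{V}^c_{u,t})$ of Proposition~\ref{prop:network_congestion}, not a comparative static in $\rho_g$. The paper also concedes that a complete proof requires specifying the joint law.
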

\begin{proof}
Higher $\rho_g$ increases the probability that high gas coincides with high stress, which raises the probability that costs exceed liquidation profits on the trigger set $\mathcal{L}_{u,t}$. Formally, under log-supermodularity of the joint density of $(g_t, S_t)$, the FKG inequality\footnote{See \citet{Müller2000} Theorem 3.1} implies that the conditional probability $\Pr(\mathcal{V}_{u,t}^c \mid \mathcal{L}_{u,t})$ is nondecreasing in $\rho_g$, holding marginal distributions fixed. Therefore $\mathrm{V5}(v,t) = 1 - \Pr(\mathcal{V}_{u,t}^c \mid \mathcal{L}_{u,t})$ is weakly decreasing in $\rho_g$. A complete proof requires specification of the joint distribution of $(g_t, S_t)$, the result holds under log-supermodularity of that density. We treat this as a structural proposition supported by standard positive dependence theory. 
\end{proof}

\begin{proposition}[V5 incentive expansion of viability region]
\label{prop:M5_incentive_expansion}
Suppose the liquidation incentive parameter $\pi$ enters liquidator profit additively so that $\Pi_{u,t}^{\mathrm{liq}}(\pi)$ is increasing in $\pi$. Then $\mathrm{V5}(v,t)$ is nondecreasing in $\pi$, and $\lim_{\pi\to\infty}\mathrm{V5}(v,t)=1$ provided costs remain finite.
\end{proposition}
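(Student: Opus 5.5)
The plan is to argue pointwise on the viability event and then pass to conditional probabilities and the outer average in \eqref{eq:M5_viability_rate}.

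\textbf{Step 1: monotone nesting of viability events.} Fix the realized state $(g_t, Q_{u,t}^{\mathrm{gas}}, \mathrm{MEV}_{u,t}^{\mathrm{cost}}, \widetilde{P}_{a,t})$. Write the viability event \eqref{eq:viability} as $\mathcal{V}_{u,t}(\pi)=\{\Pi_{u,t}^{\mathrm{liq}}(\pi)\ge c_{u,t}\}$, where $c_{u,t}\equiv g_t Q_{u,t}^{\mathrm{gas}}+\mathrm{MEV}_{u,t}^{\mathrm{cost}}$ does not depend on $\pi$. Since $\Pi^{\mathrm{liq}}_{u,t}(\pi)$ is increasing in $\pi$, for $\pi\le\pi'$ we get $\mathcal{V}_{u,t}(\pi)\subseteq\mathcal{V}_{u,t}(\pi')$ pathwise.

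\textbf{Step 2: lift to conditional probabilities and the average.} The trigger event $\mathcal{L}_{u,t}=\{\mathrm{HF}^v_{u,t}<1\}$ depends only on oracle prices and $\Theta^v_t$ through the liquidation threshold. Hence $\Pr(\mathcal{V}_{u,t}(\pi)\mid\mathcal{L}_{u,t},\mathcal{F}_t)$ is nondecreasing in $\pi$ for each $u$, by monotonicity of conditional probability. The outer expectation in \eqref{eq:M5_viability_rate} is a nonnegative weighted average over triggered accounts, so it preserves the ordering.

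\textbf{Main obstacle: holding the triggered population fixed.} This step needs the weights and the trigger set to be held fixed as $\pi$ varies. Otherwise a larger bonus raises collateral seized per unit of debt repaid, increases $Q^{\mathrm{liq}}_{a,t}$, and through Proposition~\ref{prop:recovery_endogeneity} worsens $\widetilde{P}_{a,t}$. That would feed back into $\Pi^{\mathrm{liq}}$ and could break the monotonicity in Step 1.

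I would resolve this by reading the hypothesis ``$\Pi^{\mathrm{liq}}_{u,t}(\pi)$ increasing in $\pi$'' as the net effect, already inclusive of any impact feedback. The weights in the outer expectation are then interpreted, as in the Level 1 convention, as evaluated at the current state $X^v_t$ with only $\pi$ perturbed. Under that reading the feedback is absorbed into the assumption, and I would state this interpretation explicitly in the proof.

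\textbf{Step 3: the limit.} With the additive structure $\Pi^{\mathrm{liq}}_{u,t}(\pi)=\Pi^{\mathrm{liq}}_{u,t}(0)+\pi\,b_{u,t}$, the scaling $b_{u,t}$ is the oracle-valued debt repaid times $(1-\varepsilon_{a,t})$, as in \eqref{eq:liquidator_indifference}. It is strictly positive on $\mathcal{L}_{u,t}$ because $\varepsilon_{a,t}<1$ by Assumption~\ref{ass:A0}. Finiteness of costs gives $c_{u,t}<\infty$ a.s., so $\mathbf{1}\{\mathcal{V}_{u,t}(\pi)\}\uparrow 1$ a.s. on $\mathcal{L}_{u,t}$ as $\pi\to\infty$.

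Monotone convergence applied to the conditional probabilities then gives each $\Pr(\mathcal{V}_{u,t}(\pi)\mid\mathcal{L}_{u,t},\mathcal{F}_t)\to1$. A second application to the outer weighted average yields $\lim_{\pi\to\infty}\mathrm{V5}(v,t)=1$.
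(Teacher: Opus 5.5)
This is correct and is essentially the paper's argument: increasing $\pi$ enlarges the viability set in \eqref{eq:viability}, and monotone convergence gives $\mathrm{V5}(v,t)\to 1$ once $\Pi_{u,t}^{\mathrm{liq}}$ diverges against finite costs, with your explicit slope $b_{u,t}>0$ (via Assumption~\ref{ass:A0}) supplying the divergence the paper simply asserts. The paper raises the impact feedback of Proposition~\ref{prop:recovery_endogeneity} only after the proof, as a limit on practical relevance, and you should treat it the same way by holding the state fixed as the additive hypothesis does; under your ``net effect'' reading $\varepsilon_{a,t}$ would vary with $\pi$, and the fixed-$b_{u,t}$ divergence in Step~3 would no longer be justified.
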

\begin{proof}
Increasing $\pi$ expands the state set satisfying \eqref{eq:viability} by shifting the profitability threshold downward. Monotone convergence then yields $\mathrm{V5}(v,t)\to 1$ as $\Pi_{u,t}^{\mathrm{liq}}$ diverges while costs remain bounded. The practical relevance of large $\pi$ is limited by the V2 trade off. A higher bonus increases collateral seized per unit debt repaid, amplifying liquidation mass $Q_t^{v}$ and worsening execution prices through the channel in Proposition~\ref{prop:recovery_endogeneity}. 
\end{proof}

\begin{corollary}[V5 discrete estimator]
\label{cor:M5_estimator}
Using a discrete time grid, e.g., hourly, estimate $\rho_g$ as the empirical correlation between gas prices and absolute collateral return magnitudes over a rolling window $W$, conditioning on stress states. Estimate $\Pr(\mathcal{V}_{u,t}^{c}\mid \mathcal{L}_{u,t})$ as the fraction of liquidation triggered positions that remain unliquidated beyond a protocol specific maximum execution window, then compute:
\begin{equation}
\widehat{\mathrm{V5}}(v,t)=1-\widehat{\Pr}\!\left(\mathcal{V}_{u,t}^{c}\mid \mathcal{L}_{u,t}\right)
\end{equation}
and estimate $\widehat{\mathrm{V5\text{-}ES}}(v,t)$ via scenario simulation using delayed liquidation frequency and realized shortfalls. Required inputs are gas price time series, trigger and completion logs, and MEV proxies.
\end{corollary}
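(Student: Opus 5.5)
The statement to prove is the final one shown, Corollary~\ref{cor:M5_estimator}, the V5 discrete estimator. It is a procedural corollary, so "proving" it means showing that the estimator is a consistent plug-in for the population object \eqref{eq:M5_viability_rate}, under Assumptions~\ref{ass:M5-1} and~\ref{ass:M5-2}. The plan has three steps.

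\textbf{Step 1: identify non-viability with non-completion.} By Assumption~\ref{ass:M5-1}, a rational liquidator executes as soon as \eqref{eq:viability} holds. So on the trigger set $\mathcal{L}_{u,t}$, the event that account $u$ remains unliquidated beyond a protocol-specific maximum execution window $\bar\delta$ coincides with $\mathcal{V}_{u,s}^{c}$ holding throughout $[t,t+\bar\delta]$.

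Two caveats follow. First, this is only an approximation of the instantaneous event $\mathcal{V}_{u,t}^{c}$. The approximation becomes exact as $\bar\delta\downarrow 0$, and for positive $\bar\delta$ it gives a lower bound on $\Pr(\mathcal{V}_{u,t}^{c}\mid\mathcal{L}_{u,t})$. Second, by Remark~\ref{rem:A0_V5_interaction}, an unexecuted liquidation is exactly the viability failure that V5 is meant to capture.

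\textbf{Step 2: show the empirical fraction converges.} Let the sample be triggered positions in the window $W$, weighted by liquidation mass, matching the outer expectation in \eqref{eq:M5_viability_rate}. The empirical fraction of unliquidated positions is then a ratio of weighted sums. Under Assumption~\ref{ass:M5-2}, the stress regime is stable over the window. Adding a stationarity-and-mixing condition on the trigger and completion process, a law of large numbers gives convergence to $\mathbb{E}[\Pr(\mathcal{V}_{u,t}^{c}\mid\mathcal{L}_{u,t},\mathcal{F}_t)]$.

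Since V5 is defined as an expectation of conditional probabilities, linearity gives $\mathrm{V5}=1-\mathbb{E}[\Pr(\mathcal{V}^c\mid\mathcal{L})]$. Hence $\widehat{\mathrm{V5}}$ is consistent by the continuous mapping theorem.

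\textbf{Step 3: handle $\rho_g$ and V5-ES.} The estimated $\rho_g$ enters only as a conditioning diagnostic. Proposition~\ref{prop:M5_correlation_decay} then tells us how $\widehat{\mathrm{V5}}$ should move across stress states, and this serves as a monotonicity check rather than an input. For $\widehat{\mathrm{V5\text{-}ES}}$, I would use \eqref{eq:M5_stress_es} as a product of a conditional mean and a probability. Each factor is consistently estimated: the mean by the scenario-simulation average and the probability by Step 2. The product is then consistent by Slutsky.

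\textbf{Main obstacle.} The hardest step is Step 1. Non-completion conflates economic non-viability with non-economic frictions such as keeper outages and censorship, which Assumption~\ref{ass:M5-1} excludes only by fiat. It is also sensitive to the choice of $\bar\delta$, and trigger events within a cascade are dependent. I would first try to state the result as a one-sided bound, namely that $\widehat{\mathrm{V5}}$ is conservative under mismeasured $\bar\delta$. I would then treat dependence within cascades by clustering triggers into episodes and applying the law of large numbers across episodes rather than across accounts.
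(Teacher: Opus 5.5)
\textbf{There is a genuine gap: the consistency claim in Step~2 does not follow from Step~1, and the fallback bound has the wrong sign.}

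The paper gives no proof of this corollary; it is a recipe. The only formal content the paper attaches to it is Proposition~\ref{prop:PI2_liquidation_bound}. That proposition says explicitly that liquidation failure probabilities are \emph{not} point identified, and that the non-completion fraction is only an identified lower bound.

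Your Step~1 essentially re-derives that inclusion: on $\mathcal{L}_{u,t}$, the event $\mathcal{U}_{u,t}(\bar\delta)$ is contained in $\mathcal{V}_{u,t}^{c}$. Step~2 then proves the wrong thing. The window $\bar\delta>0$ is fixed by the estimator, so your law of large numbers makes the empirical fraction converge to $\Pr(\mathcal{U}_{u,t}(\bar\delta)\mid\mathcal{L}_{u,t})$, not to $\Pr(\mathcal{V}_{u,t}^{c}\mid\mathcal{L}_{u,t})$. The difference is the mass of positions that are non-viable at $t$ but get liquidated later in the window once gas or depth normalizes. Nothing in Assumptions~\ref{ass:M5-1}--\ref{ass:M5-2} makes that mass vanish.

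Your continuous-mapping step therefore yields $\widehat{\mathrm{V5}}\to 1-\Pr(\mathcal{U}_{u,t}(\bar\delta)\mid\mathcal{L}_{u,t})\ge\mathrm{V5}$. That is consistency for an upper bound on $\mathrm{V5}$, not for $\mathrm{V5}$ itself. Letting $\bar\delta\downarrow 0$ does not rescue point identification. $\bar\delta$ is a fixed protocol-specific input, and once it falls below realistic block-inclusion latency, viable-but-pending liquidations are counted as failures. At that point the Step~1 inclusion itself fails.

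This also reverses the sign of your fallback. Because the failure probability is bounded from below, $\widehat{\mathrm{V5}}=1-\widehat{\Pr}(\mathcal{V}_{u,t}^{c}\mid\mathcal{L}_{u,t})$ overstates viability. It is \emph{optimistic}, not conservative, which is exactly what the remark following Proposition~\ref{prop:PI2_liquidation_bound} states. The statement you can actually establish is a partial-identification one. $\widehat{\mathrm{V5}}$ estimates a best-case value, and, as the practical-deployment remark prescribes, risk limits are set at the worst case inside the identified set.

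Two smaller repairs are needed.
\begin{enumerate}
\item Define the counted event as the paper does: the position ``remains undercollateralized and unliquidated beyond $\bar\delta$.'' With non-instantaneous execution, a position cured by borrower repayment, collateral top-up or a price rebound is unliquidated without being evidence of non-viability. The paper flags that self-repayment is observationally similar to failed liquidation.
\item Your Slutsky argument for $\widehat{\mathrm{V5\text{-}ES}}$ inherits the same one-sided bias in its probability factor. Also, the scenario-simulation average converges only to the model-implied conditional mean, so this estimator too gives a scenario-conditional bound rather than a consistent estimate.
\end{enumerate}
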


Liquidation incentive and close factor parameters influence $\Pi_{u,t}^{\mathrm{liq}}$ and thus V5. Governance quality is analyzed in a companion paper. Here these parameters are fixed inputs, and stress scenario analysis is used to evaluate V5 under plausible gas and depth regimes.

\subsection{Vault credit score}
\label{subsec:VCS}

Aggregation of V1-V5 into a single scalar inevitably requires normalization choices. The purpose of a composite score is monitoring and comparability, not to replace the underlying mechanism level diagnostics. Let $\hat{m}_j(v,t)\in[0,1]$ denote a normalized score derived from $\mathrm{V}j$ (with higher values indicating lower risk) using monotone transformations calibrated to acceptable risk tolerances. The aggregated score is:
\begin{equation}
\mathrm{VCS}(v,t)\equiv \phi\!\left(\hat{m}_1(v,t),\hat{m}_2(v,t),\hat{m}_3(v,t),\hat{m}_4(v,t),\hat{m}_5(v,t)\right)
\qquad \phi:\,[0,1]^5\to[0,1]
\end{equation}

A weakest link aggregation is multiplicative:
\begin{equation}
\mathrm{VCS}^{\mathrm{mult}}(v,t)\equiv \prod_{j=1}^{5}\hat{m}_j(v,t)
\label{eq:VCS_mult}
\end{equation}
An alternative is a weighted additive score:
\begin{equation}
\mathrm{VCS}^{\mathrm{add}}(v,t)\equiv \sum_{j=1}^{5}\omega_j\cdot\hat{m}_j(v,t)
\qquad \omega_j\ge 0,\ \sum_{j=1}^{5}\omega_j=1
\label{eq:VCS_add}
\end{equation}
where weights $\{\omega_j\}$ require empirical calibration and are, therefore, based on backtesting.

\begin{proposition}[Weakest-link dominance]
\label{prop:aggregation_dominance}
If $\hat{m}_j(v,t)=0$ for any $j\in\{1,\ldots,5\}$, then $\mathrm{VCS}^{\mathrm{mult}}(v,t)=0$ regardless of $\hat{m}_{\ell}(v,t)$ for $\ell\neq j$.
\end{proposition}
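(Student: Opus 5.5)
The plan is to argue directly from the definition of the multiplicative aggregator in \eqref{eq:VCS_mult}, using only the fact that each normalized score lies in $[0,1]$. Fix $v$ and $t$, and suppose $\hat{m}_j(v,t)=0$ for some index $j\in\{1,\ldots,5\}$. Split the product into the $j$-th factor and the product of the remaining four:
\begin{equation*}
\mathrm{VCS}^{\mathrm{mult}}(v,t)=\hat{m}_j(v,t)\cdot\prod_{\ell\neq j}\hat{m}_{\ell}(v,t).
\end{equation*}
The remaining product is a finite product of numbers in $[0,1]$, so it is itself a well-defined real number in $[0,1]$. In particular it is finite, whatever values the other $\hat{m}_\ell$ take. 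Multiplying it by $\hat{m}_j(v,t)=0$ gives zero, which is the claim.

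The only point to make explicit is that the conclusion does not depend on the other scores. This is the phrase ``regardless of $\hat{m}_{\ell}(v,t)$ for $\ell\neq j$'' in the statement. It holds because $0\cdot x=0$ for every real $x$, and every admissible $x$ here is bounded by construction of the normalization $\hat{m}_\ell:\,\cdot\to[0,1]$. There is therefore no $0\cdot\infty$ indeterminacy to rule out.

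I would then add a one-line contrast with the additive operator \eqref{eq:VCS_add}, since that is the economic content of ``weakest link''. If $\hat{m}_j=0$, then $\mathrm{VCS}^{\mathrm{add}}=\sum_{\ell\neq j}\omega_\ell\hat{m}_\ell$. This can be as large as $1-\omega_j$, which is strictly positive whenever $\omega_j<1$. So a single failed channel is masked under additive aggregation but not under multiplicative aggregation.

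I do not expect a genuine obstacle. The only care needed is to confirm that the normalization maps each $\mathrm{V}j$ into $[0,1]$ as stated in the definition of $\hat{m}_j$. Then the argument is just the absorbing property of zero under multiplication.
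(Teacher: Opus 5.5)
Your proposal is correct and matches the paper's proof, which simply observes from \eqref{eq:VCS_mult} that the product contains $\hat{m}_j(v,t)$ as a factor. The boundedness remark is harmless but not needed, since $0\cdot x=0$ for every real $x$. Your contrast with $\mathrm{VCS}^{\mathrm{add}}$ is correct but optional.
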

\begin{proof}
Immediate from \eqref{eq:VCS_mult}, the product contains $\hat{m}_j(v,t)$ as a factor. 
\end{proof}

\begin{proposition}[Multiplicative score is conservative]
\label{prop:aggregation_amgm}
If $\omega_j = 1/5$ for all $j$ and all $\hat{m}_j(v,t) \in [0,1]$, then $\mathrm{VCS}^{\mathrm{mult}}(v,t) \leq \mathrm{VCS}^{\mathrm{add}}(v,t)$, with equality if and only if $\hat{m}_1 = \hat{m}_2 = \hat{m}_3 = \hat{m}_4 = \hat{m}_5$.
\end{proposition}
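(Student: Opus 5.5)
The plan is to chain two elementary inequalities through the geometric mean. Write $m_j \equiv \hat{m}_j(v,t)\in[0,1]$ and $G \equiv \bigl(\prod_{j=1}^{5} m_j\bigr)^{1/5}\in[0,1]$. With $\omega_j=1/5$, \eqref{eq:VCS_add} is the arithmetic mean $\bar m \equiv \tfrac15\sum_j m_j$. By \eqref{eq:VCS_mult}, $\mathrm{VCS}^{\mathrm{mult}}(v,t)=G^5$. I will establish
\begin{equation*}
\mathrm{VCS}^{\mathrm{mult}}(v,t)=G^5\;\le\; G\;\le\;\bar m=\mathrm{VCS}^{\mathrm{add}}(v,t).
\end{equation*}

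The first inequality uses only the range restriction. Since $G\in[0,1]$, we have $G^5\le G$, because $G(1-G^4)\ge 0$. The second inequality is the classical AM--GM inequality for five nonnegative reals. I would either cite it directly or derive it from concavity of $\log$ via Jensen's inequality, applied to $m_j>0$. The case where some $m_j=0$ is trivial, since then $G=0\le \bar m$; this case is also exactly Proposition~\ref{prop:aggregation_dominance}.

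The step I expect to be the real obstacle is the equality clause, and I believe it needs a correction. Equality in the chain requires equality in both links. AM--GM holds with equality if and only if $m_1=\cdots=m_5$. The link $G^5=G$ holds with equality if and only if $G\in\{0,1\}$.

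Hence $\mathrm{VCS}^{\mathrm{mult}}=\mathrm{VCS}^{\mathrm{add}}$ if and only if all $m_j$ are equal and that common value lies in $\{0,1\}$. Only the direction \emph{equality} $\Rightarrow$ \emph{all equal} survives as stated. To verify the converse fails, take the counterexample $m_j\equiv 1/2$: it gives $\mathrm{VCS}^{\mathrm{mult}}=1/32<1/2=\mathrm{VCS}^{\mathrm{add}}$.

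In the write-up I would therefore prove the inequality as above. I would then state the equality characterization in this corrected form (all scores equal to $0$, or all equal to $1$), with the explicit counterexample justifying the amendment. Equal scores alone are necessary for equality, but not sufficient.
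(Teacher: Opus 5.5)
Your proof is correct and is essentially the paper's argument: the paper uses the same two facts, AM--GM and $x^5\le x$ on $[0,1]$, but chains them as $\prod_j\hat{m}_j\le\bigl(\tfrac15\sum_j\hat{m}_j\bigr)^5\le\tfrac15\sum_j\hat{m}_j$, so it applies $x^5\le x$ to the arithmetic mean, whereas you apply it to the geometric mean. Your amendment of the equality clause is also correct. The paper's proof reads off only the AM--GM equality condition and overlooks that its $x^5\le x$ link is strict on $(0,1)$. The ``if'' direction of the statement is therefore false (e.g.\ $\hat{m}_j\equiv 1/2$ gives $1/32<1/2$), and equality holds exactly when all $\hat{m}_j$ equal $0$ or all equal $1$.
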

\begin{proof}
By the AM--GM inequality, $\bigl(\prod_{j=1}^{5}\hat{m}_j\bigr)^{1/5} \leq \frac{1}{5}\sum_{j=1}^{5}\hat{m}_j$, with equality if and only if all $\hat{m}_j$ are equal. Therefore:
\[
\mathrm{VCS}^{\mathrm{mult}} = \prod_{j=1}^{5}\hat{m}_j \;\leq\; \left(\frac{1}{5}\sum_{j=1}^{5}\hat{m}_j\right)^{\!5} \;\leq\; \frac{1}{5}\sum_{j=1}^{5}\hat{m}_j \;=\; \mathrm{VCS}^{\mathrm{add}},
\]
where the first inequality is AM--GM raised to the fifth power, and the second uses $x^5 \leq x$ for $x \in [0,1]$. 
\end{proof}

The five metrics are not independent since oracle error interacts with coverage and liquidation timing (V1 and V4), depth depletion interacts with execution viability (V2 and V5) and congestion can exacerbate utilization stress and liquidation delays (V3 and V5). These interactions imply that a fully coherent vault loss model is a joint simulation problem on $(X_t^{v},\mathcal{M})$ and not a separable sum of marginal risks. Additive scoring can mask compounding tail dependence. The practical role of $\mathrm{VCS}^{\mathrm{add}}$ is monitoring under a chosen weighting scheme, while $\mathrm{VCS}^{\mathrm{mult}}$ provides a conservative weakest-link summary that penalizes single point failure. The empirical literature on composite credit assessment offers a relevant parallel. \citet{CantorPackerCole1997} demonstrate for split rated corporate bonds that averaging two rating opinions yields strictly lower forecast error than relying on either agency alone or on the more conservative rating. This result motivates $\mathrm{VCS}^{\mathrm{add}}$ for routine monitoring, where metrics reflect partially independent information channels. However, \citet{Gordy2000} shows that when the systemic factor driving joint defaults is fat-tailed, models that appear similar on mean and variance can diverge sharply at high percentiles of the loss distribution, exactly the regime that matters for depositor protection. In DeFi, the equivalent of a fat-tailed systemic factor is a correlated stress event (simultaneous price shock, depth depletion, and gas congestion) that degrades all metrics jointly. In such a regime, $\mathrm{VCS}^{\mathrm{mult}}$ is the appropriate conservative operator. 

\begin{corollary}[VCS practical implementation checklist]
\label{cor:VCS_checklist}
To compute $\mathrm{VCS}^{\mathrm{mult}}(v,t)$ and $\mathrm{VCS}^{\mathrm{add}}(v,t)$ in practice, the following data inputs are required, at hourly or daily frequency unless stated otherwise:
\begin{enumerate}
\item Vault balance sheet: $L_t^{v}$ (depositor liabilities) and the composition of collateral and debt exposures sufficient to compute $\mathrm{ACR}_t^{v}$ and utilization $U_t^{v}$.
\item Oracle data: price feed values $P_{a,t}$ for all $a\in\mathcal{C}^{v}\cup\mathcal{D}^{v}$, plus oracle update timestamps to estimate $\delta_a^{\mathrm{oracle}}$.
\item Liquidation data: event level logs indicating liquidation triggers, execution completion times, repaid debt, seized collateral, and realized execution prices (or sufficient information to infer $\widetilde{P}_{a,t}$ and $\widehat{\varepsilon}_{a,t}$).
\item Onchain liquidity: DEX pool state or executable quote curves to measure $D_{a,t}^{\mathrm{pool}}$ and reference depth $\mathcal{D}_{a,t}^{\mathrm{ref}}$ for oracle manipulation feasibility proxies.
\item Network conditions: gas price time series $G_t$, priority fees/builder payments as MEV proxies, and block level congestion indicators.
\item Stress scenario definitions: a specification of $\mathcal{S}$ and horizon parameters $(W,h)$ for estimators in corollaries~\ref{cor:M1_estimator}--\ref{cor:M5_estimator}, each requiring calibration to the vault’s collateral and chain environment.
\end{enumerate}
\end{corollary}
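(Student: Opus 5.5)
The checklist in Corollary~\ref{cor:VCS_checklist} is a sufficiency claim. The plan is to trace each normalized input $\hat{m}_j$ back to the discrete estimator that produces it, Corollaries~\ref{cor:M1_estimator}--\ref{cor:M5_estimator}. For each estimator I list the primitive onchain quantities it consumes, and then check that the union of these lists is contained in items (i)--(vi). Both aggregators are pointwise functions of $(\hat{m}_1,\ldots,\hat{m}_5)$ through \eqref{eq:VCS_mult} and \eqref{eq:VCS_add}. The $\hat{m}_j$ are monotone transforms of $\widehat{\mathrm{V}j}$ with calibrated but fixed parameters. So computability of $\mathrm{VCS}^{\mathrm{mult}}$ and $\mathrm{VCS}^{\mathrm{add}}$ reduces to computability of $\widehat{\mathrm{V1}},\ldots,\widehat{\mathrm{V5}}$. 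For the additive score I also take the weights $\omega_j$ as given, since they come from backtesting.

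The verification then goes estimator by estimator.
\begin{itemize}
\item For $\widehat{\mathrm{V1}}$, the estimator needs $\widehat{\mathrm{ACR}}_t^{v}=A_t^v/L_t^v$, the weights $\widehat{\omega}_{a,t}^{v}$ from oracle valued collateral, and the slippage estimates $\widehat{\varepsilon}_{a,t}(s)$. These come from items (i) and (ii), from the liquidation logs in (iii), and from the scenario set in (vi).
\item For $\widehat{\mathrm{V2}}$, it needs the regression inputs $\widehat{\varepsilon}_{a,\tau}$, $P_{a,\tau}$, $Q^{\mathrm{liq}}_{a,\tau}$ and $D^{\mathrm{pool}}_{a,\tau}$, plus $L^v_\tau$. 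These come from (iii), (ii), (iv) and (i), with the stress paths taken from (vi).
\item For $\widehat{\mathrm{V3}}$, it needs $U_t^v=B_t^v/D_t^v$, stress proxies $Z_t$ and trigger counts. These come from (i) and (iii), with $Z_t$ built from (ii) and (v), and the horizon from (vi).
\item For $\widehat{\mathrm{V4}}$, it needs oracle timestamps and prices, realized volatility and reference depth $\mathcal{D}^{\mathrm{ref}}_{a,t}$. These come from (ii) and (iv).
\item For $\widehat{\mathrm{V5}}$, it needs gas series, MEV proxies and trigger and completion logs. These come from (v) and (iii).
\end{itemize}
Each check is a direct read-off of the "required inputs" sentence in the corresponding corollary. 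The frequency claim, hourly or daily, follows because every estimator is defined on a discretization $\Delta$ chosen at that resolution.

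The main obstacle is that several estimators list inputs that are not purely onchain time series. These are the oracle mechanism documentation and incident or attack datasets in Corollary~\ref{cor:M4_estimator}, and the calibration data for the monotone transforms $\hat{m}_j$. The checklist does not list them explicitly. I would handle this in two steps. First, I would read item (ii) together with item (vi) as including the oracle architecture parameters (TWAP window, reference venues). Second, I would treat the incident datasets and the $\hat{m}_j$ calibrations as part of the calibration step that item (vi) already says is required.

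If that reading is judged too generous, I would state the corollary as a necessary-inputs list for the onchain state. Its sufficiency would then be asserted only given the calibrated maps $\hat{m}_j$ and the weights $\omega_j$. These are by construction fixed before evaluation at time $t$.
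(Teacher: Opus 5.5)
Your plan is to reduce both scores to computability of $\widehat{\mathrm{V1}}$--$\widehat{\mathrm{V5}}$ given fixed $\hat{m}_j$ and $\omega_j$, and then check that the union of the required-input sentences of Corollaries~\ref{cor:M1_estimator}--\ref{cor:M5_estimator} lies in items (i)--(vi); this is how the checklist is assembled in the paper, which gives no separate proof, and you are right that Corollary~\ref{cor:M4_estimator}'s oracle mechanism documentation and incident datasets are not literally listed, so your conditional-sufficiency reading is the accurate one. One small correction: the hourly/daily clause does not follow from a common grid $\Delta$ --- only Corollaries~\ref{cor:M2_estimator}, \ref{cor:M3_estimator} and~\ref{cor:M5_estimator} fix one, $\widehat{\delta}_a^{\mathrm{oracle}}$ in Corollary~\ref{cor:M4_estimator} needs exact update timestamps, and items (iii) and (v) are event or block level --- so it is part of the specification (``unless stated otherwise'') rather than something you derive.
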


A complete assessment of depositor risk depends on all three levels of the decomposition in Definition~\ref{def:three_level}, mechanical loss channels (this paper), governance quality (a companion paper), and infrastructure and code integrity (Section~\ref{subsec:level3}). The VCS reported here is explicitly a Level 1 structural score conditional on $\mathcal{I}_t^{v}=\mathbf{1}$ (functioning contracts) and the current parameter configuration. Whether Level 3 or Level 1 is the binding risk concern for a given vault is determined by the dominance condition \eqref{eq:L3_dominance}.

% ── FIGURE 2: VCS Architecture ───────────────────────────────────
\begin{figure}[htbp]
\centering
\begin{tikzpicture}[
  propbox/.style={
    draw=black, thick, rounded corners=2pt,
    minimum width=3.4cm, minimum height=0.7cm,
    inner sep=5pt, align=center, font=\scriptsize
  },
  metricbox/.style={
    draw=black, thick, rounded corners=2pt,
    minimum width=3cm, minimum height=0.7cm,
    inner sep=5pt, align=center, font=\small\bfseries,
    fill=white
  },
  normbox/.style={
    draw=black, thick, rounded corners=2pt,
    minimum width=2.4cm, minimum height=0.7cm,
    inner sep=5pt, align=center, font=\scriptsize,
    fill=gray!8
  },
  vcsbox/.style={
    draw=black, very thick, rounded corners=3pt,
    minimum width=2.4cm, minimum height=1.5cm,
    inner sep=6pt, align=center, font=\small,
    fill=gray!15
  },
  arr/.style={->, >=stealth, semithick},
  darr/.style={->, >=stealth, semithick, dashed, gray!70},
  xsep/.style={font=\scriptsize\itshape, text=gray!80}
]

%% ── Column 1: Six propositions ──────────────────────────────────────────
%  Spacing increased from 1.0 to 1.3 to match wider column
\foreach \i/\lbl/\desc in {
  1/{P1}/{oracle execution divergence},
  2/{P2}/{Recovery endogeneity},
  3/{P3}/{full information runs},
  4/{P4}/{Timelock constraint},
  5/{P5}/{Oracle manipulation/latency},
  6/{P6}/{Congestion wrong way risk}
}{
  \node[propbox, fill=gray!5] (P\i) at (0, {-(\i-1)*1.3}) {\textbf{\lbl}\\\desc};
}

\node[font=\scriptsize\bfseries, above=0.15cm of P1] {\textit{Failure mode (Proposition)}};

%% ── Column 2: Five metrics + stress-methodology node ────────────────────
%  Spacing 1.3 units; ST sits between V3 (y=-2.6) and V4 (y=-3.9)
\node[metricbox] (V1) at (4.5,  0.0) {V1\\{\scriptsize Coverage}};
\node[metricbox] (V2) at (4.5, -1.3) {V2\\{\scriptsize Recovery}};
\node[metricbox] (V3) at (4.5, -2.6) {V3\\{\scriptsize Liquidity}};
\node[normbox, minimum width=3cm] (ST) at (4.5, -3.9)
  {{\scriptsize Frozen-param stress}};
\node[metricbox] (V4) at (4.5, -5.2) {V4\\{\scriptsize Oracle}};
\node[metricbox] (V5) at (4.5, -6.5) {V5\\{\scriptsize Execution}};

\node[font=\scriptsize\bfseries, above=0.15cm of V1] {\textit{Level 1 metric}};

%% ── Arrows P→─────────────────────────────────────────────────────
\draw[arr]  (P1.east) -- (V1.west);
\draw[arr]  (P2.east) -- (V2.west);
\draw[arr]  (P3.east) -- (V3.west);
\draw[darr] (P4.east) -- (ST.west);
\draw[arr]  (P5.east) -- (V4.west);
\draw[arr]  (P6.east) -- (V5.west);

%% ── Column 3: Normalization ─────────────────────────────────────────────
%  Height covers V1 (y=0) to V5 (y=-6.5): span = 6.5, centre = -3.25
%  Add half-box height (~0.35) top and bottom → height ≈ 7.2cm
\node[normbox, minimum height=7.2cm, minimum width=1.6cm] (NORM)
  at (7.6, -3.25) {\rotatebox{90}{\small Normalize $\hat{m}_j\in[0,1]$}};

\foreach \m in {V1,V2,V3,V4,V5}{
  \draw[arr] (\m.east) -- (NORM.west |- \m.east);
}

%% ── Column 4: VCS ───────────────────────────────────────────────────────
%  Centred at same y as NORM
\node[vcsbox] (VCS) at (11.3, -3.25) {
  \textbf{VCS}\\[3pt]
  {\scriptsize $\mathrm{VCS}^{\mathrm{mult}} = \prod_j \hat{m}_j$}\\[2pt]
  {\scriptsize (weakest link)}\\[6pt]
  {\scriptsize $\mathrm{VCS}^{\mathrm{add}} = \sum_j \omega_j \hat{m}_j$}\\[2pt]
  {\scriptsize (additive)}
};

\node[font=\scriptsize\bfseries, above=0.15cm of VCS] {\textit{Score}};

\draw[arr] (NORM.east) -- (VCS.west);

\end{tikzpicture}
\caption{Vault credit score architecture.}
\label{fig:vcs_architecture}
\end{figure}
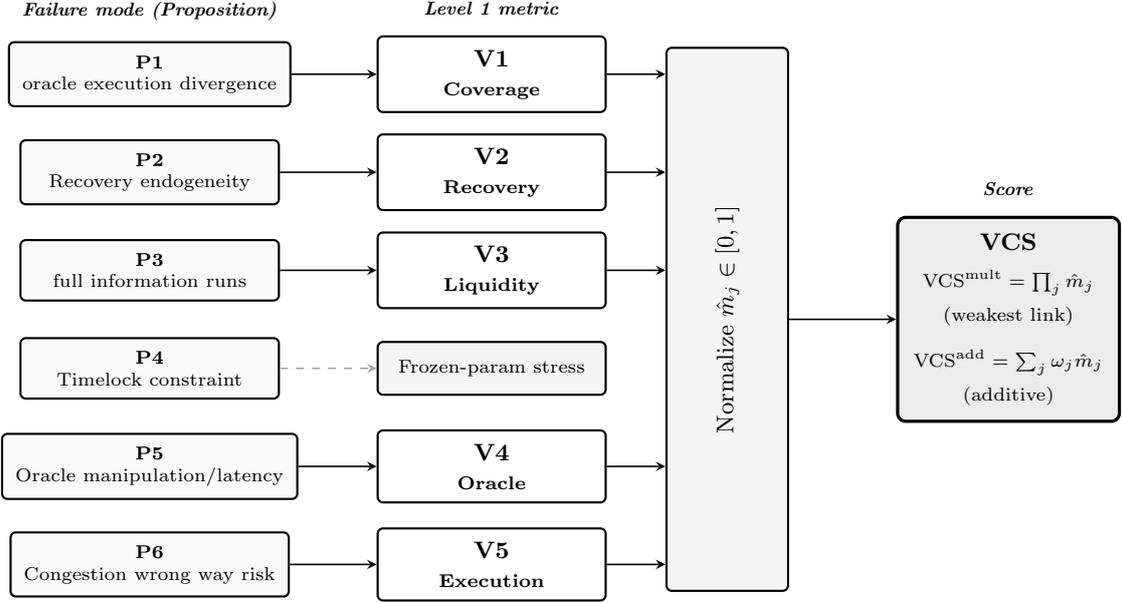
% ─────────────────────────────────────────────────────────────────────────

\section{Empirical implementation}
\label{sec:empirical}

\subsection{Data architecture}
Implementation of V1-V5 requires a three layer data stack covering onchain protocol state, oracle/reference prices, and market microstructure plus network conditions. Layer~1 contains vault accounting and liquidation logs sufficient to reconstruct $L_t^{v}$, $B_t^{v}$, $D_t^{v}$, and execution outcomes. Layer~2 contains oracle prices and update timestamps and a reference price proxy $P_{a,t}^{\mathrm{ref}}$ for bounding oracle error. Layer~3 contains DEX depth snapshots, swap logs for reconstructing $\widetilde{P}_{a,t}$ and gas/priority fee data. A prerequisite collateral evaluation (Remark~\ref{rem:collateral_prereq} and Table~\ref{tab:collateral_taxonomy}) is required before the data architecture can be applied to a given vault. Figure~\ref{fig:estimation_pipeline} provides an overview of the full estimation pipeline from raw data to VCS and Figure~\ref{fig:data_deps} maps data dependencies for each metric.

% ── FIGURE 3: Estimation Pipeline ────────────────────────────────────────
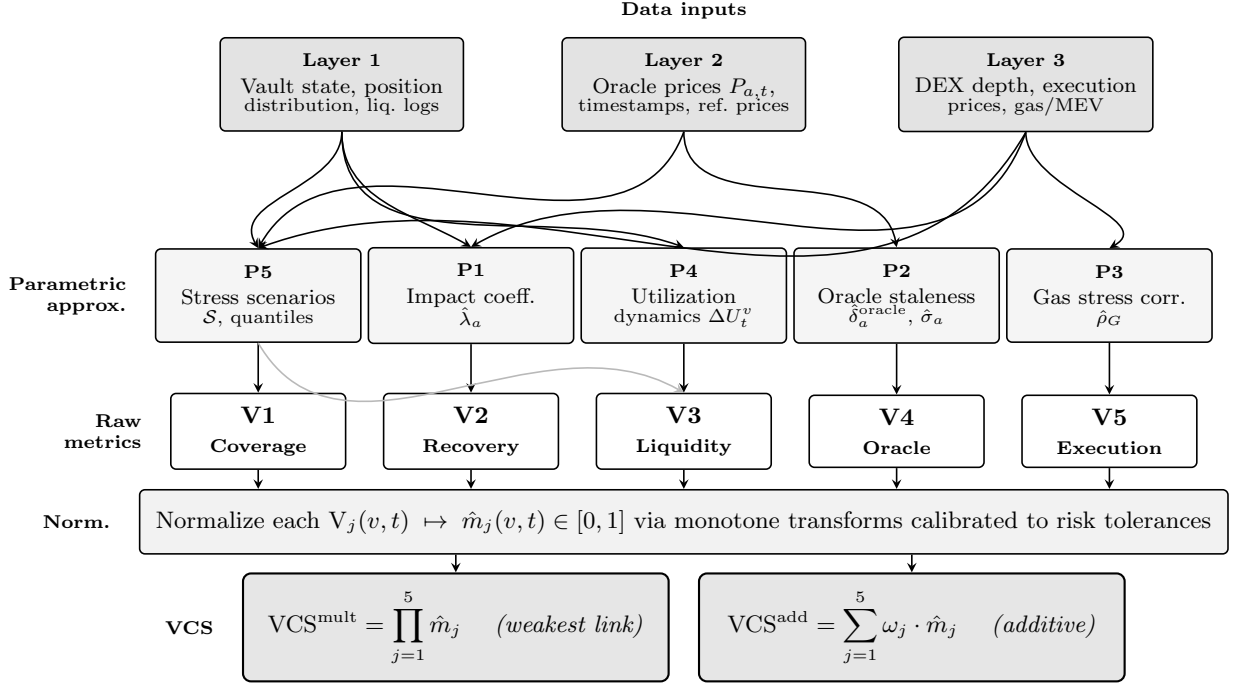
\begin{figure}[htbp]
\centering
\begin{tikzpicture}[
  databox/.style={
    draw=black, semithick, rounded corners=2pt,
    minimum width=3.2cm, minimum height=1.1cm,
    inner sep=6pt, align=center, font=\scriptsize,
    fill=gray!22
  },
  procbox/.style={
    draw=black, semithick, rounded corners=2pt,
    minimum width=2.7cm, minimum height=1.1cm,
    inner sep=6pt, align=center, font=\scriptsize,
    fill=gray!8
  },
  rawbox/.style={
    draw=black, semithick, rounded corners=2pt,
    minimum width=2.3cm, minimum height=0.85cm,
    inner sep=5pt, align=center, font=\small\bfseries,
    fill=white
  },
  normstep/.style={
    draw=black, semithick, rounded corners=2pt,
    minimum width=13.5cm, minimum height=0.85cm,
    inner sep=6pt, align=center, font=\small,
    fill=gray!10
  },
  vcsstep/.style={
    draw=black, thick, rounded corners=3pt,
    minimum width=5.6cm, minimum height=1.0cm,
    inner sep=6pt, align=center, font=\small,
    fill=gray!20
  },
  arr/.style={->, >=stealth, semithick},
  arrgray/.style={->, >=stealth, semithick, gray!55},
  lbl/.style={font=\scriptsize\bfseries, align=right}
]

%% ── Row 1: Data inputs (y = 0) ──────────────────────────────────────────
%  Nudged slightly wider (±4.5) to balance the wider procedure row below
\node[databox] (D1) at (-4.5, 0)
  {\textbf{Layer 1}\\[2pt]\footnotesize Vault state, position\\[-1pt]distribution, liq.\ logs};
\node[databox] (D2) at (0, 0)
  {\textbf{Layer 2}\\[2pt]\footnotesize Oracle prices $P_{a,t}$,\\[-1pt]timestamps, ref.\ prices};
\node[databox] (D3) at (4.5, 0)
  {\textbf{Layer 3}\\[2pt]\footnotesize DEX depth, execution\\[-1pt]prices, gas/MEV};

\node[font=\scriptsize\bfseries, above=0.12cm of D2] {Data inputs};

%% ── Row 2: Procedures (y = -2.8) ────────────────────────────────────────
%  Spacing increased from 2.4 to 2.8 units: x = ±5.6, ±2.8, 0
\node[procbox] (P5b) at (-5.6, -2.8)
  {\textbf{P5}\\[2pt]\footnotesize Stress scenarios\\[-1pt]$\mathcal{S}$, quantiles};
\node[procbox] (P1b) at (-2.8, -2.8)
  {\textbf{P1}\\[2pt]\footnotesize Impact coeff.\\[-1pt]$\hat\lambda_a$};
\node[procbox] (P4b) at (0, -2.8)
  {\textbf{P4}\\[2pt]\footnotesize Utilization\\[-1pt]dynamics $\Delta U_t^v$};
\node[procbox] (P2b) at (2.8, -2.8)
  {\textbf{P2}\\[2pt]\footnotesize Oracle staleness\\[-1pt]$\hat\delta_a^{\mathrm{oracle}}$, $\hat\sigma_a$};
\node[procbox] (P3b) at (5.6, -2.8)
  {\textbf{P3}\\[2pt]\footnotesize Gas stress corr.\\[-1pt]$\hat\rho_G$};

\node[lbl, left=0.25cm of P5b] {Parametric\\approx.};

%% ── Row 3: Raw metrics (y = -4.6) ───────────────────────────────────────
%  Same x-positions as procedures so arrows go straight down
\node[rawbox] (rV1) at (-5.6, -4.6) {V1\\{\scriptsize Coverage}};
\node[rawbox] (rV2) at (-2.8, -4.6) {V2\\{\scriptsize Recovery}};
\node[rawbox] (rV3) at (0,    -4.6) {V3\\{\scriptsize Liquidity}};
\node[rawbox] (rV4) at (2.8,  -4.6) {V4\\{\scriptsize Oracle}};
\node[rawbox] (rV5) at (5.6,  -4.6) {V5\\{\scriptsize Execution}};

\node[lbl, left=0.25cm of rV1] {Raw\\metrics};

%% ── Row 4: Normalization (y = -5.8) ─────────────────────────────────────
\node[normstep] (NORM) at (0, -5.8)
  {Normalize each $\mathrm{V}_j(v,t)\;\mapsto\;\hat{m}_j(v,t)\in[0,1]$
    via monotone transforms calibrated to risk tolerances};

\node[lbl, left=0.25cm of NORM.west] {Norm.};

%% ── Row 5: VCS (y = -7.2) ───────────────────────────────────────────────
\node[vcsstep] (VCSMUL) at (-3.0, -7.2)
  {$\mathrm{VCS}^{\mathrm{mult}} = \displaystyle\prod_{j=1}^5 \hat{m}_j$
   \quad\textit{(weakest link)}};
\node[vcsstep] (VCSADD) at (3.0, -7.2)
  {$\mathrm{VCS}^{\mathrm{add}} = \displaystyle\sum_{j=1}^5 \omega_j\cdot\hat{m}_j$
   \quad\textit{(additive)}};

\node[lbl, left=0.25cm of VCSMUL.west] {VCS};

%% ── Arrows: Data → Procedures ───────────────────────────────────────────
\draw[arr] (D1.south) to[out=270, in=120] (P5b.north);
\draw[arr] (D1.south) to[out=270, in=150] (P1b.north);
\draw[arr] (D1.south) to[out=270, in=160] (P4b.north);
\draw[arr] (D2.south) to[out=270, in=100] (P2b.north);
\draw[arr] (D2.south) to[out=250, in=60]  (P5b.north);
\draw[arr] (D3.south) to[out=270, in=30]  (P3b.north);
\draw[arr] (D3.south) to[out=250, in=30]  (P1b.north);
\draw[arr] (D3.south) to[out=240, in=20]  (P5b.north);

%% ── Arrows: Procedures → Metrics (straight down) ────────────────────────
\draw[arr] (P5b.south) -- (rV1.north);
\draw[arr] (P1b.south) -- (rV2.north);
\draw[arr] (P4b.south) -- (rV3.north);
\draw[arr] (P2b.south) -- (rV4.north);
\draw[arr] (P3b.south) -- (rV5.north);
\draw[arrgray] (P5b.south) to[out=300, in=150] (rV3.north);

%% ── Arrows: Metrics → Normalization ─────────────────────────────────────
\foreach \m in {rV1, rV2, rV3, rV4, rV5}{
  \draw[arr] (\m.south) -- (NORM.north -| \m.south);
}

%% ── Arrows: Normalization → VCS ─────────────────────────────────────────
\draw[arr] (NORM.south -| VCSMUL.north) -- (VCSMUL.north);
\draw[arr] (NORM.south -| VCSADD.north) -- (VCSADD.north);

\end{tikzpicture}
\caption{Estimation pipeline for vault credit score.}
\label{fig:estimation_pipeline}
\end{figure}

\begin{table}[htbp]
\centering
\caption{Level 1 data architecture. }
\label{tab:data_architecture_L1}
\begin{tabular}{>{\centering\arraybackslash}p{0.24\textwidth}
>{\centering\arraybackslash}p{0.03\textwidth} 
>{\centering\arraybackslash}p{0.22\textwidth} 
>{\centering\arraybackslash}p{0.12\textwidth} 
>{\centering\arraybackslash}p{0.14\textwidth} 
>{\centering\arraybackslash}p{0.18\textwidth}}
\toprule
\textbf{Data item} & \textbf{Layer} & \textbf{Source} & \textbf{Granularity} & \textbf{Depth required} & \textbf{Gaps/issues} \\
\midrule
$L_t^{v}$, $B_t^{v}$, $D_t^{v}$ & 1 & Contract state/events/subgraph & hourly--daily & $\ge 1$ year & Accounting heterogeneity\\
$(C_{u,a,t},B_{u,d,t})$ distribution & 1 & State reconstruction from events & daily--hourly & $\ge 6$ months & Storage/attribution cost \\
$\Theta_t^{v}$ (parameter state) & 1 & Governance execution tx and state & event/block level & full history & Proxy upgrades\\
Liquidation logs & 1 & Events and tx traces & tx level & $\ge 6$ months & Routing inference\\
Oracle prices $P_{a,t}$ and updates & 2 & Oracle contract events/state & block--hourly & $\ge 1$ year & Vendor heterogeneity\\
Reference prices $P_{a,t}^{\mathrm{ref}}$ & 2 & CEX OHLCV/independent TWAP & hourly--daily & $\ge 2$ years & Proxy error\\
DEX depth $D_{a,t}^{\mathrm{pool}}$ & 3 & Pool state (reserves/ticks) & block--hourly & $\ge 6$ months & Fragmented liquidity\\
Execution prices $\widetilde{P}_{a,t}$ & 3 & Swap logs and liquidation traces & tx level & $\ge 6$ months & Multi-hop routing\\
Gas prices $G_t$ and congestion & 3 & Block headers/receipts & block--hourly & $\ge 1$ year & crosschain comparability \\
MEV proxies & 3 & Priority fees/builder data & block level & $\ge 6$ months & Proxy validity\\
$\mathrm{HD}(a)$, $\mathrm{CM}(a)$ (rehypothecation metadata) & 1 & Contract code review/protocol docs & static & per asset & Recursive structure tracing\\
\bottomrule
\end{tabular}
\end{table}

% ── FIGURE 4: Data-Dependency Matrix ──────────────────────────────
%
%  Coordinate system:
%    grid left  = 5.0
%    column j centre: cx(j) = 5.9 + (j-1)*1.8   j = 1..5
%    grid right = 14.0
%    row i centre: ry(i) = -(i-1)*0.9             i = 1..6
%    grid top = 0.45,  grid bottom = -4.95
%
%  Gap chain (all verified clear):
%    row label east (4.0)  →  0.33 cm  →  layer tag left (4.33)
%    layer tag right (4.77) →  0.23 cm  →  grid left (5.0)
%
\begin{figure}[htbp]
\centering
\begin{tikzpicture}

%% ── Column headers ──────────────────────────────────────────────────────
\foreach \j/\name in {1/V1, 2/V2, 3/V3, 4/V4, 5/V5}{
  \node[font=\small\bfseries]
    at ({5.9 + (\j-1)*1.8}, 0.75) {\name};
}

%% ── Row labels + layer tags ─────────────────────────────────────────────
\foreach \i/\lyr/\lbl in {
  1/1/{Vault balance sheet},
  2/2/{Oracle prices and update timestamps},
  3/1/{Liquidation logs},
  4/3/{DEX pool depth and execution prices},
  5/3/{Gas prices and MEV proxies},
  6/2/{Reference prices}
}{
  \node[font=\small, text width=3.8cm, align=right, anchor=east]
    at (4.0, {-(\i-1)*0.9}) {\lbl};
  \node[font=\scriptsize\bfseries, text=gray!60, anchor=center]
    at (4.55, {-(\i-1)*0.9}) {L\lyr};
}

%% ── Grid: outer border ──────────────────────────────────────────────────
\draw[black, thick] (5.0, 0.45) rectangle (14.0, -4.95);

%% ── Grid: internal horizontal rules ────────────────────────────────────
\foreach \i in {1,...,5}{
  \draw[gray!40, thin] (5.0, {0.45 - \i*0.9}) -- (14.0, {0.45 - \i*0.9});
}

%% ── Grid: internal vertical rules ──────────────────────────────────────
\foreach \j in {1,...,4}{
  \draw[gray!40, thin] ({5.0 + \j*1.8}, 0.45) -- ({5.0 + \j*1.8}, -4.95);
}

%% ── Dependency dots ─────────────────────────────────────────────────────
%    row 1  Vault balance:   V1 V2 V3 V4 V5
%    row 2  Oracle prices:   V1 V2    V4 V5
%    row 3  Liq. logs:       V1 V2 V3 V4 V5
%    row 4  DEX depth:       V1 V2    V4 V5
%    row 5  Gas/MEV:                   V5
%    row 6  Ref. prices:     V1          V4
%
\foreach \i/\j in {
  1/1, 1/2, 1/3, 1/4, 1/5,
  2/1, 2/2,      2/4, 2/5,
  3/1, 3/2, 3/3, 3/4, 3/5,
  4/1, 4/2,      4/4, 4/5,
                 5/5,
  6/1,           6/4
}{
  \fill[black] ({5.9 + (\j-1)*1.8}, {-(\i-1)*0.9}) circle (0.07cm);
}

\end{tikzpicture}
\caption{Data dependencies of Level 1 metrics. A filled circle
indicates that the requires the corresponding data series for
computation. Layer labels L1-L3 correspond to the three layer data architecture in Table~\ref{tab:data_architecture_L1}.}
\label{fig:data_deps}
\end{figure}
% ─────────────────────────────────────────────────────────────────────────
\label{rem:collateral_prereq}
Metrics V1-V5 can be computed only when the data objects they require (DEX
depth, oracle update logs, liquidation traces, gas data) are available and
interpretable for each collateral asset $a\in\mathcal{C}^{v}$.
Table~\ref{tab:collateral_taxonomy} organizes the prerequisite collateral
assessment into five dimensions, each of which maps to a specific or
estimation procedure. An asset that fails technical or oracle quality
requirements may render one or more metrics non computable or subject to partial
identification bounds wider than those in
Propositions~\ref{prop:PI1_oracle_bound} and~\ref{prop:PI2_liquidation_bound}.
In that case, the corresponding should be assigned a conservative
worst case score.

\begin{table}[htbp]
\centering
\caption{Collateral due diligence taxonomy.}
\label{tab:collateral_taxonomy}
\begin{tabular}{p{0.18\textwidth} p{0.33\textwidth} p{0.26\textwidth} p{0.17\textwidth}}
\toprule
\textbf{Dimension} & \textbf{Key question} & \textbf{Formal risk object} & \textbf{Primary metric} \\
\midrule
Technical and admin risk &
Are admin keys capable of arbitrary minting, freezing, or non timelock upgrades? Has the contract system been audited by reputable firms? &
Contract level governance integrity, upgrade risk to $\mathcal{E}^{v}$ &
V5 (execution disruption)\\
Market microstructure &
What is DEX depth $D_{a,t}^{\mathrm{pool}}$ across major pairs? What slippage at 1\%, 3\%, 5\% price impact notionals? Does depth hold under stress? &
$D_{a,t}^{\mathrm{pool}}$ in V2, $\lambda_a$ in Procedure~P1 &
V2 (impact function), V1 (stress slippage) \\
Oracle quality &
Is a robust, decentralized oracle available? Are there reference market cross-checks? Is the oracle susceptible to TWAP manipulation via thin reference markets? &
$\eta_{a,t}$, $\delta_a^{\mathrm{oracle}}$, manipulation cost in V4 &
V4a (latency), V4b (manipulation) \\
Correlation and co-movement &
What is $\mathrm{CLR}_t^{v}$ for this asset and the vault's debt set? Does the collateral to debt price ratio exhibit stress time widening episodes? &
$\mathrm{CLR}_t^{v}$ (Remark~\ref{rem:leverage_loop}), liquidation clustering &
V2 (amplified liquidation mass) \\
Rehypothecation and recursive leverage &
Is this asset itself a tokenized claim on a leveraged vault? What is the unwinding depth? &
Rehypothecation depth $\mathrm{HD}(a)$, cascade multiplier $\mathrm{CM}(a)$ (Remark~\ref{rem:rehypothecation}) &
V1 (cascade inflated shortfall), V2 (amplified liquidation mass) \\
\bottomrule
\end{tabular}
\end{table}

\begin{remark}[Rehypothecation depth and cascade shortfall amplification]
\label{rem:rehypothecation}
For any collateral asset $a\in\mathcal{C}^{v}$ that is itself a share token
of another vault $v'$, such as a yield bearing stablecoin backed by a
lending strategy or a tokenized structured product, define the
\emph{rehypothecation depth} $\mathrm{HD}(a)$ recursively: $\mathrm{HD}(a)=1$
for a DeFi native primitive asset, and $\mathrm{HD}(a)=1+\mathrm{HD}(a')$ for
share tokens whose underlying asset is $a'$. A shock of fractional magnitude
$\epsilon$ to the base layer asset propagates through the chain of vaults,
generating a vault level shortfall at vault $v$ proportional to the cascade
multiplier $\mathrm{CM}(a)=\Lambda^{\mathrm{HD}(a)-1}$, where $\Lambda$ is the
leverage multiplier in \eqref{eq:leverage_multiplier}. This formula applies a uniform leverage multiplier $\Lambda$ across all $\mathrm{HD}(a)-1$ layers, in heterogeneous structures where each layer $j$ has its own $\Lambda_j$, the cascade multiplier generalises to $\mathrm{CM}(a)=\prod_{j=1}^{\mathrm{HD}(a)-1}\Lambda_j$, which should be used when per layer LTV data are available. For credit risk measurement, $\mathrm{HD}(a)$ and $\mathrm{CM}(a)$ are required metadata for any collateral asset with $\mathrm{HD}(a)>1$, they should be included in the data architecture (Table~\ref{tab:data_architecture_L1}) and used to scale V1 and V2 inputs conservatively when the base layer shock cannot be directly observed.
Data sources for $\mathrm{HD}$ tracing are protocol documentation and smart
contract code review; partial observability of recursive structures is a known
data gap (see Table~\ref{tab:data_architecture_L1}).
\end{remark}

\begin{remark}[Scope and governance inputs]
This article restricts implementation to Level 1 structural measurement. Full implementation of a joint vault and governance score would additionally require offchain governance documentation, fee structures, and conflict disclosures (those inputs are analyzed in the companion paper). For this paper, governance enters only through the fixed parameter vector $\Theta_t^{v}$ and the timelock constraint \eqref{eq:curator_lag}, which determines whether parameters can be treated as constant over stress horizons.
\end{remark}

\subsection{Estimation of core parameters}
This subsection provides operational estimators for the parameters required by V1-V5 including impact coefficients $\lambda_a$, oracle staleness and latency volatility exposure, gas stress correlation $\rho_G$, utilization dynamics and jumps, and scenario inputs.

\begin{procedure}[Estimation of price impact coefficients $\lambda_a$ for V2]
\label{proc:P1_lambda}
Let $\widehat{\varepsilon}_{a,\tau}=1-\widetilde{P}_{a,\tau}/P_{a,\tau}$ denote realized execution deviation for collateral asset $a$ at liquidation timestamp $\tau$, and let $D_{a,\tau}^{\mathrm{pool}}$ denote contemporaneous depth. Use liquidation events as quasi-exogenous sell flow observations and estimate $\lambda_a$ via:
\begin{equation}
\widehat{\varepsilon}_{a,\tau}=\hat{\lambda}_a\cdot \frac{Q_{a,\tau}^{\mathrm{liq}}}{D_{a,\tau}^{\mathrm{pool}}}
+\hat{\beta}\cdot\mathrm{CLR}_{\tau^-}^{v}+u_{\tau}
\label{eq:lambda_regression}
\end{equation}
where $Q_{a,\tau}^{\mathrm{liq}}$ is liquidation volume in asset units and $u_{\tau}$ is an error term. In vaults with a material share of leverage loop borrowing, an additional
diagnostic specification should include $\mathrm{CLR}_{\tau^-}^{v}$
(Remark~\ref{rem:leverage_loop}) as a covariate for
$\widehat{\varepsilon}_{a,\tau}$, since correlated collateral composition can
amplify realized slippage beyond what the single asset depth ratio $Q/D$
predicts. The coefficient $\hat{\beta}$ estimates the marginal contribution of
collateral correlation to impact and can be used to scale conservative V2
estimates under stress. Inference must be heteroskedasticity robust and clustered/HAC due to temporal clustering of liquidations. If sample sizes are thin, implement partial pooling with fixed effects.
\end{procedure}

\begin{procedure}[Estimation of oracle staleness $\delta_a^{\mathrm{oracle}}$ and latency volatility risk for V4a]
\label{proc:P2_oracle}
Estimate $\delta_a^{\mathrm{oracle}}(t)$ directly from oracle update logs as elapsed time since last update at time $t$. Estimate realized volatility $\widehat{\sigma}_a(t)$ from a reference price series $P_{a,t}^{\mathrm{ref}}$ over rolling window $W$ using an hourly close to close estimator as:
\begin{equation}
\widehat{\sigma}_a^2(t)=\frac{1}{W-1}\sum_{j=1}^{W-1}\left(\ln P_{a,t-j\Delta}^{\mathrm{ref}}-\ln P_{a,t-(j+1)\Delta}^{\mathrm{ref}}\right)^2
\end{equation}
with $\Delta=1$ hour. Under a GBM latency model, approximate the false solvency probability that oracle error exceeds a threshold $\bar{\eta}>0$ as:
\begin{equation}
\Pr\!\left(\eta_{a,t}>\bar{\eta}\right)\approx \Phi\!\left(-\frac{\bar{\eta}}{\widehat{\sigma}_a(t)\cdot\sqrt{\delta_a^{\mathrm{oracle}}(t)}}\right)
\label{eq:oracle_false_solvency_prob}
\end{equation}
where $\Phi(\cdot)$ is the standard normal CDF.
\end{procedure}

\begin{procedure}[Estimation of gas stress correlation $\rho_G$ for V5]
\label{proc:P3_rhog}
Let $G_t$ denote hourly average gas price and let $S_t$ denote a scalar stress measure, e.g., absolute collateral return aggregated across $\mathcal{C}^{v}$. Estimate the stress conditional correlation over a rolling window $W$ by restricting to the top $q$ percentile of $S_t$:
\begin{equation}
\hat{\rho}_G(t)=\mathrm{Corr}\!\left(G_s,S_s \,\big|\, S_s\ge \widehat{Q}_{S}(q)\right)_{s\in[t-W,t]}
\label{eq:rho_G_estimator}
\end{equation}
where $\widehat{Q}_{S}(q)$ is the empirical $q$-quantile.
\end{procedure}

\begin{procedure}[Estimation of utilization dynamics and jump risk for V3]
\label{proc:P4_util}
Estimate hourly utilization $U_t^{v}=B_t^{v}/D_t^{v}$ and compute increments $\Delta U_t^{v}$. Estimate drift and diffusion over window $W$ and identify jump candidates via $|\Delta U_t^{v}|>n\cdot\widehat{\sigma}_U$ with $n$ calibrated to the vault's observed utilization volatility regime. Compute $\widehat{\mathrm{V3}}(v,t;h)$ by Monte Carlo simulation using the fitted diffusion and empirical jump distribution.
\end{procedure}

\begin{procedure}[Point estimation inputs for stress scenario construction $\mathcal{S}$]
\label{proc:P5_sceninputs}
Compute empirical quantiles of collateral drawdowns over horizons aligned to stress windows and compute empirical depth and gas quantiles over identified stress periods. These estimates parameterize the historical and parascenario classes in Section~\ref{subsec:stress_scenarios}.
\end{procedure}

\begin{procedure}[Yield attribution and sustainability assessment]
\label{proc:P6_yield}
For protocol bound lending vaults within the scope of V1-V5, $Y_h^{\mathrm{basis}} = 0$ by construction. The vault does not take derivative positions and any funding rate carry embedded in collateral tokens, e.g., synthetic stablecoins, belongs to the collateral issuer's structure rather than to the vault yield. The $Y_h^{\mathrm{basis}}$ component is retained in the decomposition for completeness and for future application to cross protocol vaults incorporating delta neutral strategies.
Observed vault yield is a composite of heterogeneous sources with materially
different risk and sustainability profiles. Define the yield decomposition over
horizon $h$:
\begin{equation}
Y_h = Y_h^{\mathrm{organic}} + Y_h^{\mathrm{incentive}}
+ Y_h^{\mathrm{basis}} + Y_h^{\mathrm{arb}}
\label{eq:yield_decomp}
\end{equation}
where $Y_h^{\mathrm{organic}}$ denotes yield from structural sources such as lending rates, AMM fees and staking rewards. $Y_h^{\mathrm{incentive}}$ denotes yield from governance token emission programs (typically unsustainable when programs end or token prices fall). $Y_h^{\mathrm{basis}}$ denotes funding rate carry from delta neutral strategies (sustainable subject to hedge integrity) and
$Y_h^{\mathrm{arb}}$ denotes temporary yield from pricing inefficiencies (by
definition transitory). A vault's yield is sustainable at target level
$Y^{\mathrm{target}}$ if $Y_h^{\mathrm{organic}} \ge Y^{\mathrm{target}}-\epsilon$ for a conservative buffer $\epsilon>0$. Vaults where
$Y_h^{\mathrm{organic}} \ll Y^{\mathrm{target}}$ are functionly dependent on
incentive programs, a sudden collapse in $Y_h^{\mathrm{incentive}}$ can trigger
a withdrawal run as depositors reassess expected yield, implying that incentive
program endings should be modeled as jump drivers in V3's utilization dynamics
alongside price shocks. To estimate $Y_h^{\mathrm{organic}}$ versus
$Y_h^{\mathrm{incentive}}$, decompose total observed yield into protocol rate
income (from lending rate logs and pool fee records) and emission income (from
onchain emission distribution events). The ratio
$Y_h^{\mathrm{organic}}/Y_h$ is an indicator of structural yield quality and
should be reported alongside VCS as a forward looking depositor risk signal.
\end{procedure}

\begin{remark}[Pooling versus protocol specific estimation]
For protocols with limited liquidation histories, cross protocol pooling of $\lambda_a$ and stress conditional $\rho_G$ estimates can stabilize inference but introduces model risk due to heterogeneity in liquidation mechanisms, routing, and chain conditions. Pooling is conservative when pooled samples overweight high impact venues, but can be optimistic if pooled samples underrepresent tail episodes relevant to the target vault. A principled pooling strategy requires hierarchical modeling, thus in practice, pooled estimates should be accompanied by sensitivity ranges.
\end{remark}

\subsection{Partial identification and bounds}
Two key objects cannot be point identified from standard onchain observables: (i) oracle error relative to the latent efficient price and (ii) liquidation failure probabilities from observed liquidation outcomes alone. Therefore, the framework uses identified set reasoning to provide conservative bounds.

\begin{proposition}[Partial identification of oracle error bounds from reference spreads]
\label{prop:PI1_oracle_bound}
Let $P_{a,t}^{\mathrm{ref}}$ be an observable reference price proxy and define the spread $S_{a,t}\equiv P_{a,t}-P_{a,t}^{\mathrm{ref}}$. Write $P_{a,t}^{\mathrm{ref}}=P_{a,t}^{\mathrm{true}}+\zeta_{a,t}$, where $\zeta_{a,t}$ is reference proxy error, so $S_{a,t}=\eta_{a,t}-\zeta_{a,t}$. Oracle error variance is not point identified. If $\zeta_{a,t}$ is mean zero and independent of $\eta_{a,t}$, then:
\begin{equation}
\mathrm{Var}(\eta_{a,t}) = \mathrm{Var}(S_{a,t})+\mathrm{Var}(\zeta_{a,t}) \ \ge \ \mathrm{Var}(S_{a,t})
\end{equation}
so $\mathrm{Var}(S_{a,t})$ provides a lower bound. The sign of systematic oracle bias is identified from $\mathbb{E}[S_{a,t}]$, and lead lag structure from cross correlations of $(P_{a,t},P_{a,t}^{\mathrm{ref}})$. Latency risk components of V4 can therefore be bounded from below using the reference spread, with tightness determined by the magnitude of $\mathrm{Var}(\zeta_{a,t})$.
\end{proposition}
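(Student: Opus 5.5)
The argument is a direct variance decomposition built on the additive measurement model in the statement, followed by three short identification remarks. Substituting $P_{a,t}^{\mathrm{ref}}=P_{a,t}^{\mathrm{true}}+\zeta_{a,t}$ into $S_{a,t}=P_{a,t}-P_{a,t}^{\mathrm{ref}}$ and using $\eta_{a,t}=P_{a,t}-P_{a,t}^{\mathrm{true}}$ gives the identity $S_{a,t}=\eta_{a,t}-\zeta_{a,t}$.

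Only $S_{a,t}$ is observable, not the pair $(\eta_{a,t},\zeta_{a,t})$. This establishes non point identification. The observed distribution of $S_{a,t}$ is generated by every decomposition $\eta=S+\zeta$ with $\zeta$ independent of $\eta$ and of arbitrary variance. For instance, if $S$ is Gaussian, any split of its law into independent Gaussian $\eta$ and $-\zeta$ with $\mathrm{Var}(\eta)-\mathrm{Var}(\zeta)=\mathrm{Var}(S)$ reproduces it. Hence $\mathrm{Var}(\eta_{a,t})$ ranges over an interval and is not pinned down.

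The variance claim needs care with the direction of the decomposition. Independence of $\zeta_{a,t}$ and $\eta_{a,t}$ gives $\mathrm{Var}(S_{a,t})=\mathrm{Var}(\eta_{a,t})+\mathrm{Var}(\zeta_{a,t})$. This is the opposite of the displayed formula, and it makes $\mathrm{Var}(S_{a,t})$ an \emph{upper} bound on $\mathrm{Var}(\eta_{a,t})$.

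To obtain the displayed equality $\mathrm{Var}(\eta)=\mathrm{Var}(S)+\mathrm{Var}(\zeta)$ as written, I would instead assume $\zeta_{a,t}$ independent of $S_{a,t}$, a Berkson-type error. Then $\eta=S+\zeta$ yields the formula and the lower bound $\mathrm{Var}(\eta_{a,t})\ge\mathrm{Var}(S_{a,t})$. This reconciliation is the main obstacle. I would prove the statement under Berkson independence, which delivers the lower bound. I would also remark that under classical independence ($\zeta\perp\eta$) the same algebra gives $\mathrm{Var}(\eta)=\mathrm{Var}(S)-\mathrm{Var}(\zeta)\le\mathrm{Var}(S)$. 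In that case the reference spread is an upper bound, which is still usable, with the same tightness governed by $\mathrm{Var}(\zeta_{a,t})$.

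The remaining claims follow quickly.

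\textbf{Bias.} Since $\mathbb{E}[\zeta_{a,t}]=0$, we have $\mathbb{E}[S_{a,t}]=\mathbb{E}[\eta_{a,t}]$, so the mean bias, and in particular its sign, is point identified.

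\textbf{Lead--lag.} Under the latency model \eqref{eq:M4a_latency}, $P_{a,t}=P^{\mathrm{true}}_{a,t-\delta}$. With $\zeta$ serially uncorrelated and independent of price increments, the cross-covariances of the increments of $P_{a,t}$ and $P^{\mathrm{ref}}_{a,t}$ peak at lag $\delta_a^{\mathrm{oracle}}$, which identifies the lag.

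\textbf{Monotone map to V4.} The latency proxy in Procedure~\ref{proc:P2_oracle} is monotone in the dispersion of $\eta_{a,t}$, so a bound on $\mathrm{Var}(\eta_{a,t})$ carries over to a bound on the V4a component. The width of the identified set is exactly $\mathrm{Var}(\zeta_{a,t})$, which gives the tightness statement.
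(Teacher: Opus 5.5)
Your diagnosis is correct, and the paper gives no proof to compare against. The proposition asserts $\mathrm{Var}(\eta_{a,t})=\mathrm{Var}(S_{a,t})+\mathrm{Var}(\zeta_{a,t})$ from $\zeta_{a,t}\perp\eta_{a,t}$, and that is an algebra error. With $S=\eta-\zeta$ and $\mathrm{Cov}(\eta,\zeta)=0$ one gets $\mathrm{Var}(S)=\mathrm{Var}(\eta)+\mathrm{Var}(\zeta)$. A perfect oracle, $\eta\equiv 0$, with any nondegenerate $\zeta$ gives $\mathrm{Var}(S)=\mathrm{Var}(\zeta)>0=\mathrm{Var}(\eta)$. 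So both the displayed equality and the lower bound fail whenever $\mathrm{Var}(\zeta)>0$.

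Your Berkson repair is the minimal change that delivers the display. It also preserves how the paper later uses it: a lower bound on oracle-error variance caps the attainable V4 score. You can add an economic justification. If $P^{\mathrm{ref}}_{a,t}=\mathbb{E}[P^{\mathrm{true}}_{a,t}\mid\mathcal{G}]$ for an information set $\mathcal{G}$ containing $P_{a,t}$, then $\zeta$ is orthogonal to every $\mathcal{G}$-measurable variable, in particular to $S$. That zero covariance is all the identity needs; independence is not required in either model.

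Under the hypothesis as actually stated, your reversed result $\mathrm{Var}(\eta)=\mathrm{Var}(S)-\mathrm{Var}(\zeta)\le\mathrm{Var}(S)$ is the correct one. It flips the operational reading: the spread bounds latency risk from above and gives a floor, not a ceiling, on V4a. That is arguably the more useful direction for the worst-case risk limits the paper recommends. The bias claim holds in both models, since it uses only $\mathbb{E}[\zeta]=0$.

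Three slips need fixing.

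\textbf{The non-identification example mixes the two error models.} If $\eta$ and $\zeta$ are independent Gaussians, then $\mathrm{Var}(\eta)+\mathrm{Var}(\zeta)=\mathrm{Var}(S)$. So $\mathrm{Var}(\zeta)$ is not ``arbitrary''; it is confined to $[0,\mathrm{Var}(S)]$. The constraint $\mathrm{Var}(\eta)-\mathrm{Var}(\zeta)=\mathrm{Var}(S)$ you wrote is the Berkson one, under which $\mathrm{Cov}(\eta,\zeta)=\mathrm{Var}(\zeta)\neq 0$. State the identified sets separately: $\mathrm{Var}(\eta)\in[0,\mathrm{Var}(S)]$ under classical error and $\mathrm{Var}(\eta)\in[\mathrm{Var}(S),\infty)$ under Berkson error.

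\textbf{$\mathrm{Var}(\zeta)$ is not the width of the identified set.} Under Berkson error that set is unbounded unless $\mathrm{Var}(\zeta)$ is bounded a priori. $\mathrm{Var}(\zeta)$ is the slack between the bound and the true value, which is what the statement's ``tightness'' refers to.

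\textbf{The lead--lag step needs different assumptions.} It needs $\zeta$ to be uncorrelated with true-price increments at all leads and lags, which is stronger than $\zeta\perp\eta$. It also needs the true-price increments to have an autocovariance $\gamma$ uniquely maximized at lag zero, for example serially uncorrelated increments. Then $\mathrm{Cov}(\Delta P_{a,t},\Delta P^{\mathrm{ref}}_{a,t-k})=\gamma(k-\delta_a^{\mathrm{oracle}})$ peaks at $k=\delta_a^{\mathrm{oracle}}$. Serial uncorrelatedness of $\zeta$ itself plays no role.
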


\begin{proposition}[Partial identification of lower bound on liquidation failure probability]
\label{prop:PI2_liquidation_bound}
Let $\mathcal{L}_{u,t}$ be the liquidation trigger event and $\mathcal{V}_{u,t}$ be the viability event. Liquidation outcomes observe execution events but do not cleanly observe counterfactual non executions because self repayment prior to execution is observationally similar to failed liquidation. Let $\tau_{\max}>0$ be a maximum execution window and define the observable event $\mathcal{U}_{u,t}(\tau_{\max})$ that an account remains undercollateralized for longer than $\tau_{\max}$ without liquidation. Then
\begin{equation}
\Pr(\mathcal{V}_{u,t}^{c}\mid \mathcal{L}_{u,t}) \ \ge\ \Pr(\mathcal{U}_{u,t}(\tau_{\max})\mid \mathcal{L}_{u,t})
\end{equation}
so the right-hand side provides an identified lower bound. The bound is tighter for larger $\tau_{\max}$ but becomes susceptible to survivorship and intervention bias.
\end{proposition}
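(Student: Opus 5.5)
The plan is a set-inclusion (monotonicity) argument: establish, conditional on $\mathcal{L}_{u,t}$, that $\mathcal{U}_{u,t}(\tau_{\max})\subseteq\mathcal{V}_{u,t}^{c}$ up to null sets, and then invoke monotonicity of conditional probability. I will make this precise by reading $\mathcal{V}_{u,t}$ over the execution window. That is, $\mathcal{V}_{u,t}$ is the event that at some $s\in[t,t+\tau_{\max}]$ the viability inequality \eqref{eq:viability} holds, so that a rational liquidator could act. This reading is consistent with the estimator in Corollary~\ref{cor:M5_estimator}, which already operationalizes non-viability as ``remains unliquidated beyond a protocol specific maximum execution window.''

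The key step is the contrapositive. Suppose liquidation is viable at some $s$ in the window while $\mathrm{HF}_{u,s}^{v}<1$. Under Assumption~\ref{ass:M5-1} (rational liquidator participation), a profitable liquidation is executed, up to negligible inclusion latency, within the window. The account therefore does not remain undercollateralized without liquidation for longer than $\tau_{\max}$. It follows that $\mathcal{V}_{u,t}\cap\mathcal{L}_{u,t}\subseteq\mathcal{U}_{u,t}(\tau_{\max})^{c}$, or equivalently $\mathcal{U}_{u,t}(\tau_{\max})\cap\mathcal{L}_{u,t}\subseteq\mathcal{V}_{u,t}^{c}\cap\mathcal{L}_{u,t}$. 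Dividing by $\Pr(\mathcal{L}_{u,t})>0$ gives the displayed inequality.

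The remaining steps establish the identification claim. First, $\mathcal{U}_{u,t}(\tau_{\max})$ is measurable with respect to public onchain data, namely health factor paths and liquidation logs, so the right-hand side is point identified. Second, the gap between the two sides is the mass of non-viable accounts that exit the trigger region within $\tau_{\max}$ through self repayment, price recovery, or top-ups. These paths are observationally equivalent to successful liquidation and are therefore not revealed. Third, the tightness claim follows because $\mathcal{U}_{u,t}(\tau_{\max})$ is decreasing in $\tau_{\max}$ while the window-based event $\mathcal{V}^{c}_{u,t}$ also shrinks as $\tau_{\max}$ grows. The gap narrows because fewer non-viable accounts escape unobserved over short horizons relative to execution. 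The survivorship caveat will be stated rather than proved.

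The main obstacle is the first inclusion. It fails if viable liquidations are delayed beyond $\tau_{\max}$ by non-economic frictions such as keeper outages, which would violate Assumption~\ref{ass:M5-1}. It also fails if $\tau_{\max}$ is shorter than the typical block inclusion time. To handle this, I will require that $\tau_{\max}$ exceed the maximal inclusion latency under viability, and I will attribute residual violations to a failure of Assumption~\ref{ass:M5-1}.
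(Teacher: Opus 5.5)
The paper states this proposition without proof, so your argument is measured against what a correct proof needs rather than against an authors' version. Your set-inclusion argument for the displayed inequality is the natural one. You are also right that the inequality is not an identity. It rests on Assumption~\ref{ass:M5-1} (viable $\Rightarrow$ executed) plus a latency condition, and neither appears in the statement.

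One step needs repair. With your window-based $\mathcal{V}_{u,t}$ (viability at some $s\in[t,t+\tau_{\max}]$), requiring $\tau_{\max}$ to exceed the maximal inclusion latency $\bar{\ell}$ does not secure $\mathcal{U}_{u,t}(\tau_{\max})\cap\mathcal{L}_{u,t}\subseteq\mathcal{V}_{u,t}^{c}$. Viability that first appears at some $s$ close to $t+\tau_{\max}$ is executed only after the window closes, so such a path lies in both $\mathcal{U}_{u,t}(\tau_{\max})$ and $\mathcal{V}_{u,t}$. There are two fixes:
\begin{itemize}
\item use the paper's point-in-time event \eqref{eq:viability}, for which $\tau_{\max}\ge\bar{\ell}$ is exactly the right condition; or
\item shrink the viability window to $[t,t+\tau_{\max}-\bar{\ell}]$. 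That event contains the point-in-time one, so it still implies the stated inequality.
\end{itemize}

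The genuine gap is your third step, the tightness claim. You assert that the gap narrows but do not derive it, and with the paper's point-in-time $\mathcal{V}_{u,t}$ it runs the wrong way.
\begin{itemize}
\item $\Pr(\mathcal{V}_{u,t}^{c}\mid\mathcal{L}_{u,t})$ does not depend on $\tau_{\max}$, while $\mathcal{U}_{u,t}(\tau_{\max})$ is decreasing in $\tau_{\max}$, as you note.
\item So the lower bound weakly falls, and the gap $\Pr(\mathcal{V}_{u,t}^{c}\cap\mathcal{U}_{u,t}(\tau_{\max})^{c}\mid\mathcal{L}_{u,t})$ weakly grows.
\item The reason is that, as the window lengthens, more accounts non-viable at $t$ self-repay, are topped up, see prices recover, or become viable later and are liquidated.
\end{itemize}
Under your window-based definition the target itself moves with $\tau_{\max}$. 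The gap is then the probability of a shrinking event (never viable on the window) intersected with a growing one (undercollateralization ends by $t+\tau_{\max}$), and that has no monotonicity in general.

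The only defensible content of ``tighter for larger $\tau_{\max}$'' is validity, not tightness in magnitude. The inclusion holds only once $\tau_{\max}$ exceeds execution latency, so a short window misreads pending executions as non-viability and can overstate the bound. Lengthening the window beyond that threshold only loosens the bound and increases exposure to intervention and survivorship effects. You should prove that trade-off, namely monotonicity of the right-hand side plus the latency threshold for validity, instead of claiming a narrowing you cannot establish.
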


\begin{remark}[Partial identification and what it implies operationally]
The identified bounds are decision relevant even when point identification fails. For oracle integrity (V4), a lower bound on oracle error variance implies a conservative upper bound on the attainable integrity score under any plausible latent price model. For execution viability (V5), the lower bound on failure probability implies that any point estimate based solely on executed liquidations is optimistic unless it incorporates undercollateralization persistence..
\end{remark}

\subsection{Stress scenario construction}
\label{subsec:stress_scenarios}
Stress scenario construction provides the scenario set $\mathcal{S}$ used in V1 and V3 and informs V5 stress conditioning. The central requirement is internal consistency, a scenario $s$ must specify a joint realization of collateral price shocks, depth contractions, utilization stress and gas congestion.
Historical scenarios are constructed from empirical tail events in the vault’s history or in comparable protocols. Include event $s$ if collateral weighted drawdown exceeds a historical quantile threshold $q_H$, and parameterize the scenario by $(P_{a,t}^{(s)},D_{a,t}^{(s)},G_t^{(s)})$ observed in that episode (recommended $q_H=0.9$ over trailing two years).

Parascenarios specify a grid over price drawdown magnitudes and depth reductions. For each asset $a$, select drawdown shocks $\Delta P_a$ and depth shocks $\Delta D_a/D_a$ on a grid and evaluate $\mathrm{V1}(v,t;s)$ and $\mathrm{V3}(v,t;h;s)$ over the grid. Depth reduction baselines such as 50\% must be calibrated to the vault's observed depth regime.
Adversarial scenarios define a robust worst case within a feasibility set $\mathcal{Z}^{\mathrm{adv}}$, e.g., shocks bounded by $k$ standard deviations:
\begin{equation}
s^\ast = \arg\min_{s\in\mathcal{Z}^{\mathrm{adv}}} \ \mathrm{V1}(v,t;s)
\end{equation}
Cross-consistency requires that scenarios specify a joint quadruple $(P^{(s)},D^{(s)},U^{(s)},G^{(s)})$ that is dynamically attainable. A minimal consistency condition is that the implied utilization state is compatible with the balance sheet and liquidation mechanics under the specified price and depth shocks:
\begin{equation}
U_t^{(s)} \le \frac{B_t^{v}+\Delta B_t^{(s)}}{D_t^{v}-\Delta D_t^{(s)}}
\label{eq:scenario_consistency}
\end{equation}
where $\Delta B_t^{(s)}$ and $\Delta D_t^{(s)}$ denote scenario implied increments from borrow demand and deposit outflows, as induced by the stress dynamics and require protocol specific modeling of borrow and withdrawal elasticities.

\begin{remark}[Tail scenario coverage and unknown unknowns]
Historical and parascenarios omit structural breaks such as novel collateral types, chain level outages and unprecedented venue fragmentation. Adversarial scenarios partially address this by optimizing within $\mathcal{Z}^{\mathrm{adv}}$, but outcomes outside that set require expert judgment and should be treated as residual risk. In particular, oracle manipulation scenarios depend on oracle mechanism constraints and cannot be reduced to price shocks alone.
\end{remark}

\subsection{Validation and backtesting}
Validation tests whether V1 and V2 have predictive content for realized shortfalls and whether estimated parameters behave consistently with their mechanism interpretation. Because depositor shortfalls are rare, validation must leverage cross-vault variation within common stress episodes as well as within vault time series behavior. Remark~\ref{rem:strat_validation} specifies the additional validation requirements for cross protocol vaults with strategy specific risks outside the V1-V5 scope.

For V1, define a backtest at stress episode start time $t_i$ with horizon $h$ (aligned to $\Delta^{*v}$ or a fixed horizon such as 24 hours). Let $\mathbf{1}\{\Delta_{t_i+h}^{v}>0\}$ indicate realized shortfall. Use $\widehat{\mathrm{V1}}(v,t_i;s_i)$ as a predictor in a calibration regression:
\begin{equation}
\Pr\!\left(\Delta_{t_i+h}^{v}>0\right)=\mathrm{logit}^{-1}\!\left(\alpha_0+\alpha_1\,\widehat{\mathrm{V1}}(v,t_i;s_i)\right)
\label{eq:backtest_regression}
\end{equation}
and test $\alpha_1<0$ as the directional implication. For V2, relate predicted volume conditioned shortfall to realized shortfall severity in liquidation clusters by regressing realized loss rates on predicted $\widehat{\mathrm{V2}}$ under matched scenarios.

Unit validation of the critical response window $\widehat{\Delta}^{*v}$ is relevant because it governs whether parameters are effectively frozen over stress horizons (Proposition~\ref{prop:curator_lag}) and because it anchors horizon choice in stress testing. A direct diagnostic compares the frequency with which shortfalls occur within $\widehat{\Delta}^{*v}$ hours of stress onset to the frequency with which parameter changes (if any) complete within that window; the latter is not used for scoring in this paper but provides a timing sanity check for stress horizons:
\begin{equation}
\widehat{\mathrm{Gap}}^{v}\equiv
\frac{1}{N_e}\sum_{i=1}^{N_e}\mathbf{1}\{\Delta_{t_i+\widehat{\Delta}^{*v}}^{v}>0\}
-\frac{1}{N_e}\sum_{i=1}^{N_e}\mathbf{1}\{\ell_i\le \widehat{\Delta}^{*v}\}
\end{equation}
where $\ell_i$ is an observed protocol specific action latency when available.

\begin{remark}[Backtesting constraints and tail event scarcity]
DeFi lending protocols have short histories and a limited number of extreme stress episodes, limiting statistical power for out of sample tests. This data scarcity problem has a direct TradFi analog. \citet{TreacyCarey2000} document that even the 50 largest US banking organizations lacked sufficient loan level performance data by internal grade to empirically calibrate PD/LGD estimates, relying instead on mappings to agency grades and judgmental adjustments. A complementary calibration scarcity result from the latent factor literature is provided by \citet{DuffieEcknerHorelSaita2009}. Even with 25 years of monthly default data on 2,793 firms, the mean reversion rate $\kappa$ of their Ornstein--Uhlenbeck frailty process (the parameter governing the persistence of the common latent factor) was difficult to identify precisely ($\hat{\kappa} = 0.018$, standard error $= 0.004$), with substantial posterior uncertainty documented in their Bayesian analysis. This parameter is directly analogous to the persistence of the jump intensity in V3's utilization dynamics. The identification challenge in a 25 year corporate dataset is compounded many times over in DeFi where protocol histories rarely exceed five years, structural breaks are frequent, and failed vaults are systematically underrepresented in available datasets. Validation relies on pooling across vaults and chains within the same stress episodes and on mechanism level diagnostics, e.g., whether estimated $\lambda_a$ increases in stress regimes. Survivorship bias is material if failed vaults are missing from datasets. Curated panels that include protocol failures are a prerequisite for unbiased validation. A formal power analysis is deferred pending construction of a curated multi protocol panel that includes failures and near misses.
\end{remark}

\begin{remark}[Validation limitations for strategy specific risks]
\label{rem:strat_validation}
The present validation framework targets V1 (coverage stress test accuracy) and
V2 (impact function predictiveness) against observed liquidation outcomes.
Strategy specific risks that fall outside the protocol bound scope of V1-V5 (including impermanent loss for AMM-deployed strategies, auto-deleveraging events
in derivative hedged positions, and NAV latency gaps in RWA strategies) cannot
be directly backtested against the present metrics. For cross protocol vaults
incorporating these strategies, supplementary validation would require
strategy level P\&L attribution against the decomposition in Procedure~P6 and
separate empirical analysis of basis divergence and execution failure in the
relevant derivative venues.
\end{remark}

\section{Discussion}
\label{sec:discussion}

\subsection{Contribution to existing literature}
The paper occupies a specific gap in the literature where prior work provides either qualitative decompositions of DeFi risk categories or empirical analyses of individual mechanisms, but neither a formal instrument level loss function nor a mechanism grounded set of metrics derivable from it. The core contribution is a mechanism level decomposition precise enough to generate tractable metrics and estimation procedures rather than narrative diagnoses.

Relative to structured credit and CLO methodology, the principal addition is not the idea of coverage or triggers, but the formal decomposition of where these concepts fail in an onchain execution environment. Table~\ref{tab:mapping_summary} provides a systematic classification of six breakdown points that do not arise in institutional structured finance because valuation, settlement and default management operate under different frictions. In particular, Proposition~\ref{prop:network_congestion} formalizes congestion driven wrong way risk as an execution feasibility constraint that is endogenous to stress, a channel that cannot be captured by static haircuts and that has no direct analogue in standard OC/IC testing logic. The result is a structured finance like framework that remains faithful to the instrument logic of depositor protection while being native to DeFi execution.

The paper also provides the first formal integration of infrastructure and code integrity as a structurally distinct third layer in vault credit risk measurement. The three level decomposition of Definition~\ref{def:three_level} and equation~\eqref{eq:three_level} establishes that vault depositor losses arise from three structurally independent sources whose appropriate measurement methodologies are non overlapping with mechanical execution channels (Level 1), governance quality (Level 2) and smart contract code integrity (Level 3). Proposition~\ref{prop:L3_superadditive} establishes that Level 3 failure probability is superadditive in dependency depth, accumulating geometrically as protocol layers are added. Proposition~\ref{prop:L3_dominance} derives the dominance condition under which Level 3 risk exceeds Level 1 risk in expected value terms, providing a criterion for when VCS monitoring is the binding depositor protection measure and when it must be supplemented by independent code risk assessment. Corollary~\ref{cor:cross_protocol_L3} gives the protocol bound versus cross protocol vault distinction formal credit risk content. Additional dependency layers are structural Level 3 risk increments regardless of collateral or oracle configuration quality. Prior DeFi risk frameworks typically catalog smart contract risk as a qualitative category without deriving its formal relationship to the instrument level loss function. This paper, to the authors' knowledge, is the first to derive the dominance condition that determines which of the three levels constitutes the binding credit concern for any specific vault.

Relative to DeFi empirical work on liquidation mechanics and protocol fragility, the paper provides an instrument level loss mapping. Empirical research has shown that aggregate lending protocol states can signal fragility, with $R_I$ detecting stress episodes around the UST collapse and FTX bankruptcy, and that liquidation effectiveness degrades under elevated risk conditions, with lagged liquidation responses indicating execution does not keep pace with deterioration \citep{BertomeuMartinSall2024}. The paper explains why such degradation is structurally expected since execution is endogenous to liquidation mass (Proposition~\ref{prop:recovery_endogeneity}) and can fail under congestion (Proposition~\ref{prop:network_congestion}). It further provides metrics (V2 and V5) that operationalize these channels as estimable objects rather than descriptive regularities.

Relative to oracle manipulation and security analyses, the paper’s contribution is to integrate oracle imperfections into depositor loss measurement. Oracle research typically focuses on attack feasibility and protocol design defenses \citep{AngerisChitra2020, AngerisKaoChiangNoyesChitra2020, QinZhouLivshits2021}. Here, Proposition~\ref{prop:oracle_manipulation} and V4 treat oracle error as a direct driver of depositor shortfall through false solvency and delayed liquidation, and PI-1 formalizes the resulting identification limits. This integration is necessary because in DeFi, pricing inputs are not merely accounting marks, they directly trigger automated enforcement.

Relative to run dynamics and liquidity crisis analyses, the paper formalizes how utilization and common information can sharpen withdrawal clustering (Proposition~\ref{prop:liquidity_run}) and defines V3 as a boundary hitting probability that can be estimated from onchain time series. The emphasis is not that runs exist, but that run propensity is measurable as a forward looking probability conditional on the current utilization state and stress regimes. This provides a direct bridge from liquidity crisis narratives to a credit risk statistic.

Finally, emerging work on the oracle infrastructure underlying tokenized RWAs identifies the structural barriers to reliable offchain to onchain data transmission and the trust efficiency trade off that renders purely decentralized oracles infeasible for assets whose value is determined by offchain processes \citep{DuleyGambacortaGarrattKooWilkens2023}. The Financial Stability Board identifies oracle third party reliance as a core financial stability vulnerability of tokenized assets at scale, alongside liquidity mismatch, leverage and operational fragility \citep{FSB2024Tokenisation}. The legal and accountability dimensions of DeFi's departure from traditional enforceability mechanisms (particularly the absence of legal recourse when smart contract enforcement produces economically incorrect outcomes) are analyzed in \citet{ZetzscherArnerBuckley2020}. None of these contributions formalizes the specific credit risk channel by which NAV latency windows generate systematic, directional depositor shortfall exposure in DeFi lending vaults. Remarks~\ref{rem:rwa_oracle} and~\ref{rem:mkt_hours} in the paper provide the first such formal treatment, deriving the expected shortfall bound in equation~\eqref{eq:nav_shortfall} and identifying the conditions under which the partial identification result of Proposition~\ref{prop:PI1_oracle_bound} fails for RWA collateral.

\subsection{Limitations}
The first binding limitation is paradependence in the most stress relevant channels. V2 relies on a linear impact approximation and V4a relies on a GBM style latency mapping and V3 relies on a reduced form diffusion with jumps representation of utilization dynamics. These forms are most likely to fail exactly in crisis regimes as concentrated liquidity AMMs exhibit highly nonlinear marginal prices, liquidation routing can create discontinuities and crypto returns exhibit jumps and regime dependent volatility. The consequence is that V2 and V4a can be poorly calibrated when they matter most. The mitigation is explicit sensitivity analysis over alternative function forms and, where data allow, semi-paraor nonparaestimation of $f(Q,D)$ and latency to error mappings.

Leverage loop concentration provides a second channel through which the linear
impact model can systematically understate tail losses. When
$\mathrm{CLR}_t^{v}$ is high (Remark~\ref{rem:leverage_loop}), the collateral to debt price ratio is volatile and liquidation
clustering is amplified beyond what single asset depth ratios predict and the
effective impact coefficient becomes an increasing function of correlated
leverage concentration rather than a fixed constant. Conservative V2 estimation
should therefore use a stress adjusted coefficient
$\lambda_a\cdot(1+\gamma\cdot \mathrm{CLR}_t^{v})$, where $\gamma>0$ is a loop
amplification factor.

The second binding limitation is non stationarity. Rolling window estimators for $\lambda_a$, $\rho_G$ and utilization dynamics assume stability over the estimation window. In DeFi, structural breaks are frequent: liquidation mechanisms are upgraded, oracle architectures change, collateral sets evolve, and congestion regimes shift. The consequence is that pre-break parameter estimates can become misleading post break, and simple updating is not appropriate. The mitigation is structural versioning and event-study re-estimation following upgrades, with conservative degradation of confidence in post break estimates until sufficient post break data are observed.

A specific form of non stationarity that deserves separate treatment is oracle
immutability. For vaults on protocols where oracle addresses are set at market
deployment (Remark~\ref{rem:oracle_immut}), the effective staleness of the oracle design can only increase as reference markets evolve, while no governance action can correct it within the existing market. This is structural non stationarity in V4 that cannot be addressed by rolling re-estimation and requires instead conservative initial provisioning of oracle quality buffers. Additionally, the magnitude of required buffer depends on protocol lifetime and oracle market liquidity trajectory.

The third binding limitation is sparse tail event history and limited backtesting power. Severe depositor shortfalls are rare and DeFi protocol histories are short. The consequence is that stress scenario quantiles, depth shock baseline, and calibration thresholds are necessarily conservative but not statistically validated, and type I/type II error rates are unknown. It also implies that validation must rely on cross-sectional pooling across vaults and chains and on mechanism level diagnostics rather than purely on outcome frequency calibration. Simulation based validation can complement empirical backtesting, but it depends on the correctness of microstructure primitives. This actuarial challenge is not unique to the theoretical framing of this paper but is an industry recognized structural barrier. \citet{ZhouZhang2026} identify an analogous actuarial conundrum as one of eight primary impediments to DeFi insurance market development, concluding from structured focus group discussions with practitioners that DeFi insured perils lack the stationary, independent distributions that conventional underwriting requires and that Monte Carlo simulation is the only currently feasible pricing approach, with continuous time models applicable only after sufficient price history has accumulated. Their quantitative AHP evidence assigns an overall priority weight of 0.087 to the actuarial conundrum, below the liquidity (0.166) and regulation (0.149) conundrums, and consistent with the vault paper's finding that run risk (V3) and governance risk (Level 2) are the most actionable near term concerns. This independent practitioner consensus confirms that the vault paper's reliance on structural estimation and conservative bounding which is correct response to a structural property of DeFi as a nascent financial system.

\begin{remark}[Practical deployment consequences for Level 1]
Given these limitations, V1-V5 and VCS should be deployed as conservative diagnostics. Where point identification fails (PI-1 and PI-2), risk limits should be set using worst case values within identified bounds. Where structural breaks occur, model versions must be treated as distinct and comparisons of VCS across versions should be conditioned on changes in $\mathcal{E}^{v}$. Finally, the scarcity of tail data implies that conservative stress scenarios and sensitivity ranges are not optional, they are integral to making the framework robust.
\end{remark}

\subsection{Policy implications}
The framework implies a minimum transparency standard for vault risk measurement. Table~\ref{tab:data_architecture_L1} and Corollary~\ref{cor:VCS_checklist} show that without liquidation logs, oracle update timestamps, depth snapshots and gas data, V1-V5 cannot be computed in an auditable manner. Partial identification results PI-1 and PI-2 (Propositions~\ref{prop:PI1_oracle_bound} and \ref{prop:PI2_liquidation_bound}) imply that even with full onchain transparency, certain objects remain only partially identified. Nonetheless, non disclosure worsens the identified set by preventing even conservative bounding. A direct implication is that non disclosure should be treated conservatively, missing data should map to worst case score components within the feasible identified bounds.

An additional implication follows from the architecture of modern curated vaults. When vault governance separates parameter authority into at minimum a curator role (adjusting caps, LTV, and allocations) and a sentinel role (emergency only risk tightening with real time authority), the relevant disclosure obligation is at minimum threefold: (i) the static parameter configuration $\Theta_t^{v}$ and its history (for V1-V5 computation), (ii) the governance architecture documenting which actions are subject to timelocks and which are sentinel eligible (for assessing the effective critical response window $\Delta^{*v}$ and the applicability of frozen-parameter stress tests) and (iii) the oracle selection and configuration for each collateral
market, including whether oracles are immutable (for V4 estimation). For RWA collateral, a fourth disclosure obligation is required, real time NAV and impairment status from issuers, given the systematic bias identified in Remarks~\ref{rem:rwa_oracle} and \ref{rem:mkt_hours}. Non disclosure of any of these items should be treated as a worst case input to the relevant. Monitoring
tools that have emerged for DeFi vaults, including curator action trackers for major lending protocols and protocol decentralization dashboards \citep{WernerPerezGudgeon2022}, represent early stage infrastructure for making these disclosures auditable and machine readable. The paper's VCS framework provides the formal methodology that such tools should implement to move from descriptive monitoring to credit relevant risk measurement.

Governance design standards, incentive alignment requirements and systemic risk implications depend on Level 2 governance measurement and multivault extensions, which are outside the scope of this paper and developed in a companion paper.

\begin{remark}[Disclosure as an input to credit measurement]
Disclosure is not a governance preference in this framework but it is an input required to compute the objects that define depositor credit risk. Because V1-V5 depend on execution prices, depth, oracle staleness and congestion, a vault that does not provide or enable reconstruction of these data makes credit risk measurement infeasible. From a depositor protection perspective, infeasibility must be treated as adverse information rather than ignored. The relevant policy lever is, therefore, minimum machine readable disclosure.
\end{remark}

\section{Conclusion}
\label{sec:conclusion}
This paper develops a formal, auditable Level 1 credit risk measurement framework for DeFi lending vaults by treating the vault as a credit instrument and anchoring risk measurement in a depositor loss function. The central reason such a framework was missing is simple. While DeFi vaults resemble traditional secured credit instruments, the mapping from valuation to realized outcomes is altered by oracle mediated pricing, onchain execution, congestion costs, governance constraints and the absence of external liquidity support. Without a mechanism consistent framework, applying traditional credit tools can produce false confidence precisely when depositor losses are most likely.

The theoretical foundation is a set of six failure modes that identify the precise breakdown points of naive TradFi analogies. Oracle execution divergence shows that oracle based solvency can coexist with execution insolvency when liquidation prices are endogenously worse. Endogenous recovery shows that recoveries deteriorate with liquidation mass, implying superlinear loss amplification in stress. Full information runs show that common information and mechanical withdrawal constraints can sharpen run dynamics without a liquidity backstop. Timelock constrained governance shows that parameters can be frozen over economically relevant horizons, motivating stress testing under fixed inputs. Oracle manipulation and latency show that incorrect price inputs can induce false solvency and delayed liquidation, directly increasing expected shortfall. Congestion wrong way risk shows that liquidation may become economically non viable precisely when needed because gas and MEV costs spike with stress.

Building on these failure modes, the paper derives five tractable metrics (V1-V5) and aggregates them into a vault credit score. V1 corrects oracle based coverage for stress dependent execution wedges. V2 conditions expected shortfall on liquidation mass and depth, making recovery endogeneity explicit. V3 quantifies liquidity fragility as the probability of hitting the utilization boundary within a horizon. V4 measures oracle integrity via latency and manipulation exposure, acknowledging partial identification of the latent true price. V5 measures the stress conditional viability of liquidation execution under gas and MEV constraints. Together, these metrics enable a depositor protective measurement discipline that is transparent about its assumptions and can be implemented with onchain data.

%Several important extensions are identified as distinct research agendas that fall outside the formal scope of Level 1 measurement. Vaults deploying capital into AMM liquidity pools face a sixth loss channel --- loss versus rebalancing --- that is continuous rather than event driven and accumulates from informed order flow rather than from liquidation mechanics. Vaults holding correlated leverage loop positions face amplified liquidation clustering that is endogenous to the collateral's leverage multiplierand correlated leverage ratio, both of which should be tracked as first order diagnostics for V2 (Remark~\ref{rem:leverage_loop}). And vaults accepting RWA collateral face systematic oracle bias during NAV latency windows that violates the partial identification bound derived for DeFi native collateral (Remark~\ref{rem:rwa_oracle}). These extensions do not invalidate V1-V5 for protocol bound lending vaults; they specify the conditions under which the framework requires augmentation before application to more complex vault architectures.

The framework’s limitations are: paraapproximations for impact and oracle latency may fail in crisis regimes, DeFi non stationarity implies that estimators must be versioned and re-estimated after upgrades and tail event scarcity limits statistical power for traditional backtesting. These limitations motivate conservative stress scenario design, sensitivity analysis and explicit reporting of identified bounds. The empirical implementation section, therefore, emphasizes data architecture, identification strategies and partial identification results as first class components of credit risk measurement.

Finally, this paper isolates vault risk at Level 1 which is a mechanical loss channels under fixed parameters and functioning smart contracts. A complete risk assessment requires all three levels of the decomposition. Level 2 governance risk, i.e., the quality of the process that sets and updates parameters, is developed in a companion paper. Level 3 infrastructure and code integrity risk, i.e., whether the vault's dependency graph executes as specified, is formalized in Section~\ref{subsec:level3} via Propositions~\ref{prop:L3_superadditive}--\ref{prop:L3_dominance} and Corollary~\ref{cor:cross_protocol_L3}, establishing the superadditivity of code failure probability in dependency depth and the dominance condition that determines when Level 3 is the binding depositor concern. Advancing DeFi credit measurement ultimately requires data public goods including standardized liquidation routing datasets, depth curve archives, oracle update registries and stress episode panels across chains. The stakes are not limited to DeFi markets. As tokenized collateral and onchain intermediation expand, the ability to measure credit risk in rule bound, execution driven systems will become a prerequisite for robust financial intermediation.

% --- PAPER 1, VERSION 1.1 COMPLETE.
% TITLE: The Vault as a Credit Instrument: Structural Risk
% Measurement for DeFi Lending
% INTEGRATED: academic reference memo additions 1-11
% READY FOR REVIEW ---

% --- PAPER 1, VERSION 2.0 REVISED.
% TITLE: The Vault as a Credit Instrument: Structural Risk
% Measurement for DeFi Lending
% REVISION: Pre-submission pass, all 8 passes, skill academic-finance-revision v2.0.0
% READY FOR SUBMISSION PENDING AUTHOR REVIEW OF FLAGGED ITEMS ---

\appendix

\section{Notation Summary}
\label{app:notation}

\begin{table}[htbp]
\centering
\caption{Principal notation used throughout the paper.}
\label{tab:notation}
\begin{tabular}{p{0.22\textwidth} p{0.72\textwidth}}
\toprule
\textbf{Symbol} & \textbf{Definition} \\
\midrule
$v$ & Vault index (credit instrument). \\
$u$ & Borrower account index within vault $v$. \\
$t, T$ & Time index; $T$ is the evaluation horizon. \\
$\mathcal{C}^{v}$, $\mathcal{D}^{v}$ & Sets of collateral and debt (borrowed) assets for vault $v$. \\
$P_{a,t}$ & Oracle-reported unit-of-account price for asset $a$ at time $t$. \\
$\widetilde{P}_{a,t}$ & Realized liquidation execution price for asset $a$ net of slippage and fees. \\
$P_{a,t}^{\mathrm{true}}$ & Latent efficient price (unobservable). \\
$\eta_{a,t}$ & Oracle error: $\eta_{a,t} \equiv P_{a,t} - P_{a,t}^{\mathrm{true}}$. \\
$C_{u,a,t}$ & Quantity of collateral asset $a$ held by account $u$ at time $t$. \\
$B_{u,d,t}$ & Quantity of debt asset $d$ owed by account $u$ at time $t$. \\
$w_a^{v}$ & Collateral factor (LTV ceiling) for asset $a$ in vault $v$. \\
$\Theta_t^{v}$ & Vault parameter vector: liquidation thresholds, close factors, incentives, caps. \\
$\Lambda_t^{v}$ & Vault liquidity state: utilization, available cash, withdrawal queue. \\
$X_t^{v}$ & Level 1 credit state tuple (Definition~\ref{def:level1_state}). \\
$\mathcal{E}^{v}$ & Vault rule set (smart contracts). \\
$\mathcal{I}_{t,i}^{v}$ & Binary integrity indicator for node $i$ at time $t$. \\
$\mathcal{F}_t$ & Public information filtration at time $t$. \\
$\mathcal{M}$ & Exogenous market and network environment. \\
$L_t^{v}$ & Total depositor liabilities (book value) at time $t$. \\
$A_t^{v}$ & Realized liquidation value assets available to satisfy claims. \\
$\Delta_t^{v}$ & Vault shortfall: $\Delta_t^{v} \equiv (L_t^{v} - A_t^{v})^+$. \\
$\ell_t^{v}$ & Loss rate: $\ell_t^{v} \equiv \Delta_t^{v}/L_t^{v} \in [0,1]$. \\
$\mathrm{HF}_{u,t}^{v}$ & Health factor for account $u$; liquidation triggered when $\mathrm{HF} < 1$. \\
$\mathrm{ACR}_t^{v}$ & Vault asset coverage ratio: $\mathrm{ACR}_t^{v} \equiv A_t^{v}/L_t^{v}$. \\
$\widetilde{\mathrm{ACR}}_t^{v}$ & Stress-effective coverage ratio accounting for execution deviation. \\
$\bar{\varepsilon}_t^{v}$ & Collateral weighted oracle execution deviation. \\
$U_t^{v}$ & Utilization ratio: $U_t^{v} \equiv B_t^{v}/D_t^{v} \in [0,1]$. \\
$B_t^{v}$, $D_t^{v}$ & Total borrows and total deposits for vault $v$. \\
$\mathrm{RR}_{u,t}^{v}$ & Realized recovery rate for account $u$ (equation~\eqref{eq:recovery_rate}). \\
$Q_{a,t}^{\mathrm{liq}}$ & Aggregate liquidation sell volume in collateral asset $a$. \\
$D_{a,t}^{\mathrm{pool}}$ & Effective onchain pool depth for asset $a$. \\
$f(Q,D)$ & Price impact function: $\partial f/\partial Q > 0$, $\partial f/\partial D < 0$. \\
$\lambda_a$ & Linear price impact coefficient for asset $a$. \\
$\mathcal{L}_{u,t}$ & Liquidation trigger event: $\{\mathrm{HF}_{u,t}^{v} < 1\}$. \\
$\mathcal{V}_{u,t}$ & Liquidation viability event (equation~\eqref{eq:viability}). \\
$g_t$ & Prevailing gas price at time $t$. \\
$\mathrm{MEV}_{u,t}^{\mathrm{cost}}$ & MEV and priority competition cost for account $u$. \\
$\delta^{\mathrm{lock}}$ & Governance timelock duration. \\
$\Delta^{*v}$ & Critical response window (Proposition~\ref{prop:curator_lag}). \\
$G^{v}=(V^{v},E^{v})$ & Smart contract dependency graph for vault $v$. \\
$k \equiv |V^{v}|$ & Dependency depth (number of critical path nodes). \\
$q_{\mathrm{code}}^{v}(h)$ & Code failure probability over horizon $h$ (Definition~\ref{def:level3_state}). \\
$\mathcal{B}_h^{v,L3}$ & Level 3 breach event over horizon $h$. \\
$\mathrm{V1}$--$\mathrm{V5}$ & Five Level 1 credit risk metrics (Sections~\ref{subsec:V1}--\ref{subsec:V5}). \\
$\mathrm{VCS}$ & Vault Credit Score: aggregation of V1-V5 (Section~\ref{subsec:VCS}). \\
$\hat{m}_j(v,t)$ & Normalized score for $j$, $\in [0,1]$ (higher = lower risk). \\
$\mathcal{S}$, $\mathcal{Z}(s)$ & Stress scenario set and associated state constraints. \\
$\mathrm{CLR}_t^{v}$ & Correlated leverage ratio (equation~\eqref{eq:CLR}). \\
$\mathrm{HD}(a)$ & Rehypothecation depth for collateral asset $a$. \\
$\mathrm{CM}(a)$ & Cascade multiplier for asset $a$. \\
\bottomrule
\end{tabular}
\end{table}

%\bibliographystyle{apalike}
%\bibliography{literature}

\end{document}